\newtheorem{theorem}{Theorem}
\newtheorem{lemma}[theorem]{Lemma}
\newtheorem{proposition}[theorem]{Proposition}
\newtheorem{remark}{Remark}
\newcommand{\Si}{s_{\mathrm{i}}}
\newcommand{\Sf}{s_{\mathrm{f}}}
\newcommand{\Sb}{\boldsymbol{s}}
\newcommand{\bs}{\Sb}
\newcommand{\zb}{\boldsymbol{z}}
\newcommand{\ii}{\mathrm{i}}
\newcommand{\Id}{\mathrm{Id}}
\newcommand{\ff}{\mathrm{f}}
\newcommand{\e}{\mathrm{e}}
\newcommand{\sgn}{\operatorname{sgn}}
\newcommand{\FGA}{\mathrm{FGA}}
\newcommand{\G}[2]{G_{k_1k_2}^{#1,#2}}
\begin{document}

\title[Inchworm method with Frozen Gaussian Approximation]{Solving Caldeira-Leggett Model by Inchworm Method with Frozen Gaussian Approximation}

\author{Geshuo Wang}
\email{geshuo@uw.edu}
\affiliation{Department of Applied Mathematics, University of Washington, Seattle, WA 98195, USA}
\orcid{0000-0001-7890-2758}
\thanks{The research was conducted while G. Wang was affiliated with the National University of Singapore.}

\author{Siyao Yang}
\email{siyaoyang@uchicago.edu}
\affiliation{Committee on Computational and Applied Mathematics, Department of Statistics, University of Chicago, Chicago, IL 60637 USA}
\orcid{0000-0002-6651-6224}
\thanks{Corresponding author.}

\author{Zhenning Cai}
\email{matcz@nus.edu.sg}
\affiliation{Department of Mathematics, National University of Singapore, Singapore 119076}
\orcid{0000-0002-7086-7983}
\thanks{The work of Zhenning Cai was supported by the Academic Research Fund of the Ministry of Education of Singapore under Grant
No. A-8000965-00-00.}

\maketitle

\begin{abstract}
We propose an algorithm that combines the inchworm method and the frozen Gaussian approximation to simulate the Caldeira-Leggett model in which a quantum particle is coupled with thermal harmonic baths. In particular, we are interested in the real-time dynamics of the reduced density operator. In our algorithm, we use frozen Gaussian approximation to approximate the wave function as a wave packet in integral form. The desired reduced density operator is then written as a Dyson series, which is the series expression of path integrals in quantum mechanics of interacting systems. To compute the Dyson series, we further approximate each term in the series using Gaussian wave packets, and then employ the idea of the inchworm method to accelerate the convergence of the series. The inchworm method formulates the series as an integro-differential equation of ``full propagators'', and rewrites the infinite series on the right-hand side using these full propagators, so that the number of terms in the sum can be significantly reduced, and faster convergence can be achieved. The performance of our algorithm is verified numerically by various experiments.
\end{abstract}

\section{Introduction}
The time evolution of a quantum system follows the time-dependent Schrödinger equation.
The numerical simulation of Schrödinger equation is an important topic in scientific computing.
Many effective and accurate methods \cite{leforestier1991comparison,iitaka1994solving,bao2002time,bao2003numerical,vandijk2007accurate} have been developed to solve the time-dependent Schrödinger equation numerically.
Among these methods, the Gaussian beam (GB) method \cite{heller1975time,leung2007eulerian,leung2009eulerian} decomposes the wave function to Gaussian beams and evolves the parameters of Gaussian beams instead of the wave function itself.
The frozen Gaussian approximation (FGA) \cite{heller1981frozen,lu2011frozen,lu2012convergence,lu2012frozen,lu2018frozen,delgadillo2018frozen}{, which is highly related to the Herman-Kluk propagator \cite{herman1984semiclassical,swart2009mathematical}}, modifies the GB method to maintain the width of the beams.
By preventing wave spreading, the FGA achieve higher accuracy than the GB method.

Although numerical methods have been successful for closed quantum systems, real quantum systems are more or less affected by the environment so that these numerical methods cannot apply directly.
Therefore, the theory of open quantum systems, in recent years, has attracted increasingly more attention.
Due to its universality, the theory of open quantum systems has been applied in many fields including but not limited to quantum computing \cite{knill1997theory}, quantum communication \cite{nielsen2002quantum}, and quantum optical systems \cite{breuer2002theory}.
In the context of an open quantum system, a system of interest is immersed in some uninteresting environment or bath.
The coupling between the system and the environment leads to quantum dissipation and quantum decoherence \cite{grigorescu1998decoherence,schlosshauer2019quantum}.
The dynamics of the system part therefore becomes a non-Markovian process, which is the main difficulty for the simulation of open quantum systems.
Time non-locality typically entails significant memory or computational expenses, in particular for long time simulation.

The Nakajima-Zwanzig equation \cite{nakajima1958quantum,zwanzig1960ensemble} depicts the temporal nonlocality by a memory kernel.
In the case of weak system-bath coupling, the process can be approximated by a Markovian process with Lindblad equation \cite{lindblad1976generators}.
In more general cases, however, it would be difficult to bypass the non-Markovian nature during the numerical simulation.
Some methods therefore introduce a memory length to prevent unlimited growth of storage or computational cost.
Based on the path integral technique \cite{feynman1948space}, the iterative quasi-adiabatic propagator path integral (i-QuAPI) has been proved to be an efficient method \cite{makri1992improved,makri1995numerical}.
In recent years, different techniques are introduced to improve the method \cite{makri2014blip,wang2022differential,makri2024kink}.
Due to the non-Markovian intrinsicality,
 path integral methods typically face high memory costs to store the contributions of various paths within the memory length.
Recently, the small matrix path integral (SMatPI) \cite{makri2020smallMatrixPath,makri2021smallMatrixPathIntegralExtended,wang2024tree} has successfully overcome this problem by summing up the contribution of all paths into small matrices.
Although the starting points are different, SMatPI turns out to be in a similar form of the Nakajima-Zwanzig equation.
The transfer tensor method (TTM) \cite{cerrillo2014nonmarkovian} derives a discretized form of Nakajima-Zwanzig equation and also introduces a memory length to keep memory cost low.

If the interacting system is regarded as a perturbation of the uncorrelated system, one can derive the Dyson series expansion \cite{dyson1949radiation,xu2018convergence} with a series of high-dimensional integrals.
Monte Carlo sampling is generally used for the integrals in the series.
Due to the high {oscillation of} the integrand, direct application of the Monte Carlo method suffers from numerical sign problems \cite{loh1990sign,cai2023numerical}.
Recently, the inchworm algorithm was developed to tame the sign problem \cite{chen2017inchwormITheory,chen2017inchwormIIBenchmarks,Antipov2017currents,cai2020inchworm,erpenbeck2023quantum} and has been proved to be an efficient method.
Inchworm algorithm utilizes {bold lines} to compute the partial sum in the series so that the series converges faster \cite{prokof2007bold,prokof2008bold}.
Recently, some other methods utilize the intrinsic invariant properties \cite{yang2021inclusion,cai2022fast,cai2023bold} to accelerate the computation.

All these methods mainly focus on the simulation of open quantum systems where {the dimension of the system space is finite.}
A typical example is the spin-boson model \cite{chakravarty1984dynamics,thorwart2004dynamics,xu2023performance}, {which describes a single spin} living in a two-dimensional Hilbert space.
Spin-boson model can be regarded as an implementation of Caldeira-Leggett \cite{caldeira1981influence,caldeira1983path,caldeira1985influence} model where the particle locates in a double well potential.
For open systems with more possible states, current research mainly focuses on those with special structure like open spin chains \cite{makri2018modular,wang2023real,sun2024simulation}.
In such systems, {matrix product state representations are used to make the simulation feasible, but are difficult to extend beyond 1D} \cite{yan2021efficient,erpenbeck2023tensor,wang2023real,sun2024simulation}.

In this paper, we propose a method that combines the FGA and the inchworm algorithm for the simulation of quantum particles in the Caldeira-Leggett model. Different from models with finite possible states, the density operator for Caldeira-Leggett model is an operator on the infinite dimensional Hilbert space.
The density operator is given by a function with two variables but not in a simple matrix form.
In the numerical simulations, we focus on the diagonal of the density operator, i.e., the probability density of the position operator.

In \cref{sec_preliminaries}, we simply review the frozen Gaussian approximation and the Caldeira-Leggett model. 
In \cref{sec_FGA_inchworm}, a brand new method for the simulation of Caldeira-Leggett model is proposed.
Specifically, \cref{sec_dyson} gives the Dyson expansion for the Caldeira-Leggett model.
\Cref{sec_inchworm} integrates the FGA and the inchworm algorithm in the Caldeira-Leggett model, providing an algorithm for the simulation.
\Cref{sec_summary} gives a summary of our algorithm.
In \cref{sec_numerical_examples}, some results for numerical tests are shown to validate our method.
A quick summary and some discussions are given in \cref{sec_conclusion}.

\section{Preliminaries}
\label{sec_preliminaries}
\subsection{Frozen Gaussian approximation}
\label{sec_FGA}
Before introducing our work on open quantum systems,
 we first present the frozen Gaussian approximation for the one-dimensional Schrödinger equation in the context of closed quantum systems.

Consider a one-dimensional quantum particle in a potential $V(x)$, 
the time-dependent Schr\"odinger equation can be written as
\begin{equation}
\label{eq_Schrödinger}
    \ii \epsilon \pdv{\psi(t,x)}{t}
    = H \psi(t,x)
\end{equation}
with the Hamiltonian $H$ being
\begin{equation}
    H =  -\frac{\epsilon^2}{2} \pdv[2]{x}
    + V(x),
\end{equation}
and $\ii$ being the imaginary unit.
The constant $\epsilon$ is a parameter depending on the nondimensionalization,
 which is usually small when the scale of the system is not too small. The FGA \cite{lu2012frozen} utilizes the following ansatz
\begin{equation}
\label{eq_FGA_ansatz}
    \psi_{\FGA}(t,x)
    = \frac{1}{({2\pi\epsilon})^{3/2}}
    \int_{\mathbb{R}^2}
    \int_{-\infty}^\infty
    a(t,p,q) \e^{\ii \phi(t,x,y,p,q) / \epsilon}
    \psi_0(y)
    \dd y \dd p \dd q
\end{equation}
where
\begin{equation}\label{phi}
    \phi(t,x,y,p,q) = S(t,p,q) + \frac{\ii}{2}(x-Q)^2 + P(x-Q) + \frac{\ii}{2}(y-q)^2 - p(y-q)
\end{equation}
and $\psi_0$ is the initial wave function of the particle.
The variables $P, Q, S, a$ satisfy the following dynamics:
\begin{equation}
\label{eq_dynamics}
\begin{split}
    &\pdv{P(t,p,q)}{t} = - V'(Q(t,p,q)), \\
    &\pdv{Q(t,p,q)}{t} = P(t,p,q), \\
    &\pdv{S(t,p,q)}{t} = \frac{1}{2} \left(P(t,p,q)\right)^2 - V(Q(t,p,q)), \\
    &\pdv{a(t,p,q)}{t} = \frac{1}{2}a(t,p,q) \frac{\partial_q P - \partial_p Q V''(Q) - \ii\left(\partial_p P +\partial_q Q V''(Q)\right)}{\partial_q Q + \partial_p P + \ii (\partial_q P - \partial_p Q)}
\end{split}
\end{equation}
with initial conditions 
\begin{equation}
    P(0,p,q) = p, 
    \quad
    Q(0,p,q) = q,
    \quad
    S(0,p,q) = 0,
    \quad
    a(0,p,q) = \sqrt{2}.
\end{equation}
To solve these differential equation system, the following four equations should be also included
\begin{equation}
\begin{split}
    \pdv{t} \left(\partial_p P\right) =  -\partial_p QV''(Q) , \quad 
    \pdv{t} \left(\partial_p Q\right) =  \partial_p P, \\
    \pdv{t} \left(\partial_q P\right) =  -\partial_q QV''(Q), \quad
    \pdv{t} \left(\partial_q Q\right) = \partial_q P.
\end{split}
\end{equation}
This ansatz is derived based on some Fourier integral operators (refer to \cite{swart2009mathematical} for more details).
The FGA method gives an approximation solution to the Schr\"odinger equation, which can be concluded by the following proposition.
\begin{proposition}
\label{prop_FGA}
    The FGA ansatz \cref{eq_FGA_ansatz} can be regarded as the solution to the Schr\"odinger equation \cref{eq_Schrödinger} with first order accuracy.
    In particular, we have
    \begin{equation}
    \label{eq_FGA_approx_sol_of_real}
        \Vert \psi_{\FGA}(t,x) - \e^{-\ii H t /\epsilon} \psi_0(x) \Vert_{L^2} \leqslant C_1\epsilon
    \end{equation}
    for some constant $C_1$ if the parameters in $\psi_{\FGA}(t,x)$ follow the dynamics \cref{eq_dynamics}.
\end{proposition}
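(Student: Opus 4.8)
The plan is to follow the standard route for frozen-Gaussian error estimates: show that $\psi_{\FGA}$ solves the Schr\"odinger equation \cref{eq_Schrödinger} up to a source term of size $O(\epsilon^2)$ in $L^2$, show that $\psi_{\FGA}(0,\cdot)$ agrees with $\psi_0$ up to $O(\epsilon)$, and then glue these together by Duhamel's principle together with the unitarity of $\e^{-\ii Ht/\epsilon}$. Throughout one assumes $V$ is smooth with bounded derivatives of order $\geqslant 2$ — so that the flow of \eqref{eq_dynamics} and its first $(p,q)$-derivatives are globally defined with at most exponential growth in $t$, and the quantity $M:=\partial_q Q+\partial_p P+\ii(\partial_q P-\partial_p Q)$ (which equals $2$ at $t=0$) stays bounded away from $0$ along the flow — and that $\psi_0$ belongs to a Schwartz-type class, so that all the phase-space integrals below converge absolutely and may be differentiated and integrated by parts freely.

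First I would evaluate the ansatz at $t=0$, where $P=p$, $Q=q$, $S=0$, $a=\sqrt2$ and $\phi$ is quadratic, so that the $y$- and $p$-integrals in \cref{eq_FGA_ansatz} are Gaussian; carrying them out gives $\psi_{\FGA}(0,x)=\psi_0(x)+O(\epsilon)$ in $L^2$, the prefactor $(2\pi\epsilon)^{-3/2}$ and the constant $\sqrt2$ being chosen precisely so that the leading term is the coherent-state resolution of the identity, with the $O(\epsilon)$ remainder coming from Taylor-expanding $\psi_0$ about the Gaussian centers.

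The core step is to control the residual $r(t,x):=\big(\ii\epsilon\,\partial_t-H\big)\psi_{\FGA}(t,x)$. Inserting the ansatz, $\ii\epsilon\,\partial_t$ hitting the phase produces $-\partial_t\phi$; using $\partial_x\phi=\ii(x-Q)+P$, the operator $\tfrac{\epsilon^2}{2}\partial_x^2$ coming from $-H$ produces $-\tfrac12(\ii(x-Q)+P)^2$ at order $1$ and $-\tfrac{\epsilon}{2}$ at order $\epsilon$; and Taylor's theorem gives $V(x)=V(Q)+V'(Q)(x-Q)+\tfrac12 V''(Q)(x-Q)^2+O\big((x-Q)^3\big)$. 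The coefficients of $(x-Q)^0$ and $(x-Q)^1$ in the order-$1$ part then cancel identically because of the ODEs for $S$, $P$ and $Q$ in \eqref{eq_dynamics}, leaving $\tfrac12\big(1-V''(Q)\big)(x-Q)^2 a\,\e^{\ii\phi/\epsilon}$ (plus the cubic Taylor remainder of $V$) at order $1$ and $\big(\ii\epsilon\,\partial_t a-\tfrac{\epsilon}{2}a\big)\e^{\ii\phi/\epsilon}$ at order $\epsilon$. The mechanism that rescues the order-$1$ leftover is the following identity: differentiating the $S$-equation along the flow yields the Hamilton--Jacobi relations $\partial_p S=P\,\partial_p Q$ and $\partial_q S=P\,\partial_q Q-p$, which reduce the $(p,q)$-gradient of the phase \eqref{phi} to $\partial_p\phi=(\partial_p P-\ii\,\partial_p Q)(x-Q)-(y-q)$ and $\partial_q\phi=(\partial_q P-\ii\,\partial_q Q)(x-Q)-\ii(y-q)$, so the $(y-q)$-terms cancel in the combination $\partial_p+\ii\partial_q$ and
\[
 (\partial_p+\ii\partial_q)\,\e^{\ii\phi/\epsilon}=\frac{\ii M}{\epsilon}\,(x-Q)\,\e^{\ii\phi/\epsilon},
 \qquad\text{so}\qquad
 (x-Q)\,\e^{\ii\phi/\epsilon}=-\frac{\ii\epsilon}{M}\,(\partial_p+\ii\partial_q)\,\e^{\ii\phi/\epsilon}.
\]
Applying this identity inside the integral and integrating by parts in $(p,q)$ (boundary terms vanish by the Gaussian factor and the decay of $\psi_0$), the order-$1$ term $\tfrac12(1-V''(Q))(x-Q)^2 a\,\e^{\ii\phi/\epsilon}$ becomes $O(\epsilon^2)$ except for a single order-$\epsilon$ contribution, namely the one in which the derivative lands on the explicit factor $(x-Q)$; a short computation shows that this contribution together with $-\tfrac{\epsilon}{2}a\,\e^{\ii\phi/\epsilon}$ is exactly cancelled by $\ii\epsilon\,\partial_t a\,\e^{\ii\phi/\epsilon}$ when $a$ obeys the amplitude equation in \eqref{eq_dynamics} — which is precisely what dictates its numerator $\partial_q P-\partial_p Q\,V''(Q)-\ii\big(\partial_p P+\partial_q Q\,V''(Q)\big)$ and denominator $M$. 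The cubic Taylor remainder of $V$ is $O\big((x-Q)^3\big)$ and becomes $O(\epsilon^3)$ after the same procedure, and all the remaining pieces — derivatives falling on the slow factors $a$, $1/M$, $V''(Q)$ and on the cubic remainder — are $O(\epsilon^2)$; bounding the resulting $L^2$ integrals with the exponential-in-$t$ flow bounds and the decay of $\psi_0$ gives $\Vert r(t,\cdot)\Vert_{L^2}\leqslant C\epsilon^2$ on bounded time intervals.

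Finally, since $u(t):=\e^{-\ii Ht/\epsilon}\psi_0$ solves \cref{eq_Schrödinger} with $u(0)=\psi_0$ while $\psi_{\FGA}$ solves it with source $r$, Duhamel's principle yields
\[
 \psi_{\FGA}(t)-u(t)=\e^{-\ii Ht/\epsilon}\big(\psi_{\FGA}(0)-\psi_0\big)-\frac{\ii}{\epsilon}\int_0^t\e^{-\ii H(t-s)/\epsilon}\,r(s)\,\dd s,
\]
and, taking $L^2$ norms and using that $\e^{-\ii Ht/\epsilon}$ is unitary,
\[
 \Vert\psi_{\FGA}(t)-u(t)\Vert_{L^2}\leqslant\Vert\psi_{\FGA}(0)-\psi_0\Vert_{L^2}+\frac1\epsilon\int_0^t\Vert r(s)\Vert_{L^2}\,\dd s\leqslant C\epsilon+\frac{Ct}{\epsilon}\,\epsilon^2=O(\epsilon),
\]
which is \cref{eq_FGA_approx_sol_of_real}. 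The hard part is the residual estimate: the $1/\epsilon$ coming from Duhamel forces us to extract \emph{two} powers of $\epsilon$ from $r$, and the integration-by-parts identity above, powered by the Hamilton--Jacobi relations, is exactly what makes that possible — it is also the structural reason the amplitude equation in \eqref{eq_dynamics} must take its stated form.
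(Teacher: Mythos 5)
Your proof is correct and follows essentially the same route as the argument the paper relies on: the paper itself gives no proof of this proposition, deferring to the frozen-Gaussian convergence literature it cites, and your plan — showing the residual $(\ii\epsilon\partial_t - H)\psi_{\FGA}$ is $\mathcal{O}(\epsilon^2)$ in $L^2$ by using the Hamilton--Jacobi relations $\partial_p S = P\partial_p Q$, $\partial_q S = P\partial_q Q - p$ to get $(x-Q)\e^{\ii\phi/\epsilon} = -\tfrac{\ii\epsilon}{M}(\partial_p+\ii\partial_q)\e^{\ii\phi/\epsilon}$ and integrating by parts in $(p,q)$ (which, as you correctly observe, is exactly what dictates the amplitude equation in \eqref{eq_dynamics}), then concluding by Duhamel and unitarity of $\e^{-\ii Ht/\epsilon}$ — is precisely that standard argument. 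The only small inaccuracy is at $t=0$: the $p$-integral produces $2\pi\epsilon\,\delta(x-y)$, so the frozen Gaussian representation of $\psi_0$ is exact and the initial error vanishes identically rather than being $\mathcal{O}(\epsilon)$; this is harmless and only strengthens your final bound.
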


The proof of this lemma can be found in the literature \cite{swart2009mathematical} discussing the frozen Gaussian approximation.
With \cref{eq_FGA_approx_sol_of_real},
 solving the Schrödinger equation can be converted to solving ordinary differential equations \cref{eq_dynamics}.
Although the FGA can also be applied in high dimensional systems, we only consider the one dimensional case in this paper. With the FGA, the probability density $f(t,x)$ to observe the particle at position $x$ at time $t$ when measurement is performed can be computed by
\begin{displaymath}
f(t,x) \approx \vert \psi_{\FGA}(t,x)\vert^2.
\end{displaymath}
The numerical simulation based on the FGA generally introduces a proper discretization for variables $p$ and $q$ in \cref{eq_FGA_ansatz}.
We will explain some more details later in \cref{sec_inchworm}.


\subsection{Caldeira-Leggett model}
\label{sec_CL_model}
Before we introduce the model we study in this paper, we give a general picture for the open quantum system.
We consider the Sch\"odinger equation \cref{eq_Schrödinger} which describes a single particle coupled with a bath, where the wave function for the whole system takes the form $\psi(t,x,\zb)$ where $x$ is the position of the particle we are interested in and $\zb$ are the degrees of freedom in the bath space. The Hamiltonian is formulated as 
\begin{equation}
\label{eq_Hamiltonian_oqs}
    H = H_s \otimes \mathrm{Id}_b + \mathrm{Id}_s \otimes H_b + W
\end{equation}
where $H_s, H_b$ are respectively the Hamiltonians associated with the system and bath, and $\mathrm{Id}_s, \mathrm{Id}_b$ are respectively the identity operators for the system space and bath space. $W$ is the coupling term between system and the bath taking the form 
\begin{equation}\label{coupling}
W = W_s \otimes W_b.
\end{equation}

In the context of open quantum systems, instead of working with the wave function, it would be easier to work with the density operator {$\rho(t)$}. {We assume that initially the system is in a pure state, the bath is in thermal equilibrium, and the system is not entangled with the baht. We can then formulate the initial density operator as}
\begin{equation}
 \label{init_rho}
\rho_0 = \rho^{(0)}_s \otimes \rho^{(0)}_b :=  {\ket{\psi_{s}^{(0)}}\bra{{\psi_{s}^{(0)}}} } \otimes \frac{\e^{-\beta
H_b}}{\tr(\e^{-\beta
H_b})}
\end{equation}
where $\beta$ is the inverse temperature.

Furthermore, we introduce the reduced density operator 
\begin{equation}
\label{eq_RDM_definition}
    \begin{split}
      {\rho_s(t) := \tr_b\left(\rho(t)\right) }= \tr_b\left( \e^{-\ii H t/\epsilon} \rho_0 \e^{\ii H t/ \epsilon} \right)
    \end{split}
\end{equation}
where $\tr_b$ represents the partial trace operator with respect to the bath degree of freedom, and the second equality above is from the von Neumann equation that governs the evolution of the density operator. Our goal is to compute the marginal density $f(t,x)$ upon integrating out the uninteresting bath, which is now obtained by the ``diagonal" of the reduced density operator  
\begin{equation}
\label{eq_density_probability}
    f(t,x) = {\bra{x}\rho_s(t)\ket{x}}.
\end{equation}

This paper proposes an algorithm for a basic open quantum system model, the Caldeira-Leggett model \cite{caldeira1981influence,caldeira1983path,caldeira1985influence,ferialdi2017dissipation}.
It describes the dynamics of a quantum particle coupled with $L$ quantum harmonic oscillators as the bath. For simplicity, we consider a one-dimensional particle throughout this work. 
In the Caldeira-Leggett model, the Hamiltonians in \cref{eq_Hamiltonian_oqs} can be formulated as 
\begin{equation}\label{H}
    H_s = -\frac{\epsilon^2}{2} \pdv[2]{x}
    + V(x) + \sum_{j=1}^L\frac{c_j^2}{2\omega_j^2} \hat{x}^2 \quad \text{and} \quad     H_b = \sum_{j=1}^L -\frac{\epsilon^2}{2} \pdv[2]{z_j}
    + \frac{1}{2} \omega_j^2 \hat{z}_j^2,
\end{equation}
and $W = W_s \otimes W_b$ represents the coupling between the particle and harmonic oscillators where 
\begin{equation}\label{W}
     W_s = \hat{x}  \quad \text{and} \quad  W_b =\sum_{j=1}^L c_j \hat{z}_j.  
\end{equation}
The notations in \cref{H} and \cref{W} are defined as:
\begin{itemize}
\item $\hat{x}$: The position operator of the particle in system, {$ \bra{x,\boldsymbol{z}} \ket{\hat{x}\psi(t)} =  x  \bra{x,\boldsymbol{z}} \ket{\psi(t)}$}.
\item $\hat{z}_j$: The position operator of the $j$th harmonic oscillator, {$ \bra{x,\boldsymbol{z}} \ket{\hat{z}_j\psi(t)} = z_j  \bra{x,\boldsymbol{z}} \ket{\psi(t)}$.}
\item $\omega_j$: The frequency of the $j$th harmonic oscillator.
\item $c_j$: The coupling intensity between the particle and the $j$th harmonic oscillator.
\item $V$: The potential function which is real and smooth.
\end{itemize}

As a fundamental model, the Caldeira-Leggett model includes the major challenges in simulating open quantum systems, which is the large degree of freedom in the harmonic oscillators. 
Another well-known open quantum system model, the spin-boson model
\cite{chakravarty1984dynamics,fisher1985dissipative,leggett1987dynamics,hanggi1986escape} is also related to the Caldeira-Leggett model.
Specifically, it can be regarded as a special case of the Caldeira-Leggett model in double well potential with {an additional approximation neglecting high-energy system states}. 

\section{Inchworm-FGA algorithm for Caldeira-Leggett model}
\label{sec_FGA_inchworm}
\subsection{Dyson series for Caldeira-Leggett model}
\label{sec_dyson}
To obtain the reduced density operator, it is in practice unaffordable to directly evaluate the operator $\e^{\pm \ii H t/\epsilon}$ due to the large degrees of freedom in the bath. 
{
An existing method is to apply the forward-backward coherent state semiclassical approximation to the influence functional \cite{thompson1999influence}.
Alternatively, one can} express $\e^{\pm \ii H t / \epsilon}$ in terms of Dyson series \cite{dyson1949radiation,cai2020inchworm,cai2022fast} which is widely used in scattering theory. 
For the sake of simplicity, we denote $H_0 = H_s\otimes \Id + \Id\otimes H_b$ so the total Hamiltonian is written by $H = H_0 + W$.
In the Dyson series, the coupling term $W$ is regarded as a perturbation of $H_0$, the non-coupling Hamiltonian.
The exponential operator $\e^{\pm \ii H t/\epsilon}$ can be written as its Dyson series expansion
\begin{equation}
\label{eq_Dyson_expansion_exponential}
  \begin{split}
    \e^{-\ii H t / \epsilon}
    = \ &  \sum_{n=0}^{\infty}
            \int_{0\leqslant \boldsymbol{\tau} \leqslant t} \left(-\frac{\ii}{\epsilon}\right)^n
            \e^{-\ii H_0 (t-t_n)} W
            \e^{-\ii H_0 (t_n-t_{n-1})} W
            \cdots W
            \e^{-\ii H_0 t_1} \dd \boldsymbol{\tau}\\
    \e^{\ii H t / \epsilon}
    = \ & \sum_{n=0}^{\infty}
            \int_{0\leqslant \boldsymbol{\tau} \leqslant t} \left(\frac{\ii}{\epsilon}\right)^n
            \e^{\ii H_0 t_1} W
            \e^{\ii H_0 (t_2-t_{1})} W
            \cdots W
            \e^{\ii H_0 (t - t_n)} \dd \boldsymbol{\tau}        
            \end{split}
\end{equation}
with $\boldsymbol{\tau}=(t_1,\dots,t_n)$, where we have used the fact that the coupling $W$ in the Caldeira-Leggett model is a Hermitian operator. The integrals over simplices are formulated as 
\begin{equation}
    \int_{t_{\mathrm{i}} \leqslant \boldsymbol{\tau} \leqslant t_{\mathrm{f}} } \, \dd \boldsymbol{\tau} := \int^{t_{\mathrm{f}}}_{t_{\mathrm{i}}}  \int^{t_n}_{t_{\mathrm{i}}}  \cdots \int^{t_{2}}_{t_{\mathrm{i}}}    \, \dd t_1 \cdots  \dd t_{n-1} \, \dd t_n.  
\end{equation}
Inserting \cref{eq_Dyson_expansion_exponential} into \cref{eq_RDM_definition} with change of variables yields the following Dyson expansion of the reduced density operator:{
\begin{equation}
\begin{split}
    \rho_s(t) = \sum_{m=0}^\infty
    \int_{-t\leqslant \boldsymbol{s}\leqslant t}
    \prod_{j=1}^m \left(
        -\frac{\ii}{\epsilon}\sgn(s_j)
    \right)
    &\left(
    G^{(s)}(s_m,t) W_s 
    \dots 
    W_s G^{(s)}(-t,s_1)
    \right) \times \\
    \times \tr &
    \left(
    G^{(b)}(s_m,t) W_b
    \dots 
    W_b G^{(b)}(-t,s_1)
    \right)
    \dd \boldsymbol{s}
\end{split}
\end{equation}}
where $\boldsymbol{s} = (s_1,\dots,s_m)$ is an ascending sequence of time points and $\sgn$ is the sign function. Under such formulation, the integrand is divided into system part and bath part where
\begin{equation}
\label{eq_bare_propagator}
    G^{(s)}(s_\ii,s_\ff)
    = \begin{cases}
        \e^{-\ii H_s (s_\ff-s_\ii)/\epsilon} ,&\text{if~} 0 \leqslant s_\ii \leqslant s_\ff, \\
        \e^{-\ii H_s s_\ff/\epsilon} {\rho_s^{(0)}}
        \e^{-\ii H_s s_\ii/\epsilon} ,&\text{if~} s_\ii < 0 \leqslant s_\ff, \\
        \e^{\ii H_s (s_\ff-s_\ii)/\epsilon} ,&\text{if~} s_\ii \leqslant s_\ff < 0.
    \end{cases}
\end{equation}
is a propagator associated with the system dynamics and
\begin{equation}
    G^{(b)}(s_\ii,s_\ff)
    = \begin{cases}
        \e^{-\ii H_b (s_\ff-s_\ii)/\epsilon} ,&\text{if~} 0 \leqslant s_\ii \leqslant s_\ff, \\
        \e^{-\ii H_b s_\ff/\epsilon} {\rho_b^{(0)}}
        \e^{-\ii H_b s_\ii/\epsilon} ,&\text{if~} s_\ii < 0 \leqslant s_\ff, \\
        \e^{\ii H_b (s_\ff-s_\ii)/\epsilon} ,&\text{if~} s_\ii \leqslant s_\ff < 0.
    \end{cases}
\end{equation}
is a propagator associated with the bath. Readers may refer to \cite[Section 2]{cai2020inchworm} for more details.

Consequently, the reduced density operator can be written as the following infinite sum over multiple integrals over simplices:
\begin{equation}
\label{eq_dyson}
    {\rho_s(t) 
    = \sum_{m=0}^{\infty}
    \int_{-t\leqslant \boldsymbol{s} \leqslant t}
    \prod_{j=1}^m 
    \left(-\ii \sgn (s_j)\right)
    \mathcal{U}(-t,\boldsymbol{s},t)
    \mathcal{L}_b(\boldsymbol{s})
    \dd \boldsymbol{s}.}
\end{equation}
 Inside the integrand of \cref{eq_dyson}, we have the Markovian system influence functional 
\begin{equation}
\label{eq_system_part_dyson}
   { \mathcal{U}(-t,\boldsymbol{s},t)
    = G^{(s)}(s_{m},t) W_s
    G^{(s)}(s_{m-1},s_m) W_s
    \dots
    W_s G^{(s)}(s_1,s_2) 
    W_s G^{(s)}(-t,s_1) }
\end{equation}
and a non-Markovian bath influence functional 
\begin{equation}
   { \mathcal{L}_b(\boldsymbol{s}) = \frac{1}{\epsilon^m} \tr
    \left(
    G^{(b)}(s_m,t) W_b
    G^{(b)}(s_{m-1},s_m) W_b
    \dots 
    W_b G^{(b)}(s_1,s_2) 
    W_b G^{(b)}(-t,s_1)
    \right).}
\end{equation}
The bath influence functional $\mathcal{L}_b(\boldsymbol{s})$ is assumed to satisfy the Wick's theorem \cite{wick1950evaluation} so that it has the pairwise interaction form 
\begin{equation} \label{Lb}
    \mathcal{L}_b(\boldsymbol{s})
    = \begin{dcases}
        0, & \text{if $m$ is odd,} \\
        \sum_{\mathfrak{q} \in \mathscr{Q}(\boldsymbol{s})} 
        \prod_{(\tau_1,\tau_2) \in \mathfrak{q}} B(\tau_1, \tau_2), & \text{if $m$ is even.}
    \end{dcases}
\end{equation}
where $B(\tau_1,\tau_2)$ is the two-point correlation function $B(\tau_1,\tau_2)$. Under the setting of Caldeira-Leggett model, it takes the form \cite{sun2024simulation}:
\begin{equation}
    B(\tau_1,\tau_2) = \begin{cases}
        \tilde{B}^*(\vert \tau_1 \vert - \vert \tau_2 \vert), &\text{if~}\tau_1\tau_2>0 \\
        \tilde{B}(\vert \tau_1 \vert - \vert \tau_2 \vert), &\text{if~}\tau_1\tau_2\leqslant 0
    \end{cases}
\end{equation}
with 
\begin{equation}
 \begin{split}
    \tilde{B}(\Delta \tau) 
    = & \   \frac{1}{2}\sum_j \frac{c_j^2}{\epsilon \omega_j} 
    \left(
    \coth\left(\frac{\beta\epsilon\omega_j}{2}\right)
    \cos(\omega_j \Delta \tau ) - \ii \sin(\omega_j \Delta \tau )
    \right)
     \end{split}
    \label{bath_function_B}
\end{equation}
where $\Delta \tau = \vert \tau_1 \vert - \vert \tau_2 \vert$.
Here we assume that the coupling intensity $c_j$ has order $\mathcal{O}(\epsilon)$, so that $\tilde{B}(\cdot) \sim \mathcal{O}(1)$.
Since $\mathcal{L}_b(\Sb)$ vanishes when the number of time points in $\Sb$ is odd, the summation in Dyson series \cref{eq_dyson} now only runs over even $m$. The set $\mathscr{Q}(\boldsymbol{s})$ denotes all possible pairings. For example, 
\begin{equation}
    \mathscr{Q}(\boldsymbol{s}) = \begin{cases}
        \{\{(s_1,s_2)\}\}, &\text{if~}\boldsymbol{s} = (s_1,s_2) \\
        \{\{(s_1,s_2),(s_3,s_4)\},\{(s_1,s_3),(s_2,s_4)\},\{(s_1,s_4),(s_2,s_3)\}\}, &\text{if~}\boldsymbol{s} = (s_1,s_2,s_3,s_4) \\
        \dots
    \end{cases}.
\end{equation}
and the corresponding \cref{Lb} is formulated as 
\begin{equation}
\begin{split}
    &\mathcal{L}_b(s_1,s_2) = B(s_1,s_2),\\
    & \mathcal{L}_b(s_1,s_2,s_3,s_4) = B(s_1,s_2)B(s_3,s_4)+ B(s_1,s_3)B(s_2,s_4) + B(s_1,s_4)B(s_2,s_3),\\
    & \cdots.
\end{split}
\end{equation}
Under such definition, we employ the idea of Feynman diagrams \cite{dauria2011special} to write the complicated Dyson series \cref{eq_dyson} into a simple diagrammatic equation
\begin{equation}
\label{eq_diag_eq}
    \begin{tikzpicture}[anchor=base, baseline,scale=0.6]
        \fill [black] (-1.4,-0.13) rectangle (1.4,0.13);
        \draw [thick] (-1.4,-0.18)--(-1.4,0.18); 
        \draw [thick] (1.4,-0.18)--(1.4,0.18);
        \node [below] at (-1.4,-0.18) {$-t$}; 
        \node [below] at (1.4,-0.18) {$t$}; 
    \end{tikzpicture} 
    \hspace{-5pt}=\hspace{-5pt}
    \begin{tikzpicture}[anchor=base, baseline,scale=0.6] 
        \draw [thick] (-1.4,0) -- (1.4,0);
        \draw [thick] (-1.4,-0.1)--(-1.4,0.1); 
        \draw [thick] (1.4,-0.1)--(1.4,0.1);
        \node [below] at (-1.4,-0.1) {$-t$}; 
        \node [below] at (1.4,-0.1) {$t$}; 
    \end{tikzpicture}
    \hspace{-2pt}+\hspace{-5pt}
    \begin{tikzpicture}[anchor=base, baseline,scale=0.6] 
        \draw [thick] (-1.4,0) -- (1.4,0);
        \draw [thick] (-1.4,-0.1)--(-1.4,0.1); 
        \draw [thick] (1.4,-0.1)--(1.4,0.1);
        \node [below] at (-1.4,-0.1) {$-t$}; 
        \node [below] at (1.4,-0.1) {$t$}; 
        \draw[-] (-1.4/3,0) to [bend left=60] (1.4/3,0);
        \draw plot[only marks,mark =*, mark options={color=black, scale=0.5}]coordinates {(-1.4/3,0)(1.4/3,0)};
        \node [below] at (-1.4/3,0) {$s_1$};
        \node [below] at (1.4/3,0) {$s_2$};
    \end{tikzpicture}
    \hspace{-2pt}+\hspace{-5pt}
    \begin{tikzpicture}[anchor=base, baseline,scale=0.6] 
        \draw [thick] (-1.4,0) -- (1.4,0);
        \draw [thick] (-1.4,-0.1)--(-1.4,0.1); 
        \draw [thick] (1.4,-0.1)--(1.4,0.1);
        \node [below] at (-1.6,-0.1) {$-t$}; 
        \node [below] at (1.6,-0.1) {$t$}; 
        \draw[-] (-0.84,0) to[bend left=75] (-0.28,0);
        \draw[-] (0.28,0) to[bend left=75] (0.84,0);
        \draw plot[only marks,mark =*, mark options={color=black, scale=0.5}]coordinates {(-0.84,0) (-0.28,0) (0.28,0) (0.84,0)};
        \node [below] at (-0.84,0) {$s_1$};  
        \node [below] at (-0.28,0) {$s_2$};   
        \node [below] at (0.28,0) {$s_3$};
        \node [below] at (0.84,0) {$s_4$};
    \end{tikzpicture}
    \hspace{-2pt}+\hspace{-5pt}
    \begin{tikzpicture}[anchor=base, baseline,scale=0.6] 
        \draw [thick] (-1.4,0) -- (1.4,0);
        \draw [thick] (-1.4,-0.1)--(-1.4,0.1); 
        \draw [thick] (1.4,-0.1)--(1.4,0.1);
        \node [below] at (-1.6,-0.1) {$-t$}; 
        \node [below] at (1.6,-0.1) {$t$}; 
        \draw[-] (-0.84,0) to[bend left=75] (0.28,0);
        \draw[-] (-0.28,0) to[bend left=75] (0.84,0);
        \draw plot[only marks,mark =*, mark options={color=black, scale=0.5}]coordinates {(-0.84,0) (-0.28,0) (0.28,0) (0.84,0)};
        \node [below] at (-0.84,0) {$s_1$};  
        \node [below] at (-0.28,0) {$s_2$};   
        \node [below] at (0.28,0) {$s_3$};
        \node [below] at (0.84,0) {$s_4$};
    \end{tikzpicture}
    \hspace{-2pt}+\hspace{-5pt}
    \begin{tikzpicture}[anchor=base, baseline,scale=0.6] 
        \draw [thick] (-1.4,0) -- (1.4,0);
        \draw [thick] (-1.4,-0.1)--(-1.4,0.1); 
        \draw [thick] (1.4,-0.1)--(1.4,0.1);
        \node [below] at (-1.6,-0.1) {$-t$}; 
        \node [below] at (1.6,-0.1) {$t$}; 
        \draw[-] (-0.84,0) to[bend left=75] (0.84,0);
        \draw[-] (-0.28,0) to[bend left=75] (0.28,0);
        \draw plot[only marks,mark =*, mark options={color=black, scale=0.5}]coordinates {(-0.84,0) (-0.28,0) (0.28,0) (0.84,0)};
        \node [below] at (-0.84,0) {$s_1$};  
        \node [below] at (-0.28,0) {$s_2$};   
        \node [below] at (0.28,0) {$s_3$};
        \node [below] at (0.84,0) {$s_4$};
    \end{tikzpicture}
    \hspace{-2pt}+\dots.
\end{equation}
On the left side of the equation above, the bold line with ends labeled with $\pm t$ denotes the reduced density operator $\rho_s(t,x_1,x_2)$ in \cref{eq_dyson}. On the right side, we use each diagram to represent one integral in \cref{eq_dyson}. For example, the third diagram on the right side denotes the integral  {
\begin{multline*}
    \begin{tikzpicture}[anchor=base, baseline,scale=0.6] 
        \draw [thick] (-1.4,0) -- (1.4,0);
        \draw [thick] (-1.4,-0.1)--(-1.4,0.1); 
        \draw [thick] (1.4,-0.1)--(1.4,0.1);
        \node [below] at (-1.6,-0.1) {$-t$}; 
        \node [below] at (1.6,-0.1) {$t$}; 
        \draw[-] (-0.84,0) to[bend left=75] (-0.28,0);
        \draw[-] (0.28,0) to[bend left=75] (0.84,0);
        \draw plot[only marks,mark =*, mark options={color=black, scale=0.5}]coordinates {(-0.84,0) (-0.28,0) (0.28,0) (0.84,0)};
        \node [below] at (-0.84,0) {$s_1$};  
        \node [below] at (-0.28,0) {$s_2$};   
        \node [below] at (0.28,0) {$s_3$};
        \node [below] at (0.84,0) {$s_4$};
    \end{tikzpicture} =   \int^{t}_{-t}  \int^{s_4}_{-t} \int^{s_3}_{-t} \int^{s_{2}}_{-t}  \int^{s_{1}}_{-t}  
      G^{(s)}(s_{4},t) \left(-\ii \sgn (s_4) W_s \right)
    G^{(s)}(s_{3},s_4)  \left(-\ii \sgn (s_3) W_s \right) \times \\
     \times 
    G^{(s)}(s_2,s_3) 
     \left(-\ii \sgn (s_2) W_s \right) G^{(s)}(s_1,s_2) 
     \left(-\ii \sgn (s_1) W_s \right) G^{(s)}(-t,s_1) \times \\ 
     \times  B(s_1,s_2)B(s_3,s_4)
    \, \dd s_1  \, \dd s_2  \dd s_{3} \, \dd s_4   .
\end{multline*} }
Each operator {$G^{(s)}(s_j,s_{j+1})$} is denoted by a line segment from $s_j$ to $s_{j+1}$; the coupling term $-\ii \sgn (s_j) W_s$ is represented by the dot labeled by $s_j$; each two point correlation function $B(s_\ii,s_\ff)$ is denoted by the arc connecting dots $s_\ii$ and $s_\ff$.

In practice, to evaluate the Dyson series numerically, one can truncate the series by a sufficiently large integer $\bar{M}$ and the{n} compute the integrals/thin diagrams using Monte Carlo integration. Such implementation results in the bare dQMC method \cite{prokof1998polaron,werner2009diagrammatic}. While bare dQMC avoids {the} curse of dimensionality with the advantage of Monet Carlo, it is well known that this method often suffers from the sign problem \cite{loh1990sign,cai2023numerical} caused by a large variance from sampling. To mitigate the sign problem, bold line methods \cite{prokof2008bold,gull2010bold,kulagin2013boldDiagrammaticMonteCarloMethod,kulagin2013boldMonteCarloTechnique,cohen2015taming,li2019bold,cai2023bold} have been developed in recent years. These approaches regroup some of the thin diagrams into bold diagrams and can achieve more stable results with a smaller truncation number of the series $\bar{M}$. Among these methods, the inchworm method has shown successful performance applied to models including quantum impurity problems and exact non-adiabatic dynamics \cite{chen2017inchwormITheory,chen2017inchwormIIBenchmarks,cai2020inchworm,cai2022fast}. In the next section, we will develop our proposed algorithm which integrates the central idea of inchworm method into FGA to simulate open quantum systems.

\subsection{Inchworm-FGA algorithm}
\label{sec_inchworm}
\subsubsection{FGA for Dyson series}
We now apply FGA to approximate the Dyson series \cref{eq_dyson}. We begin with the system influence functional $\mathcal{U}_{x_1x_2}$ defined in \cref{eq_system_part_dyson}. Under the setting of Caldeira-Leggett model, we can decompose the system influence functional as {$\bra{x_1}\mathcal{U} \ket{x_2} = \mathcal{U}_{x_1}^ -\mathcal{U}_{x_2}^+$} where {
\begin{equation}
\label{eq_U_half}
    \begin{split}
        &\mathcal{U}_{x_1}^-(\boldsymbol{s},t)=\langle x_1 | \e^{-\ii H_s (t - s_m)/\epsilon}   \hat{x}  \e^{-\ii H_s (s_m - s_{m-1})/\epsilon}   \hat{x} \cdots \hat{x} \e^{-\ii H_s s_{i+1}/\epsilon} |\psi_{s}^{(0)}\rangle, \\
        &\mathcal{U}_{x_2}^+(\boldsymbol{s},t)=\langle \psi_{s}^{(0)} |
        \e^{-\ii H_s s_{i}/\epsilon} \hat{x} \cdots  \hat{x}  \e^{\ii H_s (s_2 - s_1)/\epsilon} \hat{x}  \e^{\ii H_s (s_1 + t)/\epsilon} | x_2 \rangle
    \end{split}
\end{equation} }
according to the definition of {$G^{(s)}$} in \cref{eq_bare_propagator}. Here we have assumed that $s_i < 0 < s_{i+1}$ for convenience. 
To approximate \cref{eq_U_half}, we start from approximating the initial wave function {$\psi_{s}^{(0)}(x) = \langle x | \psi_{s}^{(0)}\rangle$} using FGA, and then apply the semi-group $\exp(\pm \ii H_s (s_{j+1} - s_{j})/\epsilon)$ as well as position operator $\hat{x}$ repeatedly while ensuring the evolved wave function remain to take the form of FGA ansatz during the whole process.    
Suppose we have approximated {$\langle x_1| \hat{x}  \e^{-\ii H_s (s_k - s_{k-1})/\epsilon}   \hat{x} \cdots \hat{x} \e^{-\ii H_s s_{i+1}/\epsilon} |\psi_{s}^{(0)}\rangle$} by $\psi_{\mathrm{FGA}}(s_k,x_1)$, the following suggests the approximation of $\hat{x}\psi_{\mathrm{FGA}}(s_k,x_1)$ in the next iteration:
\begin{proposition}
\label{prop_position_operator_beam_center}
Given an FGA ansatz
\begin{equation*}
    \psi_{\FGA}(t,x)
    = \frac{1}{({2\pi\epsilon})^{3/2}}
    \int_{\mathbb{R}^2}
    \int_{-\infty}^\infty
    a(t,p,q) \e^{\ii \phi(t,x,y,p,q) / \epsilon}
    \psi_{0}(y)
    \dd y \dd p \dd q,
\end{equation*}
suppose the function $c(t,y,p,q) = a(t,p,q)\psi_{0}(y)$ is in Schwartz class. For any $t\in[0,T]$, we have 
\begin{equation}
 \left\|  \hat{x}\psi_{\FGA}(t,x) - \frac{1}{({2\pi\epsilon})^{3/2}}
    \int_{\mathbb{R}^2}
    \int_{-\infty}^\infty
    Q(t,p,q)a(t,p,q) \e^{\ii \phi(t,x,y,p,q) / \epsilon}
    \psi_0(y)
    \dd y \dd p \dd q \right\|_{L^{2}} \leqslant C_2 \epsilon
\end{equation}
for some constant $C_2$.
\end{proposition}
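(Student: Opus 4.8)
The plan is to peel the beam center $Q(t,p,q)$ off the position operator and show that what is left behind is one order of $\epsilon$ smaller. Writing $x = Q(t,p,q) + \big(x - Q(t,p,q)\big)$ inside the integrand defining $\hat x \psi_{\FGA}$, the piece carrying $Q(t,p,q)$ reproduces exactly the proposed formula, so the claim reduces to the $L^\infty$ bound
\begin{equation*}
    \left\| \frac{1}{(2\pi\epsilon)^{3/2}} \int_{\mathbb{R}^2}\int_{-\infty}^\infty \big(x - Q(t,p,q)\big)\, a(t,p,q)\, \e^{\ii\phi/\epsilon}\, \psi_0(y)\, \dd y\, \dd p\, \dd q \right\|_{L^\infty} \leqslant C_2 \epsilon ,
\end{equation*}
whose left-hand integral we call $R(t,x)$. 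A crude estimate using $\operatorname{Im}\phi = \tfrac12(x-Q)^2 + \tfrac12(y-q)^2 \geqslant 0$ only controls $x - Q$ by $O(\sqrt\epsilon)$ on the effective support, which is not enough; the missing order must come from an integration by parts.

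The key step is an \emph{exact} pointwise identity for $x-Q$ in terms of phase-space derivatives of $\phi$. Differentiating \eqref{phi} and using the Hamilton--Jacobi relations $\partial_p S = P\,\partial_p Q$ and $\partial_q S = P\,\partial_q Q - p$ --- both of which follow from \eqref{eq_dynamics} and the initial data, since $\partial_t(\partial_p S - P\,\partial_p Q) = \partial_t(\partial_q S - P\,\partial_q Q) = 0$ --- one obtains
\begin{equation*}
    \partial_p\phi = -(y-q) + (x-Q)\big(\partial_p P - \ii\,\partial_p Q\big), \quad
    \partial_q\phi = -\ii(y-q) + (x-Q)\big(\partial_q P - \ii\,\partial_q Q\big), \quad
    \partial_y\phi = \ii(y-q) - p .
\end{equation*}
Using the third relation to eliminate $(y-q)$ from the first two and forming the combination $\tfrac1D\partial_p\phi + \tfrac{\ii}{D}\partial_q\phi$ with $D := \partial_p P + \partial_q Q + \ii\big(\partial_q P - \partial_p Q\big)$, every term not proportional to $x - Q$ cancels, leaving $x - Q = D^{-1}(\partial_p + \ii\,\partial_q)\phi$, hence
\begin{equation*}
    \big(x - Q\big)\, \e^{\ii\phi/\epsilon} = \frac{1}{D}\,(\partial_p + \ii\,\partial_q)\phi\cdot \e^{\ii\phi/\epsilon} = \frac{-\ii\epsilon}{D}\,(\partial_p + \ii\,\partial_q)\, \e^{\ii\phi/\epsilon} .
\end{equation*}
Here $D$ never vanishes: the Jacobian of the Hamiltonian flow $(p,q)\mapsto(P,Q)$ has determinant $1$, which forces $|D|^2 = \|\partial_{(p,q)}(P,Q)\|_F^2 + 2 \geqslant 2$; in fact $D = a^2$, consistent with the amplitude equation in \eqref{eq_dynamics} since then $\partial_t\log a = \tfrac12\,\partial_t\log D$.

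Substituting this identity into $R$ and integrating by parts in $p$ and $q$ --- boundary terms vanish by the Schwartz hypothesis on $c$, together with the fact that $1/D$ and its derivatives are bounded on $[0,T]\times\mathbb{R}^2$ (smoothness of $V$ plus a Gronwall bound on the flow Jacobian) --- we arrive at
\begin{equation*}
    R(t,x) = \ii\epsilon\cdot \frac{1}{(2\pi\epsilon)^{3/2}} \int_{\mathbb{R}^2}\int_{-\infty}^\infty \tilde a(t,p,q)\, \e^{\ii\phi/\epsilon}\, \psi_0(y)\, \dd y\, \dd p\, \dd q, \qquad \tilde a := (\partial_p + \ii\,\partial_q)\!\left(\frac{1}{a}\right),
\end{equation*}
where we used $a/D = 1/a$, so $\psi_0(y)$ factors out of the $(p,q)$-derivative. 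The remaining integral is again of FGA type --- same phase $\phi$, same initial datum $\psi_0$, but amplitude $\tilde a$ in place of $a$ --- so by the uniform-in-$\epsilon$ boundedness estimate for such integrals (the $L^\infty$ analogue of the stability bound underlying \Cref{prop_FGA}, proved by stationary phase in $p$ exploiting the non-degeneracy of $\operatorname{Re}\phi$ and $\operatorname{Im}\phi \geqslant 0$) it is $O(1)$ in $L^\infty$ uniformly for $t\in[0,T]$. Therefore $\|R\|_{L^\infty}\leqslant C_2\epsilon$, which is the claim.

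The main obstacle is exactly this last estimate: one cannot take absolute values inside the $p$-integral but must genuinely use the oscillation, which requires controlling the flow map $(p,q)\mapsto(P,Q)$ uniformly on $[0,T]$ (boundedness of its Jacobian, hence of $D$, $1/D$ and their derivatives, so that $\tilde a$ and $\tilde a\psi_0$ inherit the decay and regularity needed to run the stationary-phase argument). Verifying that inheritance is the only routine point left; the algebraic identity $x - Q = D^{-1}(\partial_p + \ii\,\partial_q)\phi$ is what makes the whole argument go through cleanly.
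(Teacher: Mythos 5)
Your route is, in substance, the paper's own: the paper does not prove this proposition but defers to Lemma 3.2 of Lu and Yang, and that lemma is exactly the phase-space integration-by-parts identity you reconstruct. Your algebra checks out: the relations $\partial_p S = P\,\partial_p Q$ and $\partial_q S = P\,\partial_q Q - p$ do follow from \eqref{eq_dynamics} and the initial data; the combination gives $(\partial_p + \ii\,\partial_q)\phi = (x-Q)\,D$ with $D = \partial_p P + \partial_q Q + \ii(\partial_q P - \partial_p Q)$ (note the $(y-q)$ terms cancel in this combination by themselves, so the $\partial_y\phi$ relation is not actually needed); $|D|$ is bounded below because the flow preserves $\dd q\wedge \dd p$; and $D = a^2$ is indeed consistent with the amplitude ODE, so the post-integration-by-parts amplitude $\tilde a = (\partial_p+\ii\,\partial_q)(1/a)$ is correct. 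One small caution: the decay in $(p,q)$ that kills the boundary terms does not come from $a$ (which does not decay); it comes from the Gaussian factors and, in the $p$-direction, only after the $y$-integration (rapid decay of the FBI transform of $\psi_0$) — the paper's hypothesis that $a\psi_0$ is ``Schwartz'' is itself loose on this point, so you are not worse off than the statement you were given.

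The step that is not actually established is the last one: the assertion that the remaining FGA-type integral with amplitude $\tilde a$ is $O(1)$ in $L^\infty$ uniformly in $\epsilon$ and $t\in[0,T]$. You appeal to an ``$L^\infty$ analogue of the stability bound underlying \Cref{prop_FGA}'', but no such bound appears in the paper, and the estimates in the cited FGA literature are $L^2$ estimates; a uniform $L^\infty$ bound is genuinely more delicate. Bounding crudely, the prefactor $\epsilon^{-3/2}$ times the $O(\sqrt{\epsilon})$ size of the $y$-integral times the measure of the effective $(p,q)$ region gives $O(\epsilon^{-1/2})$ at times where $\partial_q Q$ degenerates (e.g.\ $t=\pi/2$ for $V=x^2/2$, a caustic of the position-space projection), which matches the actual $\epsilon^{-1/2}$ focusing of the wave packet there; there is no evident reason the $\tilde a$-integral avoids this growth, so the claimed uniform $O(1)$ bound — and hence the $O(\epsilon)$ error in $L^\infty$ at such times — requires a real argument, or a restriction on $T$, or an acknowledgement that $C_2$ depends on ($\epsilon$-dependent) norms of $\psi_0$. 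In fairness, the paper offers nothing beyond the citation and the heuristic that beams are concentrated, so this gap is shared with the source; but as a self-contained proof, this final estimate is the missing piece, while everything before it is correct and is precisely the mechanism behind the cited lemma.
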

This result can be derived from \cite[Lemma 3.2]{lu2011frozen}. The assumption that $c(t,y,p,q)$ is in Schwartz class can be achieved by using Gaussian-like initial values $\psi^{(0)}_s$ whose derivatives decay sufficiently rapidly. To understand the intuition of Proposition \ref{prop_position_operator_beam_center}, we may consider $\psi_{\mathrm{FGA}}$ as an integral of ``beams"
\begin{equation}\label{eq_beam}
    \psi_{\mathrm{beam}}(t,x,p,q) =  \frac{1}{({2\pi\epsilon})^{3/2}}
    \int_{-\infty}^{\infty} a(t,p,q)
    \e^{\ii \phi(t,x,y,p,q)/\epsilon} \psi^{(0)}_s(y) \dd y
\end{equation}
over the $(p,q)$-domain. When approximating $\hat{x}\psi_{\mathrm{FGA}}(s_k,x_1)$, each beam in $\psi_{\mathrm{FGA}}(s_k,x_1)$ is centered at $Q(s_k,p,q)$ along the $x$-dimension, and decays very fast as $x$ leaves $Q(s_k,p,q)$ due to the narrow width $\epsilon$. As a result, one can replace each position operator $\hat{x}$ by the beam center $Q(s_k,p,q)$ with a $\mathcal{O}(\epsilon)$ error. 

Afterwards, we will fix $\hat{x}$ as the time-independent function $Q(s_k,p,q)$ in the next evolution under the semi-group $\e^{-\ii H_s (s_{k+1} - s_{k})/\epsilon}$.  The following proposition suggests that the wave function can keep the FGA form with each $\psi_{\mathrm{beam}}$ with fixed beam index $(p,q)$ rescaled by $Q(s_k,p,q)$ in the next episode $[s_{k},s_{k+1}]$:
\begin{proposition}
\label{prop_scale_coefficient}
For two FGA ans\"atze 
\begin{equation}
    {\psi}_{\FGA}^{(a)}(t,x) = \frac{1}{(2\pi\epsilon)^{3/2}}
    \int_{\mathbb{R}^2}\int_{-\infty}^\infty
    a(t,p,q) {\e^{\ii \phi_a(t,x,y,p,q)/\epsilon}} \psi_0(y) \dd y \dd p \dd q,
\end{equation}
and
\begin{equation}
    {\psi}_{\FGA}^{(b)}(t,x) = \frac{1}{(2\pi\epsilon)^{3/2}}
    \int_{\mathbb{R}^2}\int_{-\infty}^\infty
    b(t,p,q) {\e^{\ii \phi_b(t,x,y,p,q)/\epsilon}} \psi_0(y) \dd y \dd p \dd q
\end{equation}
following the dynamics \cref{eq_dynamics} with the same potential function $V(x)$,
if at some time $t_0$, we have
\begin{itemize}
    \item $b(t_0,p,q) = c(p,q) a(t_0,p,q)$ for some $c(p,q)$,
    \item $S_a(t_0,p,q) = S_b(t_0,p,q)$, $P_a(t_0,p,q) = P_b(t_0,p,q)$ and $Q_a(t_0,p,q) = Q_b(t_0,p,q)$,
\end{itemize}
then for all $t$, we have
\begin{itemize}
    \item $b(t,p,q) = c(p,q) a(t,p,q)$,
    \item $S_a(t,p,q) = S_b(t,p,q)$, $P_a(t,p,q) = P_b(t,p,q)$ and $Q_a(t,p,q) = Q_b(t,p,q)$.
\end{itemize}
Note that we simply use subscriptions $a,b$ to denote the corresponding quantities in $\psi_{\FGA}^{(a)},\psi_{\FGA}^{(b)}$, respectively.
\begin{proof}
    Based on \cref{eq_dynamics}, the evolution of $S,P,Q$ is independent from the coefficients $a(t,p,q)$ and $b(t,p,q)$.
    By the uniqueness of the solution to an ODE system,
    we have the second statement $S_a(t,p,q) = S_b(t,p,q)$, $P_a(t,p,q) = P_b(t,p,q)$ and $Q_a(t,p,q) = Q_b(t,p,q)$.
    For fixed $p,q$, the dynamics of $a(t,p,q),b(t,p,q)$ satisfies
    \begin{equation}
    \begin{split}
        &\pdv{a(t,p,q)}{t} = \beta(t,p,q) a(t,p,q), \\
        &\pdv{b(t,p,q)}{t} = \beta(t,p,q) b(t,p,q).
    \end{split}
    \end{equation}
    In these two differential equations,
    \begin{equation}
        \beta(t,p,q) = \frac{1}{2} \frac{\partial_q P - \partial_p Q V''(Q) - \ii\left(\partial_p P +\partial_q Q V''(Q)\right)}{\partial_q Q + \partial_p P + \ii (\partial_q P - \partial_p Q)}
    \end{equation}
    only depends on $P,Q$.
    Hence, the dynamics of $a(t,p,q)$ and $b(t,p,q)$ share the same quantity $\beta(t,p,q)$.
    Therefore, we have
    \begin{equation}
    \begin{split}
        &a(t,p,q) = a(t_0,p,q) \exp\left({\int_{t_0}^t\beta(\tau,p,q)\dd \tau}\right), \\
        &b(t,p,q) = b(t_0,p,q) \exp\left({\int_{t_0}^t\beta(\tau,p,q)\dd \tau}\right),
    \end{split}
    \end{equation}
    and the first statement $b(t,p,q) = c(p,q) a(t,p,q)$ holds naturally.
\end{proof}
\end{proposition}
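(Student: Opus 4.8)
The plan is to exploit the one-way (triangular) coupling structure of the FGA ODE system \eqref{eq_dynamics}: the equations for $S$, $P$, $Q$ — together with the auxiliary equations for $\partial_pP$, $\partial_pQ$, $\partial_qP$, $\partial_qQ$ — form a closed subsystem whose right-hand side is built only out of $P$, $Q$ and the derivatives of $V$, and in particular does not see the amplitude; the amplitude equation, on the other hand, depends on these geometric quantities but feeds nothing back into them. I would first settle the closed subsystem and then treat the amplitude.

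\emph{Step 1: the geometric quantities coincide for all time.} For fixed $(p,q)$, the vector $t\mapsto\bigl(S,P,Q,\partial_pP,\partial_pQ,\partial_qP,\partial_qQ\bigr)$ solves an ODE system determined solely by the potential $V$. Since $\psi_{\FGA}^{(a)}$ and $\psi_{\FGA}^{(b)}$ are propagated with the same $V$ and, by hypothesis, have identical $S,P,Q$ at $t_0$ (hence identical initial data for the four derivative quantities as well, since those are fixed by the defining relations), the Picard--Lindel\"of theorem — applicable because $V$ is smooth — forces the two trajectories to agree on the whole time interval. This gives $S_a(t,p,q)=S_b(t,p,q)$, $P_a(t,p,q)=P_b(t,p,q)$, $Q_a(t,p,q)=Q_b(t,p,q)$ for all $t$, which is the second conclusion, and it also yields equality of $\partial_pP,\partial_pQ,\partial_qP,\partial_qQ$ across the two systems.

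\emph{Step 2: the amplitudes remain proportional.} Fixing $(p,q)$, the last line of \eqref{eq_dynamics} reads $\partial_t a = \beta(t,p,q)\,a$ with
\[
  \beta(t,p,q) = \frac12\,\frac{\partial_q P - \partial_p Q\,V''(Q) - \ii(\partial_p P + \partial_q Q\,V''(Q))}{\partial_q Q + \partial_p P + \ii(\partial_q P - \partial_p Q)},
\]
which by Step 1 is literally the same scalar function of $t$ for the $a$-system and the $b$-system. Thus $a(\cdot,p,q)$ and $b(\cdot,p,q)$ solve the same linear scalar ODE, so
\[
  a(t,p,q) = a(t_0,p,q)\exp\!\Big(\int_{t_0}^t \beta(\sigma,p,q)\,\dd\sigma\Big),\qquad b(t,p,q) = b(t_0,p,q)\exp\!\Big(\int_{t_0}^t \beta(\sigma,p,q)\,\dd\sigma\Big).
\]
Substituting the hypothesis $b(t_0,p,q)=c(p,q)\,a(t_0,p,q)$ gives $b(t,p,q)=c(p,q)\,a(t,p,q)$ for all $t$, the first conclusion.

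I do not expect a genuine obstacle here; the only point that deserves a line of care is that the integrating factor is well defined, i.e. that the denominator of $\beta$ stays nonzero along the trajectory so that $\int_{t_0}^t\beta$ is finite — this is part of the standard well-posedness of the FGA flow and may simply be invoked, or avoided altogether by arguing directly that $a$ and $b$ satisfy one and the same linear equation. The conceptual content of the argument is just recognizing the triangular structure of \eqref{eq_dynamics} and then combining uniqueness for ODEs with linearity of the amplitude equation.
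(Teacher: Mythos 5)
Your proposal is correct and follows essentially the same route as the paper's own proof: uniqueness of solutions for the $(S,P,Q,\partial_pP,\partial_pQ,\partial_qP,\partial_qQ)$ subsystem gives the second conclusion, and then $a$ and $b$ satisfy the same linear scalar ODE with coefficient $\beta(t,p,q)$, so the proportionality constant $c(p,q)$ persists. Your extra remarks (explicitly including the derivative quantities in the closed subsystem and noting the denominator of $\beta$ must stay nonzero) are minor refinements of the same argument.
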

With these two propositions, we are ready to give an approximation of the system influence functional {$\bra{x_1}\mathcal{U} \ket{x_2}$}.
    Below we present the approximation of $\mathcal{U}_{x_1}^-(\boldsymbol{s},t)$ only, the approximation for $\mathcal{U}_{x_2}^+(\boldsymbol{s},t)$ is similar. The FGA for $\mathcal{U}_{x_1}^-(\boldsymbol{s},t)$ is done from the right to left. 
To begin with, we have 
\begin{equation}\label{semigroup_FGA}
\begin{split}
    \e^{-\ii H_s s_{i+1}/\epsilon} \psi_{s}^{(0)}(x_1)
    \approx\  &\psi_{\FGA}(s_{i+1},x_1) \\
    =\ &\frac{1}{(2\pi\epsilon)^{3/2}}
    \int_{\mathbb{R}^2}\int_{-\infty}^\infty
    a(s_{i+1},p,q) {\e^{\ii \phi(s_{i+1},x_1,y,p,q)/\epsilon}} \psi^{(0)}_s(y) \dd y \dd p \dd q
\end{split}
\end{equation}
based on the FGA.
Next, we use \cref{prop_position_operator_beam_center} to replace the position operator $\hat{x}$ by beam centers $Q(s_{i+1},p,q)$,
 and thus we have
\begin{equation}
\begin{split}
    &\hat{x} \e^{-\ii H_s s_{i+1}/\epsilon} \psi_{s}^{(0)}(x_1)  \\
    \approx &
    \frac{1}{(2\pi\epsilon)^{3/2}}
    \int_{\mathbb{R}^2}\int_{-\infty}^\infty
    Q(s_{i+1},p,q)a(s_{i+1},p,q) {\e^{\ii \phi(s_{i+1},x_1,y,p,q)/\epsilon}} \psi^{(0)}_s(y) \dd y \dd p \dd q
\end{split}
\end{equation}
By \cref{prop_scale_coefficient},
the approximation continues as 
\begin{equation}
\begin{split}
   &\e^{-\ii H_s (s_{i+2} - s_{i+1})/\epsilon}  \hat{x} \e^{-\ii H_s s_{i+1}/\epsilon} \psi_{s}^{(0)}(x_1)   \\
   \approx\ &  \frac{1}{(2\pi\epsilon)^{3/2}}
    \int_{\mathbb{R}^2}\int_{-\infty}^\infty
    Q(s_{i+1},p,q)
    a(s_{i+2},p,q) {\e^{\ii \phi(s_{i+2},x_1,y,p,q)/\epsilon}} \psi^{(0)}_s(y) \dd y \dd p \dd q.
\end{split}
\end{equation}
Repeating such a process, we obtain the approximation of $\mathcal{U}_{x_1}^-(\boldsymbol{s},t)$ as 
\begin{equation}\label{U_minus_FGA}
    \mathcal{U}_{x_1}^-(\boldsymbol{s},t)
    \approx \frac{1}{(2\pi\epsilon)^{3/2}}
    \int_{\mathbb{R}^2}
    \int_{-\infty}^{\infty}
    Q(s_m)\cdots Q(s_{i+1}) a(t) \e^{\ii \phi(t,x_1,y,p,q)/\epsilon} \psi^{(0)}_s(y) \dd y \dd p \dd q
\end{equation}
where we omit the parameters $p,q$ for simplicity.
Based on \cref{prop_FGA,prop_position_operator_beam_center,prop_scale_coefficient}, we expect that the approximation of $\mathcal{U}_{x_1}^-(\bs,t)$ has an error at $\mathcal{O}(\epsilon)$. {In Section \ref{sec:error analysis}, we conduct a more rigorous numerical analysis of this approximation error.}

In practice, the integral with respect to variables $p$ and $q$ is computed by numerical quadrature. 
To be specific, we choose some quadrature points $\{(p_k,q_k)\}_{k=1}^K\subset \mathbb{R}^2$ and compute their corresponding integral weights $w_k$. 
A simple strategy is to apply uniform mesh with grid points 
\begin{equation}
\label{eq_grids}
\begin{split}
    &p_k = p_{\min} + k_1\Delta p \text{ for }k_1=0,\cdots,N_p \\
    &q_k = q_{\min} + k_2 \Delta q,
    \text{ for }k_2=0,\cdots,N_q
\end{split}
\end{equation}
where $k=k_1N_q +k_2 + 1$
for some proper choices of $p_{\min},q_{\min}$ and $N_p, N_q$. 
With this form, the quadrature weights can be simply set as $w_k = \Delta p \Delta q$ for all $k$.
Physically, the parameter $p$ and $q$ denotes the momentum and position of the particle,
so the range of $p$ and $q$ in the numerical discretization relies on the momentum and position of the initial state.
The choice of $\Delta p$ and $\Delta q$ should be consistent with the parameter $\epsilon$ so that oscillation in the states can be better captured. 
With the discretization of $p$ and $q$, the initial wave function is then approximated by
\begin{equation}
\label{eq_initial_decomposition}
    \psi_s^{(0)}(x) \approx  \sum_{k=1}^K w_k \psi_{k}(0,x)
\end{equation}
where 
\begin{equation}
    \psi_{k}(t,x)
    = \frac{1}{({2\pi\epsilon})^{3/2}}
    \int_{-\infty}^{\infty} a_{k}(t)
    \e^{\ii \phi_{k}(t,x,y)/\epsilon} \psi^{(0)}_s(y) \dd y
\end{equation}
 is a discrete version of \cref{eq_beam} with $(p,q)$-pair chosen to be grid point $(p_k,q_k)$, and each $\phi_k$ takes the form
\begin{equation}
    \phi_{k}(t,x,y)
    = S_{k}(t)
    + \frac{\ii}{2}(x-Q_{k}(t))^2 + P_{k}(x-Q_{k}(t))
    + \frac{\ii}{2}(y-q_k)^2 - p_k(y-q_k).
\end{equation}
The time evolution of parameters $S_k,Q_k,P_k,a_k$ for each $k$ can be parallelly obtained by solving 
\begin{equation}
\label{eq_dynamics_discrete}
\begin{split}
    &\pdv{P_k(t)}{t} = -\tilde{V}'(Q_k(t)), \\
    &\pdv{Q_k(t)}{t} = P_k(t), \\
    &\pdv{S_k(t)}{t} = \frac{1}{2}(P_k(t))^2 - \tilde{V}(Q_k(t)),\\
    &\pdv{a_k(t)}{t} = \frac{1}{2} a_k(t)
    \frac{\partial_q P_k - \partial_p Q_k \tilde{V}''(Q_k) - \ii\left(\partial_p P_k +\partial_q Q_k \tilde{V}''(Q_k)\right)}{\partial_q Q_k + \partial_p P_k + \ii (\partial_q P_k - \partial_p Q_k)}.\\
    & \pdv{t} \left(\partial_p P_k\right) =  -\partial_p Q_k{\tilde{V}}''(Q_k) , \quad 
    \pdv{t} \left(\partial_p Q_k\right) =  \partial_p P_k, \\
    & \pdv{t} \left(\partial_q P_k\right) =  -\partial_q Q_k {\tilde{V}}''(Q_k), \quad
    \pdv{t} \left(\partial_q Q_k\right) = \partial_q P_k.
    \end{split}
\end{equation}
where $\tilde{V}(x) = V(x) + \sum_{j=1}^L \frac{c_j^2}{2\omega_j^2} {x}^2$ with the initial conditions
\begin{equation}
    P_k(0) = p_{k}, 
    \quad
    Q_k(0) = q_{k},
    \quad
    S_k(0) = 0,
    \quad
    a_k(0) = \sqrt{2}.
\end{equation}

Under such discretization, $\mathcal{U}_{x_1}^-(\boldsymbol{s},t)$ is eventually approximate by 
\begin{equation}
    \mathcal{U}_{x_1}^-(\boldsymbol{s},t)
    \approx \sum_{k=1}^K w_k Q_{k}(s_m) \cdots Q_{k}(s_{i+2})
    Q_{k}(s_{i+1})
    \psi_{k}(t,x_1).
\end{equation}
The approximation of $\mathcal{U}_{x_2}^+(\boldsymbol{s},t)$ can be similarly obtained by
\begin{equation}
    \mathcal{U}_{x_2}^+
    = \sum_{k=1}^K w_k
    \psi_{k_2}^*(t,x_2) Q_k(-s_i)Q_k(-s_{i-1})\cdots Q_k(-s_1),
\end{equation} 
and one arrives at the overall approximation for the system influence functional 
\begin{equation}
\label{eq_U_approximation}
\begin{split}
    &{\bra{x_1} \mathcal{U}(-t,\boldsymbol{s},t)\ket{x_2}} \\
    \approx& \sum_{k_1=1}^K\sum_{k_2=1}^K
    w_{k_1}w_{k_2}Q_{k_1}(s_m) \cdots Q_{k_1}(s_{i+1})
    Q_{k_2}(-s_i) \cdots Q_{k_2}(-s_1)
    \psi_{k_1}(t,x_1) \psi_{k_2}^*(t,x_2)
\end{split}
\end{equation}
under the assumption that $s_i < 0 < s_{i+1}$. 
Here, we move the term $\psi_{k_1}(t,x_1)\psi_{k_2}^*(t,x_2)$ to the end of the expression because all $Q$'s are scalars.
In terms of the bath influence functional, we compute it exactly according to the original definition \cref{Lb} without approximation. 
We therefore compute the reduced density operator \cref{eq_dyson} by
\begin{equation}
\begin{split}
    &{\bra{x_1}\rho_s(t)\ket{x_2} } \\
    =& \sum_{m=0}^{\infty}
    \int_{-t\leqslant \boldsymbol{s} \leqslant t}
    \prod_{j=1}^m 
    \left(-\ii \sgn (s_j)\right)
    \sum_{k_1=1}^K\sum_{k_2=1}^K
    w_{k_1}w_{k_2}
    \prod_{j=1}^m \left(\tilde{Q}_{k_1k_2}(s_j)\right)
    \psi_{k_1}(t,x_1) \psi_{k_2}^*(t,x_2)
    \mathcal{L}_b(\boldsymbol{s})
    \dd \boldsymbol{s} \\
    =&\sum_{k_1=1}^K \sum_{k_2=1}^K
    w_{k_1}w_{k_2} 
    \left(
    \sum_{m=0}^\infty
    \int_{-t\leqslant \boldsymbol{s} \leqslant t}
    \prod_{j=1}^m \left(-\ii \sgn(s_j)\tilde{Q}_{k_1k_2}(s_j)\right)
    \mathcal{L}_b(\boldsymbol{s}) \dd \boldsymbol{s} \right)
    \psi_{k_1}(t,x_1) \psi_{k_2}^*(t,x_2)
\end{split}
\end{equation}
with 
\begin{equation}\label{Q_symmetry}
    \tilde{Q}_{k_1 k_2}(s)
    = \begin{cases}
        Q_{k_1}(s), & s>0,\\
        Q_{k_2}(-s), & s<0.
    \end{cases}
\end{equation}
Here, we exchange the summations over $k_1$ and $k_2$ with the path integral,
and the term $\psi_{k_1}(t,x_1) \psi_{k_2}^*(t,x_2)$, which is independent of $\bs$, can be moved out of the path integral.
At this point, the Dyson series for the reduced density operator for Caldeira-Leggett model is simplified as  
\begin{equation}
\label{eq_rho_varrho}
    {\bra{x_1}\rho_s(t)\ket{x_2} }
    = 
    \sum_{k_1=1}^K\sum_{k_2=1}^K 
    w_{k_1}w_{k_2}\varrho_{k_1 k_2}(t)
    \psi_{k_1}(t,x_1) \psi_{k_2}^*(t,x_2)
\end{equation}
where each coefficient $\varrho_{k_1k_2}(t)$ takes the form of a Dyson series
\begin{equation}
\label{eq_varrho}
    \varrho_{k_1 k_2}(t)
    = \sum_{m=0}^\infty 
    \int_{-t\leqslant\boldsymbol{s}\leqslant t} 
    \prod_{j=1}^m 
    \left(-\ii \sgn (s_j)
        \tilde{Q}_{k_1 k_2}(s_j)
    \right)
    \mathcal{L}_b(\boldsymbol{s})
    \dd \boldsymbol{s}
\end{equation}
with the initial condition $\varrho_{k_1 k_2}(0) = 1$ for all fixed $k_1,k_2$.

Similar to \cref{eq_diag_eq}, one may also use a diagrammatic equation to represent \cref{eq_varrho}:
\begin{equation}
\label{eq_diag_eq_CL}
    \begin{tikzpicture}[anchor=base, baseline,scale=0.6]
        \fill [black] (-1.4,-0.13) rectangle (1.4,0.13);
        \draw [thick] (-1.4,-0.18)--(-1.4,0.18); 
        \draw [thick] (1.4,-0.18)--(1.4,0.18);
        \node [below] at (-1.4,-0.18) {$-t$}; 
        \node [below] at (1.4,-0.18) {$t$}; 
    \end{tikzpicture} 
    \hspace{-5pt}=\hspace{-5pt}
    \begin{tikzpicture}[anchor=base, baseline,scale=0.6] 
        \draw [thick,dashed] (-1.4,0) -- (1.4,0);
        \draw [thick] (-1.4,-0.1)--(-1.4,0.1); 
        \draw [thick] (1.4,-0.1)--(1.4,0.1);
        \node [below] at (-1.4,-0.1) {$-t$}; 
        \node [below] at (1.4,-0.1) {$t$}; 
    \end{tikzpicture}
    \hspace{-2pt}+\hspace{-5pt}
    \begin{tikzpicture}[anchor=base, baseline,scale=0.6] 
        \draw [thick,dashed] (-1.4,0) -- (1.4,0);
        \draw [thick] (-1.4,-0.1)--(-1.4,0.1); 
        \draw [thick] (1.4,-0.1)--(1.4,0.1);
        \node [below] at (-1.4,-0.1) {$-t$}; 
        \node [below] at (1.4,-0.1) {$t$}; 
        \draw[-] (-1.4/3,0) to [bend left=60] (1.4/3,0);
        \draw plot[only marks,mark =*, mark options={color=black, scale=0.5}]coordinates {(-1.4/3,0)(1.4/3,0)};
        \node [below] at (-1.4/3,0) {$s_1$};
        \node [below] at (1.4/3,0) {$s_2$};
    \end{tikzpicture}
    \hspace{-2pt}+\hspace{-5pt}
    \begin{tikzpicture}[anchor=base, baseline,scale=0.6] 
        \draw [thick,dashed] (-1.4,0) -- (1.4,0);
        \draw [thick] (-1.4,-0.1)--(-1.4,0.1); 
        \draw [thick] (1.4,-0.1)--(1.4,0.1);
        \node [below] at (-1.6,-0.1) {$-t$}; 
        \node [below] at (1.6,-0.1) {$t$}; 
        \draw[-] (-0.84,0) to[bend left=75] (-0.28,0);
        \draw[-] (0.28,0) to[bend left=75] (0.84,0);
        \draw plot[only marks,mark =*, mark options={color=black, scale=0.5}]coordinates {(-0.84,0) (-0.28,0) (0.28,0) (0.84,0)};
        \node [below] at (-0.84,0) {$s_1$};  
        \node [below] at (-0.28,0) {$s_2$};   
        \node [below] at (0.28,0) {$s_3$};
        \node [below] at (0.84,0) {$s_4$};
    \end{tikzpicture}
    \hspace{-2pt}+\hspace{-5pt}
    \begin{tikzpicture}[anchor=base, baseline,scale=0.6] 
        \draw [thick,dashed] (-1.4,0) -- (1.4,0);
        \draw [thick] (-1.4,-0.1)--(-1.4,0.1); 
        \draw [thick] (1.4,-0.1)--(1.4,0.1);
        \node [below] at (-1.6,-0.1) {$-t$}; 
        \node [below] at (1.6,-0.1) {$t$}; 
        \draw[-] (-0.84,0) to[bend left=75] (0.28,0);
        \draw[-] (-0.28,0) to[bend left=75] (0.84,0);
        \draw plot[only marks,mark =*, mark options={color=black, scale=0.5}]coordinates {(-0.84,0) (-0.28,0) (0.28,0) (0.84,0)};
        \node [below] at (-0.84,0) {$s_1$};  
        \node [below] at (-0.28,0) {$s_2$};   
        \node [below] at (0.28,0) {$s_3$};
        \node [below] at (0.84,0) {$s_4$};
    \end{tikzpicture}
    \hspace{-2pt}+\hspace{-5pt}
    \begin{tikzpicture}[anchor=base, baseline,scale=0.6] 
        \draw [thick,dashed] (-1.4,0) -- (1.4,0);
        \draw [thick] (-1.4,-0.1)--(-1.4,0.1); 
        \draw [thick] (1.4,-0.1)--(1.4,0.1);
        \node [below] at (-1.6,-0.1) {$-t$}; 
        \node [below] at (1.6,-0.1) {$t$}; 
        \draw[-] (-0.84,0) to[bend left=75] (0.84,0);
        \draw[-] (-0.28,0) to[bend left=75] (0.28,0);
        \draw plot[only marks,mark =*, mark options={color=black, scale=0.5}]coordinates {(-0.84,0) (-0.28,0) (0.28,0) (0.84,0)};
        \node [below] at (-0.84,0) {$s_1$};  
        \node [below] at (-0.28,0) {$s_2$};   
        \node [below] at (0.28,0) {$s_3$};
        \node [below] at (0.84,0) {$s_4$};
    \end{tikzpicture}
    \hspace{-2pt}+\cdots.
\end{equation}
The meaning of diagrams above is slightly different from \cref{eq_diag_eq}: the bold line above now denotes $\varrho_{k_1 k_2}(t)$ in Dyson series formulation as defined in \cref{eq_varrho}. On the right side of the equation, each diagram again represents one integral in \cref{eq_varrho}. The arcs still stand for the two-point correlations. The dots now denotes the beam centers $\tilde{Q}_{k_1k_2}(s_j)$. Since the bare propagators does not exist in $\varrho_{k_1 k_2}(t)$, we change the solid segments in \cref{eq_diag_eq} to dashed segments which are identity operators in system space.

\begin{remark}
We would like to emphasize here that \Cref{prop_scale_coefficient} is the key proposition that allows us to apply FGA to open quantum systems.
It essentially states that for a single Gaussian wave packet, the effect of the harmonic bath is merely a constant factor.
The interaction with the bath does not change the position $Q$, the momentum $P$ or the phase $S$.
As a result, although the evolution of a wave packet is a path integral, the function $\phi(t,x,y,p,q)$ inside all terms in the integral is identical, so that we can apply the path integral to the amplitude only.
\end{remark}
\subsubsection{Inchworm method}
As we have discussed at the end of \cref{sec_preliminaries}, instead of evaluating $\varrho_{k_1k_2}(t)$ by a direct sum over the dashed diagrams in \cref{eq_diag_eq_CL} which will lead to a severe sign problem, we employ the key idea of inchworm method to compute $\varrho_{k_1k_2}(t)$ in this section. Specifically, we aim to reduce the number of diagrams by regrouping the dashed diagrams to bold lines. To start, we generalize the definition of $\varrho_{k_1 k_2}(t)$ to the following full propagator 
\begin{equation} \label{eq:G_Dyson}
    G_{k_1 k_2}(s_\ii,s_\ff)
    = \sum_{m=0}^\infty 
    \int_{s_\ii \leqslant\boldsymbol{s}\leqslant s_\ff} 
    \prod_{j=1}^m 
    \left(-\ii \sgn (s_j)
        \tilde{Q}_{k_1 k_2}(s_j)
    \right)
    \mathcal{L}_b(\boldsymbol{s})
    \dd \boldsymbol{s}
\end{equation}
defined in non-symmetric time interval $[s_\ii,s_\ff] \subset [-t,t]$, which also takes the form of Dyson series and thus can again be represented as a sum of dashed diagrams 
\begin{equation}
\label{eq_diag_eq_CL_G}
    \begin{tikzpicture}[anchor=base, baseline,scale=0.6]
        \fill [black] (-1.4,-0.13) rectangle (1.4,0.13);
        \draw [thick] (-1.4,-0.18)--(-1.4,0.18); 
        \draw [thick] (1.4,-0.18)--(1.4,0.18);
        \node [below] at (-1.4,-0.18) {$s_\ii$}; 
        \node [below] at (1.4,-0.18) {$s_\ff$}; 
    \end{tikzpicture} 
    \hspace{-5pt}=\hspace{-5pt}
    \begin{tikzpicture}[anchor=base, baseline,scale=0.6] 
        \draw [thick,dashed] (-1.4,0) -- (1.4,0);
        \draw [thick] (-1.4,-0.1)--(-1.4,0.1); 
        \draw [thick] (1.4,-0.1)--(1.4,0.1);
        \node [below] at (-1.4,-0.1) {$s_\ii$}; 
        \node [below] at (1.4,-0.1) {$s_\ff$}; 
    \end{tikzpicture}
    \hspace{-2pt}+\hspace{-5pt}
    \begin{tikzpicture}[anchor=base, baseline,scale=0.6] 
        \draw [thick,dashed] (-1.4,0) -- (1.4,0);
        \draw [thick] (-1.4,-0.1)--(-1.4,0.1); 
        \draw [thick] (1.4,-0.1)--(1.4,0.1);
        \node [below] at (-1.4,-0.1) {$s_\ii$}; 
        \node [below] at (1.4,-0.1) {$s_\ff$}; 
        \draw[-] (-1.4/3,0) to [bend left=60] (1.4/3,0);
        \draw plot[only marks,mark =*, mark options={color=black, scale=0.5}]coordinates {(-1.4/3,0)(1.4/3,0)};
        \node [below] at (-1.4/3,0) {$s_1$};
        \node [below] at (1.4/3,0) {$s_2$};
    \end{tikzpicture}
    \hspace{-2pt}+\hspace{-5pt}
    \begin{tikzpicture}[anchor=base, baseline,scale=0.6] 
        \draw [thick,dashed] (-1.4,0) -- (1.4,0);
        \draw [thick] (-1.4,-0.1)--(-1.4,0.1); 
        \draw [thick] (1.4,-0.1)--(1.4,0.1);
        \node [below] at (-1.6,-0.1) {$s_\ii$}; 
        \node [below] at (1.6,-0.1) {$s_\ff$}; 
        \draw[-] (-0.84,0) to[bend left=75] (-0.28,0);
        \draw[-] (0.28,0) to[bend left=75] (0.84,0);
        \draw plot[only marks,mark =*, mark options={color=black, scale=0.5}]coordinates {(-0.84,0) (-0.28,0) (0.28,0) (0.84,0)};
        \node [below] at (-0.84,0) {$s_1$};  
        \node [below] at (-0.28,0) {$s_2$};   
        \node [below] at (0.28,0) {$s_3$};
        \node [below] at (0.84,0) {$s_4$};
    \end{tikzpicture}
    \hspace{-2pt}+\hspace{-5pt}
    \begin{tikzpicture}[anchor=base, baseline,scale=0.6] 
        \draw [thick,dashed] (-1.4,0) -- (1.4,0);
        \draw [thick] (-1.4,-0.1)--(-1.4,0.1); 
        \draw [thick] (1.4,-0.1)--(1.4,0.1);
        \node [below] at (-1.6,-0.1) {$s_\ii$}; 
        \node [below] at (1.6,-0.1) {$s_\ff$}; 
        \draw[-] (-0.84,0) to[bend left=75] (0.28,0);
        \draw[-] (-0.28,0) to[bend left=75] (0.84,0);
        \draw plot[only marks,mark =*, mark options={color=black, scale=0.5}]coordinates {(-0.84,0) (-0.28,0) (0.28,0) (0.84,0)};
        \node [below] at (-0.84,0) {$s_1$};  
        \node [below] at (-0.28,0) {$s_2$};   
        \node [below] at (0.28,0) {$s_3$};
        \node [below] at (0.84,0) {$s_4$};
    \end{tikzpicture}
    \hspace{-2pt}+\hspace{-5pt}
    \begin{tikzpicture}[anchor=base, baseline,scale=0.6] 
        \draw [thick,dashed] (-1.4,0) -- (1.4,0);
        \draw [thick] (-1.4,-0.1)--(-1.4,0.1); 
        \draw [thick] (1.4,-0.1)--(1.4,0.1);
        \node [below] at (-1.6,-0.1) {$s_\ii$}; 
        \node [below] at (1.6,-0.1) {$s_\ff$}; 
        \draw[-] (-0.84,0) to[bend left=75] (0.84,0);
        \draw[-] (-0.28,0) to[bend left=75] (0.28,0);
        \draw plot[only marks,mark =*, mark options={color=black, scale=0.5}]coordinates {(-0.84,0) (-0.28,0) (0.28,0) (0.84,0)};
        \node [below] at (-0.84,0) {$s_1$};  
        \node [below] at (-0.28,0) {$s_2$};   
        \node [below] at (0.28,0) {$s_3$};
        \node [below] at (0.84,0) {$s_4$};
    \end{tikzpicture}
    \hspace{-2pt}+\cdots.
\end{equation}
In the equation above, the full propagator $G_{k_1 k_2}(s_\ii,s_\ff)$ is represented as bold line segment with two ends $s_\ii$ and $s_\ff$, which can be considered as a partial sum of total Dyson series \cref{eq_varrho}. According to the definition, one can recover the desired $\varrho_{k_1 k_2}(t)$ by 
\begin{equation}\label{recover_relation}
    \varrho_{k_1 k_2}(t) = G_{k_1 k_2}(-t,t).
\end{equation}
The central idea of inchworm method is to compute longer bold segments $G_{k_1 k_2}(s_{\ii},s_{\ff})$ from shorter ones iteratively until arriving at the full bold line defined in \cref{eq_diag_eq_CL}. 
To realize such iterative process, we will need an evolution of the full propagator. One option is to derive the governing equation for the full propagator \cite{cai2020inchworm,cai2022fast,cai2023bold}. To do this, we begin with the derivative 
 \begin{equation}
\label{eq_differential_equation}
    \pdv{G_{k_1k_2}(s_\ii,s_\ff)}{s_\ff}
    = \sum_{m=1}^\infty \int_{s_\ii \leqslant \boldsymbol{s}\leqslant s_\ff} 
        \prod_{j=1}^{m+1} 
        \left(-\ii \sgn(s_j) \tilde{Q}_{k_1k_2}(s_j)
        \right)
        \mathcal{L}_b([\boldsymbol{s},s_\ff])
    \dd \boldsymbol{s}
\end{equation}
where $\boldsymbol{s}=(s_1,\cdots,s_m)$ is the integral variable, $[\boldsymbol{s},s_\ff]$ denotes the vector $(s_1,\cdots,s_m,s_\ff)$ and $s_{m+1}=s_\ff$ is used in the sake of simplicity of the notation. 
Note that \cref{eq_differential_equation} does not hold for $s_\ff=0$ because the integrand is discontinuous as a result of \cref{Q_symmetry}.
However, the quantities $G_{k_1k_2}(s_\ii,s_\ff)$ are continuous for all $s_\ii \leqslant s_\ff$.
Such formula can again be written diagrammatically as 
\begin{equation}
\label{eq_diag_eq_CL_G_div}
    \pdv{s_\ff}
    \begin{tikzpicture}[anchor=base, baseline,scale=0.6]
        \fill [black] (-1.4,-0.13) rectangle (1.4,0.13);
        \draw [thick] (-1.4,-0.18)--(-1.4,0.18); 
        \draw [thick] (1.4,-0.18)--(1.4,0.18);
        \node [below] at (-1.4,-0.18) {$s_\ii$}; 
        \node [below] at (1.4,-0.18) {$s_\ff$}; 
    \end{tikzpicture} 
    \hspace{-5pt}=\hspace{-5pt}
    \begin{tikzpicture}[anchor=base, baseline,scale=0.6] 
        \draw [thick,dashed] (-1.4,0) -- (1.4,0);
        \draw [thick] (-1.4,-0.1)--(-1.4,0.1); 
        \draw [thick] (1.4,-0.1)--(1.4,0.1);
        \node [below] at (-1.4,-0.1) {$s_\ii$}; 
        \node [below] at (1.4,-0.1) {$s_\ff$}; 
        \draw[-] (0,0) to [bend left=60] (1.4,0);
        \draw plot[only marks,mark =*, mark options={color=black, scale=0.5}]coordinates {(0,0)(1.4,0)};
        \node [below] at (0,0) {$s_1$};
    \end{tikzpicture}
    \hspace{-2pt}+\hspace{-5pt}
    \begin{tikzpicture}[anchor=base, baseline,scale=0.6] 
        \draw [thick,dashed] (-1.4,0) -- (1.4,0);
        \draw [thick] (-1.4,-0.1)--(-1.4,0.1); 
        \draw [thick] (1.4,-0.1)--(1.4,0.1);
        \node [below] at (-1.6,-0.1) {$s_\ii$}; 
        \node [below] at (1.6,-0.1) {$s_\ff$}; 
        \draw[-] (-0.7,0) to[bend left=75] (-0,0);
        \draw[-] (0.7,0) to[bend left=75] (1.4,0);
        \draw plot[only marks,mark =*, mark options={color=black, scale=0.5}]coordinates {(-0.7,0) (-0,0) (0.7,0)};
        \node [below] at (-0.7,0) {$s_1$};  
        \node [below] at (-0,0) {$s_2$};   
        \node [below] at (0.7,0) {$s_3$};
    \end{tikzpicture}
    \hspace{-2pt}+\hspace{-5pt}
    \begin{tikzpicture}[anchor=base, baseline,scale=0.6] 
        \draw [thick,dashed] (-1.4,0) -- (1.4,0);
        \draw [thick] (-1.4,-0.1)--(-1.4,0.1); 
        \draw [thick] (1.4,-0.1)--(1.4,0.1);
        \node [below] at (-1.6,-0.1) {$s_\ii$}; 
        \node [below] at (1.6,-0.1) {$s_\ff$}; 
        \draw[-] (-0.7,0) to[bend left=75] (0.7,0);
        \draw[-] (0,0) to[bend left=75] (1.4,0);
        \draw plot[only marks,mark =*, mark options={color=black, scale=0.5}]coordinates {(-0.7,0) (-0,0) (0.7,0)};
        \node [below] at (-0.7,0) {$s_1$};  
        \node [below] at (-0,0) {$s_2$};   
        \node [below] at (0.7,0) {$s_3$};
    \end{tikzpicture}
    \hspace{-2pt}+\hspace{-5pt}
    \begin{tikzpicture}[anchor=base, baseline,scale=0.6] 
        \draw [thick,dashed] (-1.4,0) -- (1.4,0);
        \draw [thick] (-1.4,-0.1)--(-1.4,0.1); 
        \draw [thick] (1.4,-0.1)--(1.4,0.1);
        \node [below] at (-1.6,-0.1) {$s_\ii$}; 
        \node [below] at (1.6,-0.1) {$s_\ff$}; 
        \draw[-] (-0.7,0) to[bend left=75] (1.4,0);
        \draw[-] (0,0) to[bend left=75] (0.7,0);
        \draw plot[only marks,mark =*, mark options={color=black, scale=0.5}]coordinates {(-0.7,0) (-0,0) (0.7,0)};
        \node [below] at (-0.7,0) {$s_1$};  
        \node [below] at (-0,0) {$s_2$};   
        \node [below] at (0.7,0) {$s_3$};
    \end{tikzpicture}
    \hspace{-2pt}+\cdots.
\end{equation}
From this diagrammatic equation, one can easily observe that the last time point is now fixed at $s_\ff$ in each diagram on the right side and thus the degrees of freedom decrease by 1 due to taking derivative. 

At this point, we want to replace every diagram on the right side of \cref{eq_diag_eq_CL_G_div} by ``bo{l}difying" all the dashed segments into the corresponding bold line segments. Such operation mathematically is to insert the full propagator $G_{k_1k_2}(s_{j-1},s_{j})$ between every $\tilde{Q}_{k_1k_2}(s_j)$ and $\tilde{Q}_{k_1k_2}(s_{j-1})$. Each of the bo{l}dified segments $G_{k_1k_2}(s_{j-1},s_{j})$ is again a summation of infinite dashed diagrams according to the expansion \cref{eq_diag_eq_CL_G}. During this replacement, we need to remove the over-counted dashed diagrams to maintain the equality. For example, we replace the first diagram in expansion \cref{eq_diag_eq_CL_G_div} by 
\begin{equation}
\scalebox{.75}{$\displaystyle%
      \begin{tikzpicture}[anchor=base, baseline]
        \draw [thick] (-1.4,0)--(1.4,0);
        \draw[-] (-0.8,0.1) to[bend left=60] (1.4,0.1);
        \fill [black] (-1.4,-0.1) rectangle (1.4,0.1);
        \node [below] at (-1.4,-0.18) {$s_{\ii}$};
        \node [below] at (1.4,-0.18) {$s_{\ff}$};
        \draw [thick] (-1.4,-0.12)--(-1.4,0.12);
        \draw [thick] (1.4,-0.12)--(1.4,0.12);
        \draw [thick,white] (-0.8,0.1) -- (-0.8,-0.1);
        \node [below] at (-0.8,-0.1) {$s_1$};
      \end{tikzpicture}  =  \begin{tikzpicture}[anchor=base, baseline]  
        \draw [thick,dashed] (-1.4,0) -- (1.4,0);
        \draw [thick] (-1.4,-0.1)--(-1.4,0.1); \draw [thick] (1.4,-0.1)--(1.4,0.1);
        \node [below] at (-1.4,-0.1) {$s_{\ii}$}; 
        \node [below] at (1.4,-0.1) {$s_{\ff}$}; 
        \draw[-] (0,0) to[bend left=75] (1.4,0);
        \draw plot[only marks,mark =*, mark options={ scale=0.5}]coordinates {(0,0)};
        \node [below] at (0,0) {$s_1$};
      \end{tikzpicture}  + \begin{tikzpicture}[anchor=base, baseline] 
        \draw [thick,dashed] (-1.4,0) -- (1.4,0);
        \draw [thick] (-1.4,-0.1)--(-1.4,0.1); \draw [thick] (1.4,-0.1)--(1.4,0.1);
        \node [below] at (-1.4,-0.1) {$s_{\ii}$}; 
        \node [below] at (1.4,-0.1) {$s_{\ff}$}; 
        \draw[-] (0,0) to[bend left=75] (1.4,0);
        \draw[-] (-1,0) to[bend left=75] (-0.4,0);
        \draw plot[only marks,mark =*, mark options={ scale=0.5}]coordinates {(-1,0)(-0.4,0)(0,0)};
        \node [below] at (-1,0) {$s_1$};
        \node [below] at (-0.4,0) {$s_2$};
        \node [below] at (0,0) {$s_3$};
      \end{tikzpicture} 
         +  \begin{tikzpicture}[anchor=base, baseline] 
        \draw [thick,dashed] (-1.4,0) -- (1.4,0);
        \draw [thick] (-1.4,-0.1)--(-1.4,0.1); \draw [thick] (1.4,-0.1)--(1.4,0.1);
        \node [below] at (-1.4,-0.1) {$s_{\ii}$}; 
        \node [below] at (1.4,-0.1) {$s_{\ff}$}; 
        \draw[-] (-0.9,0) to[bend left=75] (1.4,0);
        \draw[-] (-0.3,0) to[bend left=75] (0.8,0);
        \draw plot[only marks,mark =*, mark options={ scale=0.5}]coordinates {(-0.9,0) (-0.3,0) (0.8,0)};
        \node [below] at (-0.9,0) {$s_1$};  \node [below] at (-0.3,0) {$s_2$};   \node [below] at (0.8,0) {$s_3$};
      \end{tikzpicture} + \begin{tikzpicture}[anchor=base, baseline] 
        \draw [thick,dashed] (-1.4,0) -- (1.4,0);
        \draw [thick] (-1.4,-0.1)--(-1.4,0.1); \draw [thick] (1.4,-0.1)--(1.4,0.1);
        \node [below] at (-1.4,-0.1) {$s_{\ii}$}; 
        \node [below] at (1.4,-0.1) {$s_{\ff}$}; 
        \draw[-] (0,0) to[bend left=75] (1.4,0);
        \draw[-] (-1,0) to[bend left=75] (-0.4,0);
        \draw[-] (0.4,0) to[bend left=75] (1,0);
        \draw plot[only marks,mark =*, mark options={ scale=0.5}]coordinates
          {(-1,0)(-0.4,0)(0,0)(0.4,0)(1,0)};
        \node [below] at (-1,0) {$s_1$};
        \node [below] at (-0.4,0) {$s_2$};
        \node [below] at (0,0) {$s_3$};
        \node [below] at (0.4,0) {$s_4$};
        \node [below] at (1,0) {$s_5$};
      \end{tikzpicture}  + \cdots  $}
      \label{1st_diagram_update}
\end{equation}
Notice that in \cref{1st_diagram_update}, the first three diagrams are identical to the first, second and fourth diagrams in the expansion \cref{eq_diag_eq_CL_G_div} and thus these three diagrams (as well as infinitely many others that appear in both \cref{eq_diag_eq_CL_G_div} and \cref{1st_diagram_update} but are not written down explicitly) should be removed from the original expansion \cref{eq_diag_eq_CL_G_div} once this update is done. For those diagrams which are not removed during this update (e.g the third diagram in the expansion \cref{eq_diag_eq_CL_G_div}), we will update them by again ``bo{l}difying" their dashed segments. We repeat such process and after replacing all the dashed diagrams, we reach to a new reformulation of \cref{eq_diag_eq_CL_G_div}:
\begin{multline}\label{eq_diag_integro_differential_eq}
 \pdv{s_\ff} 
                \begin{tikzpicture}[anchor=base, baseline=0,scale=0.6]
                    \draw [thick] (-1.4,0)--(1.4,0);
                    \fill [black] (-1.4,-0.1) rectangle (1.4,0.1);
                    \draw [thick] (-1.4,-0.12)--(-1.4,0.12);
                    \draw [thick] (1.4,-0.12)--(1.4,0.12);
                    \node [below] at (-1.4,-0.12) {$s_\ii$};
                    \node [below] at (1.4,-0.12) {$s_\ff$};
                \end{tikzpicture} = 
          \begin{tikzpicture}[anchor=base, baseline,scale=0.6]
\draw [thick] (-1.4,0)--(1.4,0);
\draw[-] (-0.7,0.1) to[bend left=75] (1.4,0.1);
\fill [black] (-1.4,-0.1) rectangle (1.4,0.1);
\node [below] at (-1.4,-0.18) {$\Si$};
\node [below] at (1.4,-0.18) {$\Sf$};
\draw [thick] (-1.4,-0.12)--(-1.4,0.12);
\draw [thick] (1.4,-0.12)--(1.4,0.12);
\draw [thick,white] (-0.7,0.1) -- (-0.7,-0.1);
\node [below] at (-0.7,-0.1) {$s_1$};
\end{tikzpicture}
+ 
\begin{tikzpicture}[anchor=base, baseline,scale=0.6]
\draw [thick] (-1.4,0)--(1.4,0);
\draw[-] (-0.7,0.1) to[bend left=75] (0.7,0.1);
\draw[-] (0,0.1) to[bend left=75] (1.4,0.1);
\fill [black] (-1.4,-0.1) rectangle (1.4,0.1);
\node [below] at (-1.4,-0.18) {$\Si$};
\node [below] at (1.4,-0.18) {$\Sf$};
\draw [thick] (-1.4,-0.12)--(-1.4,0.12);
\draw [thick] (1.4,-0.12)--(1.4,0.12);
\draw [thick,white] (-0.7,0.1) -- (-0.7,-0.1);
\draw [thick,white] (0.7,0.1) -- (0.7,-0.1);
\draw [thick,white] (0,0.1) -- (0,-0.1);
\node [below] at (-0.7,-0.1) {$s_1$};\node [below] at (0,-0.1) {$s_2$};
\node [below] at (0.7,-0.1) {$s_3$};
\end{tikzpicture} + \\
  + 
\begin{tikzpicture}[anchor=base, baseline,scale=0.6]
\draw [thick] (-1.5,0)--(1.5,0);
\fill [black] (-1.5,-0.1) rectangle (1.5,0.1);
\draw[-] (-1,0.1) to[bend left=75] (0.5,0.1);
\draw[-] (-0.5,0.1) to[bend left=75] (1,0.1);
\draw[-] (0,0.1) to[bend left=75] (1.5,0.1);
\node [below] at (-1.5,-0.18) {$\Si$};
\node [below] at (1.5,-0.18) {$\Sf$};
\draw [thick] (-1.5,-0.12)--(-1.5,0.12);
\draw [thick] (1.5,-0.12)--(1.5,0.12);
\draw [thick,white] (-1,0.1) -- (-1,-0.1);
\draw [thick,white] (-0.5,0.1) -- (-0.5,-0.1);
\draw [thick,white] (-0,0.1) -- (-0,-0.1);
\draw [thick,white] (0.5,0.1) -- (0.5,-0.1);
\draw [thick,white] (1,0.1) -- (1,-0.1);
\node [below] at (-1,-0.1) {$s_1$};\node [below] at (-0.5,-0.1) {$s_2$};
\node [below] at (0,-0.1) {$s_3$};\node [below] at (0.5,-0.1) {$s_4$};
\node [below] at (1,-0.1) {$s_5$};
\end{tikzpicture} 
+ 
\begin{tikzpicture}[anchor=base, baseline,scale=0.6]
\draw [thick] (-1.5,0)--(1.5,0);
\fill [black] (-1.5,-0.1) rectangle (1.5,0.1);
\draw[-] (-1,0.1) to[bend left=75] (0.5,0.1);
\draw[-] (-0.5,0.1) to[bend left=75] (1.5,0.1);
\draw[-] (0,0.1) to[bend left=75] (1,0.1);
\node [below] at (-1.5,-0.18) {$\Si$};
\node [below] at (1.5,-0.18) {$\Sf$};
\draw [thick] (-1.5,-0.12)--(-1.5,0.12);
\draw [thick] (1.5,-0.12)--(1.5,0.12);
\draw [thick,white] (-1,0.1) -- (-1,-0.1);
\draw [thick,white] (-0.5,0.1) -- (-0.5,-0.1);
\draw [thick,white] (-0,0.1) -- (-0,-0.1);
\draw [thick,white] (0.5,0.1) -- (0.5,-0.1);
\draw [thick,white] (1,0.1) -- (1,-0.1);
\node [below] at (-1,-0.1) {$s_1$};\node [below] at (-0.5,-0.1) {$s_2$};
\node [below] at (0,-0.1) {$s_3$};\node [below] at (0.5,-0.1) {$s_4$};
\node [below] at (1,-0.1) {$s_5$};
\end{tikzpicture} 
+ 
\begin{tikzpicture}[anchor=base, baseline,scale=0.6]
\draw [thick] (-1.5,0)--(1.5,0);
\fill [black] (-1.5,-0.1) rectangle (1.5,0.1);
\draw[-] (-1,0.1) to[bend left=75] (0,0.1);
\draw[-] (-0.5,0.1) to[bend left=75] (1,0.1);
\draw[-] (0.5,0.1) to[bend left=75] (1.5,0.1);
\node [below] at (-1.5,-0.18) {$\Si$};
\node [below] at (1.5,-0.18) {$\Sf$};
\draw [thick] (-1.5,-0.12)--(-1.5,0.12);
\draw [thick] (1.5,-0.12)--(1.5,0.12);
\draw [thick,white] (-1,0.1) -- (-1,-0.1);
\draw [thick,white] (-0.5,0.1) -- (-0.5,-0.1);
\draw [thick,white] (-0,0.1) -- (-0,-0.1);
\draw [thick,white] (0.5,0.1) -- (0.5,-0.1);
\draw [thick,white] (1,0.1) -- (1,-0.1);
\node [below] at (-1,-0.1) {$s_1$};\node [below] at (-0.5,-0.1) {$s_2$};
\node [below] at (0,-0.1) {$s_3$};\node [below] at (0.5,-0.1) {$s_4$};
\node [below] at (1,-0.1) {$s_5$};
\end{tikzpicture} 
 + \begin{tikzpicture}[anchor=base, baseline,scale=0.6]
\draw [thick] (-1.5,0)--(1.5,0);
\fill [black] (-1.5,-0.1) rectangle (1.5,0.1);
\draw[-] (-1,0.1) to[bend left=75] (1,0.1);
\draw[-] (-0.5,0.1) to[bend left=75] (0.5,0.1);
\draw[-] (0,0.1) to[bend left=75] (1.5,0.1);
\node [below] at (-1.5,-0.18) {$\Si$};
\node [below] at (1.5,-0.18) {$\Sf$};
\draw [thick] (-1.5,-0.12)--(-1.5,0.12);
\draw [thick] (1.5,-0.12)--(1.5,0.12);
\draw [thick,white] (-1,0.1) -- (-1,-0.1);
\draw [thick,white] (-0.5,0.1) -- (-0.5,-0.1);
\draw [thick,white] (-0,0.1) -- (-0,-0.1);
\draw [thick,white] (0.5,0.1) -- (0.5,-0.1);
\draw [thick,white] (1,0.1) -- (1,-0.1);
\node [below] at (-1,-0.1) {$s_1$};\node [below] at (-0.5,-0.1) {$s_2$};
\node [below] at (0,-0.1) {$s_3$};\node [below] at (0.5,-0.1) {$s_4$};
\node [below] at (1,-0.1) {$s_5$};
\end{tikzpicture} 
+ \cdots
\end{multline}
We then immediately reach to the governing equation for the full propagator by writing \cref{eq_diag_integro_differential_eq} into the mathematical representation
\begin{equation}
\label{eq_integro_differential_equation}
    \pdv{G_{k_1k_2}(s_\ii,s_\ff)}{s_\ff}
    = \sum_{m=1}^\infty \int_{s_\ii \leqslant \boldsymbol{s}\leqslant s_\ff} 
        \prod_{j=1}^{m+1} 
        \left(-\ii \sgn(s_j) \tilde{Q}_{k_1k_2}(s_j)
            G_{k_1k_2}(s_{j-1},s_j)
        \right)
        \mathcal{L}_b^c([\boldsymbol{s},s_\ff])
    \dd \boldsymbol{s} 
\end{equation}
for $-t \leqslant s_{\ii} \leqslant s_{\ff} \leqslant t$
with the condition $G_{k_1k_2}(s',s') = 1$ for all $s' \in [-t,t]$. Above $\boldsymbol{s} = (s_1,\cdots,s_m)$ is the integral variable with $s_0=s_\ii$, $s_{m+1}=s_\ff$ defined for convenient purposes. The bath influence functional $\mathcal{L}_b^c(\boldsymbol{s})$ is now defined by
\begin{equation}
    \mathcal{L}_b^c(\boldsymbol{s})
    = \begin{dcases}
        0, & \text{if $m$ is odd,} \\
        \sum_{\mathfrak{q} \in \mathscr{Q}^c(\boldsymbol{s})} 
        \prod_{(\tau_1,\tau_2) \in \mathfrak{q}} B(\tau_1, \tau_2), & \text{if $m$ is even.}
    \end{dcases}
\end{equation}
where $\mathscr{Q}^c(\boldsymbol{s})$ is a subset of $\mathscr{Q}(\boldsymbol{s})$ that contains only ``linked" diagrams, meaning that any two time points can be connected with each other using the arcs as ``bridges'' in such diagrams. Rigorous mathematical proof can be found in \cite[Section 3.3]{cai2020inchworm}, showing that the expansion \cref{eq_diag_integro_differential_eq} includes all dashed diagrams without over-counting. Compared with direct sum of dashed diagrams \cref{eq_diag_eq_CL}, working with the resummation \cref{eq_diag_integro_differential_eq} is more efficient as it has faster convergence with respect to $m$ since each diagram in the new expansion \cref{eq_diag_integro_differential_eq} includes infinite dashed diagrams. Furthermore, the reduction in the number of diagrams in \cref{eq_diag_integro_differential_eq} (e.g, number of diagrams with two arcs decrease from 3 in \cref{eq_diag_eq_CL} to 1 in \cref{eq_diag_integro_differential_eq}) also makes the evaluation of bath influence functional $\mathcal{L}_b^c$ cheaper than $\mathcal{L}_b$.

To solve the equation \cref{eq_diag_integro_differential_eq} numerically, one 
can use Runge-Kutta methods to solve it as an ODE in the $s_{\ff}$ direction. In each time step, we truncate the series on the right side by a chosen integer $\bar{M}$ and evaluate the integrals using numerical quadratures \cite{wang2023real,sun2024simulation} or Monte Carlo methods \cite{cai2020inchworm,cai2023bold}. In this work, we use the second-order Heun's method as the time integrator and trapezoidal rule for numerical integration.
Upon time discretization, we aim to compute $G_{k_1k_2}(-n\Delta t, n\Delta t)$ for $n=0,\cdots,N$.
Based on \cref{eq_integro_differential_equation}, to obtain the value of $G_{k_1k_2}(l_1\Delta t, (l_2+1)\Delta t)$, we need the knowledge of all the values of $G_{k_1k_2}(l'\Delta t, l''\Delta t)$ with $l_1\leqslant l' \leqslant l'' \leqslant l_2$ to evaluate of the integrand. Therefore, one should implement the iteration according to a proper order. In general, one should compute those full propagators $G_{k_1k_2}(l'\Delta t, l''\Delta t)$ with smaller value of $l'' - l'$ first. For example, we can compute all full propagators in the table below from top to bottom and left to right: 
\begin{equation}\label{tab:G}
\renewcommand\arraystretch{1.5}
\begin{matrix*}[l]
    &\G{-N}{-N} & \G{-N+1}{-N+1} & \cdots & \G{N-2}{N-2} & \G{N-1}{N-1} & \G{N}{N} \\
    &\G{-N}{-N+1} & \G{-N+1}{-N+2} & \cdots & \G{N-2}{N-1} & \G{N-1}{N} \\
    &\G{-N}{-N+2} & \G{-N+1}{-N+3} & \cdots & \G{N-2}{N} \\
    &\vdots &\vdots & \iddots \\
    &\G{-N}{N-1} & \G{-N+1}{N} \\
    &\G{-N}{N}
\end{matrix*}
\end{equation}
where $\G{j_1}{j_2}$ is the numerical results of $G_{k_1k_2}(j_1\Delta t, j_2\Delta t)$.
The values in the first row are given by the initial value that $G_{k_1k_2}(s',s')=1$. Once all full propagators in the table are obtained, we compute the values of bath influence functional are computed exactly and get the values of $\tilde{Q}_{k_1k_2}(s_j)$ from the trajectories in \cref{eq_dynamics_discrete}, and the value of reduce density operator on grid points can be ensured by \cref{eq_rho_varrho}. We remark that $\tilde{Q}_{k_1k_2}(t)$ has a discontinuity at $t = 0$. During the calculation, we split all integral domains at $t = 0$ and the values of $\tilde{Q}_{k_1k_2}(0)$ takes either $Q_{k_1}(0)$ or $Q_{k_2}(0)$ based on different subdomains.

Before ending the discussion on inchworm method, we would like to introduce the following symmetric property of the coefficient $\varrho_{k_1k_2}(t)$ which can reduce the overall computational cost by half:
\begin{proposition}
    Given any $k_1,k_2 = 1,\cdots,K$, we have  \begin{equation}
\label{eq_varrho_conjugate_symmetry}
     \varrho_{k_1 k_2}(t)
    =  \varrho_{k_2 k_1}^*(t).
\end{equation}
\end{proposition}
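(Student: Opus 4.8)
The plan is to prove \cref{eq_varrho_conjugate_symmetry} term by term in the Dyson series \cref{eq_varrho}, via the change of variables $s_j \mapsto -s_{m+1-j}$, $j = 1,\dots,m$, combined with complex conjugation. This substitution maps the simplex $\{-t \leqslant \bs \leqslant t\}$ onto itself and has Jacobian of absolute value one, so it preserves $\dd\bs$. Writing the new variables as $\boldsymbol{u} = (u_1,\dots,u_m)$ with $u_l = -s_{m+1-l}$ (so $\boldsymbol{u}$ is again ascending and $s_k = -u_{m+1-k}$), I would show that under this map the integrand of the $m$-th term of $\varrho_{k_1 k_2}(t)$ turns into the complex conjugate of the integrand of the $m$-th term of $\varrho_{k_2 k_1}(t)$. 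Since $\mathcal{L}_b$ vanishes for odd $m$, only even $m$ need be treated, and the termwise argument is legitimate because the series is the definition of $\varrho$.

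Three ingredients are needed. First, the beam centers $Q_k(t)$ are real: they solve the real ODE system \cref{eq_dynamics_discrete} with real data $p_k,q_k$ and real potential $\tilde V$. Hence $\tilde Q_{k_1 k_2}(s)$ is real, and \cref{Q_symmetry} gives $\tilde Q_{k_1 k_2}(-u) = \tilde Q_{k_2 k_1}(u)$; together with $\sgn(-u) = -\sgn(u)$ this yields
\begin{equation*}
    -\ii\,\sgn(s_{m+1-l})\,\tilde Q_{k_1 k_2}(s_{m+1-l})
    = \ii\,\sgn(u_l)\,\tilde Q_{k_2 k_1}(u_l)
    = \bigl( -\ii\,\sgn(u_l)\,\tilde Q_{k_2 k_1}(u_l) \bigr)^*,
\end{equation*}
so $\prod_{j=1}^m \bigl(-\ii\,\sgn(s_j)\,\tilde Q_{k_1 k_2}(s_j)\bigr)$ becomes the conjugate of $\prod_{l=1}^m \bigl(-\ii\,\sgn(u_l)\,\tilde Q_{k_2 k_1}(u_l)\bigr)$. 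Second, since $c_j,\omega_j,\beta,\epsilon$ are real, the explicit formula \cref{bath_function_B} gives $\tilde B^*(-\Delta\tau) = \tilde B(\Delta\tau)$; checking the cases $\tau_1\tau_2>0$ and $\tau_1\tau_2\leqslant 0$ separately, this implies $B(-\tau_2,-\tau_1) = B(\tau_1,\tau_2)^*$ for every pair $\tau_1 \leqslant \tau_2$. Third, the index reversal $j \mapsto m+1-j$ is a bijection from $\mathscr{Q}(\bs)$ onto $\mathscr{Q}(\boldsymbol{u})$ sending a pair on indices $\{i,j\}$ with $i<j$ to the pair on indices $\{m+1-j,\,m+1-i\}$; using $s_i = -u_{m+1-i}$ and $s_j = -u_{m+1-j}$, the factor $B(s_i,s_j)$ in $\varrho_{k_1 k_2}(t)$ equals, by the second ingredient, the conjugate of the factor $B(u_{m+1-j},u_{m+1-i})$ appearing for that pair in $\mathcal{L}_b(\boldsymbol{u})$. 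Multiplying over pairs and summing over pairings then shows $\mathcal{L}_b(\bs) = \mathcal{L}_b(\boldsymbol{u})^*$ after the substitution.

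Combining the three ingredients, the $m$-th integrand of $\varrho_{k_1 k_2}(t)$ transforms into the conjugate of the $m$-th integrand of $\varrho_{k_2 k_1}(t)$; as the domain and measure are preserved, the $m$-th term of $\varrho_{k_1 k_2}(t)$ equals the conjugate of the $m$-th term of $\varrho_{k_2 k_1}(t)$, and summing over $m$ yields \cref{eq_varrho_conjugate_symmetry}. I expect the main obstacle to be the third ingredient: carefully tracking how the ordered pairs of a pairing are relabeled under the index reversal and confirming $B(-\tau_2,-\tau_1) = B(\tau_1,\tau_2)^*$ in both sign regimes of $\tau_1\tau_2$. The discontinuity of $\tilde Q_{k_1 k_2}$ at $s=0$ causes no difficulty, since it occurs on a set of measure zero.
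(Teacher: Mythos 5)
Your proposal is correct and follows essentially the same route as the paper: the time-reversal change of variables $s_j \mapsto -s_{m+1-j}$, the identity $B(-\tau_2,-\tau_1)=B^*(\tau_1,\tau_2)$ giving $\mathcal{L}_b(-s_m,\dots,-s_1)=\mathcal{L}_b^*(s_1,\dots,s_m)$, and the symmetry $\tilde{Q}_{k_1k_2}(-s)=\tilde{Q}_{k_2k_1}(s)$ with real beam centers. The only cosmetic difference is that you conjugate each factor $-\ii\sgn(s_j)\tilde{Q}$ directly, whereas the paper flips $\sgn(-s_j')$ to $\sgn(s_j')$ using the evenness of $m$; both are valid.
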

\begin{proof}
    Based on the definition \cref{eq_varrho} of $\varrho_{k_1 k_2}(t)$, we apply change of variable $\boldsymbol{s}' = (s'_1,s'_2,\cdots,s'_m) = (- s_{m},-s_{m-1},\cdots,-s_1)$ to get 
    \begin{displaymath}
        \varrho_{k_1 k_2}(t)
    =  \sum_{m=0}^\infty 
    \int_{-t\leqslant \boldsymbol{s}'\leqslant t} 
    \prod_{j=1}^m 
    \left(-\ii \sgn ( - s_j')
        \tilde{Q}_{k_1 k_2}( - s_j')
    \right)
    \mathcal{L}_b(-s'_m,-s'_{m-1},\cdots,-s'_1)
    \dd \boldsymbol{s}'.
    \end{displaymath}
By the definition of correlation function \cref{bath_function_B}, one can easily check that $B(-\tau_2,-\tau_1) = B^*(\tau_1,\tau_2)$, which immediately yields that $\mathcal{L}_b(-\tau_m,-\tau_{m-1},\cdots,-\tau_1) = \mathcal{L}_b^*(\tau_1,\cdots,\tau_{m-1},\tau_m)$ by the definition of bath influence functional \cref{Lb}. Thus, we further have 
\begin{displaymath}
     \varrho_{k_1 k_2}(t)
    = \sum_{m=0}^\infty 
    \int_{-t\leqslant \boldsymbol{s}'\leqslant t} 
    \prod_{j=1}^m 
    \left(-\ii \sgn (-  s_j')
        \tilde{Q}_{k_2 k_1}( s_j')
    \right)
    \mathcal{L}_b^*(s'_1,s'_{2},\cdots,s'_m)
    \dd \boldsymbol{s}'  .
\end{displaymath}
Finally, we can change $\sgn(-s'_j)$ to $\sgn(s'_j)$ in the above formula since $m$ is an even number, and then we arrive at the relation \cref{eq_varrho_conjugate_symmetry}.
\end{proof}
With this property, the summation in 
\cref{eq_rho_varrho} can therefore be divided into ``diagonal'' and ``non-diagonal'' part:
\begin{equation}
\label{eq_varrho_summation}
     {\bra{x_1}\rho_s(t)\ket{x_2} }= \sum_{k=1}^K w_k^2\varrho_{kk}(t) \psi_{k}(t,x_1)\psi_{k}^*(t,x_2)
    + 2 \sum_{k_1<k_2}
    w_{k_1}w_{k_2}\operatorname{Re}\left(\varrho_{k_1k_2}(t)\psi_{k_1}(t,x_1)\psi_{k_2}^*(t,x_2)\right)
\end{equation}
with roughly half of the computational cost of the original full summation.

When computing the position probability distribution $f(t,x)$, we take the values $G_{k_1k_2}^{-n,n}$ for $n = 0,\cdots,N$ in \cref{tab:G} and only compute the ``diagonal'' of $ {\bra{x_1}\rho_s(t)\ket{x_2} }$ as
\begin{equation}
\label{eq_probability_density_summation}
    f(t,x) = \sum_{k=1}^K w_k^2\varrho_{kk}(t) \psi_{k}(t,x)\psi_{k}^*(t,x)
    + 2 \sum_{k_1<k_2}
    w_{k_1}w_{k_2}\operatorname{Re}\left(\varrho_{k_1k_2}(t)\psi_{k_1}(t,x)\psi_{k_2}^*(t,x)\right).
\end{equation}

\subsection{Summary of the algorithm}
\label{sec_summary}
We would like to end this section by a quick summary of the method and a brief analysis of the computational cost.
In general, our method combines the frozen Gaussian approximation and the inchworm algorithm.
We first use the FGA ansatz to represent the wave function and solve the ODEs for the dynamics.
In this step, the computational cost is $\mathcal{O}(KN_{\mathrm{RK}})$, where $K$ is the number of beams in \cref{eq_initial_decomposition}, and $N_{\mathrm{RK}}$ is the number of Runge-Kutta steps in the discretization.
This step is identical to the FGA method without baths.
The second step is to apply the inchworm algorithm to all possible pairs of beams.
The order of computational cost is $\mathcal{O}(K^2 C_{\mathrm{IW}}(\bar{M}))$ where $K^2$ comes from the choices of $k_1,k_2$ in \cref{eq_integro_differential_equation} and $C_{\mathrm{IW}}(\bar{M})$ is the computational cost for solving \cref{eq_integro_differential_equation} with fixed $k_1,k_2$.
If we have $N$ grid points for time discretization, we need to evaluate $\mathcal{O}(N^2)$ full propagators for each beam pair.
For each full propagator, the computational cost is at most $\mathcal{O}(N^{\bar{M}})$ for the numerical quadrature. 
Therefore, $C_{\mathrm{IW}}$ can be estimated by $\mathcal{O}(N^{\bar{M}+2})$.
Readers may refer to \cite{cai2022fast,wang2023real} for more discussion on the computational cost of the inchworm algorithm.
The last step is to recover the probability density function by \cref{eq_probability_density_summation}.
The computational cost of this step depends on the number of points on which we would like to evaluate the function.
The cost is expected to be $\mathcal{O}(K^2 C_{\mathrm{X}})$ with $C_{\mathrm{X}}$ represents the computational cost to evaluate each term in summation for all required grid points $x$.

As a summary, the computational cost of the three parts are $\mathcal{O}(K N_{\mathrm{RK}})$, $\mathcal{O}(K^2 N^{\bar{M}+2})$ and $\mathcal{O}(K^2 C_X)$.
Since we usually use methods with the same order of accuracy for solving the FGA method and the inchworm equation, it is advisable to choose $N_{\mathrm{RK}} \sim \mathcal{O}(N)$.
In practice, we can choose $N_{\mathrm{RK}}$ to be larger than $N$ to get more accurate approximations of $Q_k(t)$.
In the third step, the value of $C_X$ comes from the spatial discretization which needs to be considered only when generating the output.
To resolve the fluctuations in the solution, $C_X$ needs to be proportional to $\mathcal{O}(1/\sqrt{\epsilon})$.
The discretization \cref{eq_grids} shows that $K \sim \mathcal{O}(1/\epsilon)$, from which one can conclude that the total computational cost of the third step is $\mathcal{O}(K^{5/2})$.
To conclude, the final computational cost can be estimated by $\mathcal{O}(K^{5/2} + K^2 N^{\bar{M}+2})$.

The entire procedure is summarized in \cref{algo}.
Of all these steps, the inchworm step is in general the most costly one for most applications.
We will discuss an alternative discretization in \cref{sec_conclusion} which may require fewer beams.

\begin{algorithm}[ht]
    \caption{Inchworm-FGA}
    \label{algo}
    \begin{algorithmic}[1]
        \State \textbf{Input:} $\psi_{s}^{(0)}$, $\epsilon$;
        \State Initialize $\psi_{s}^{(0)}$ according to \cref{eq_initial_decomposition};
        \For{$k=1,\dots,K$}
            \State Solve the beam dynamics according to \cref{eq_dynamics_discrete};
        \EndFor
        \For{$1\leqslant k_{1} \leqslant k_{2} \leqslant K$}
            \State Compute the full propagators for each beam according to \cref{eq_integro_differential_equation};
        \EndFor
        \State Compute $\rho_s(t,x_{1},x_{2})$ by  \cref{eq_varrho_summation} or $f(t,x)$ by \cref{eq_probability_density_summation};
    \end{algorithmic}
\end{algorithm}

{
\section{Error analysis for reduced density operator approximated by FGA}
\label{sec:error analysis}
We use this section to carry out a numerical analysis for the error bound of the reduced density operator formed as Dyson series approximated with the FGA. To begin with, let us first introduce some useful notations. Given a FGA wave function $\psi_{\FGA}$ as defined in \cref{eq_FGA_ansatz}, we define the FGA position operator
\begin{displaymath}
    \hat{Q}\psi_{\FGA}(t,x) :=  \frac{1}{({2\pi\epsilon})^{3/2}}
    \int_{\mathbb{R}^2}
    \int_{-\infty}^\infty
    Q(t,p,q)a(t,p,q) \e^{\ii \phi(t,x,y,p,q) / \epsilon}
    \psi_0(y)
    \dd y \dd p \dd q.
\end{displaymath}
In addition, given a $(m+2)$-dimensional vector 
\begin{displaymath}
    \bs := (s_0,s_1,\cdots,s_{m+1})
\end{displaymath}
satisfying $0 = s_0 < s_1 <s_2 <\cdots < s_m  < s_{m+1}= t$, we define 
\begin{equation}\label{def:VFGA}
    \mathcal{V}_{\FGA}^{(k)}[\psi_0](\bs,x):=  \frac{1}{({2\pi\epsilon})^{3/2}}
    \int_{\mathbb{R}^2}
    \int_{-\infty}^\infty
    \left(\prod_{j=1}^{k-1}Q(s_j,p,q)\right)a(s_k,p,q) \e^{\ii \phi(s_k,x,y,p,q) / \epsilon}
    \psi_0(y)
    \dd y \dd p \dd q
\end{equation}
which effectively approximates system propagation up to time $s_k$:
\begin{equation*}
    \mathcal{V}_{\FGA}^{(k)}[\psi_0](\bs,x) \approx   \mathcal{V}^{(k)}[\psi_0](\bs,x) := G^{(s)}(s_{k-1},s_k)\hat{x} \cdots   \hat{x}G^{(s)}(s_0,s_1) \psi_0 .
\end{equation*}
In particular, $\mathcal{V}_{\FGA}^{(0)}[\psi_0](\bs,x) = \psi_0(x)$ which is the initial value of the wave function considered. Here we remark that $\mathcal{V}_{\FGA}^{(0)}[\psi_0](\bs,x)$ actually does not relies on $\bs$ according to the definition \cref{def:VFGA}. We also denote the exact propagation up to time $t$ and its frozen Gaussian approximation by 
\begin{displaymath}
   \mathcal{V}[\psi_0](\bs,x) :=  \mathcal{V}^{(m+1)}[\psi_0](\bs,x) \quad, \quad  \mathcal{V}_{\FGA}[\psi_0](\bs,x) :=  \mathcal{V}_{\FGA}^{(m+1)}[\psi_0](\bs,x)
\end{displaymath}
which correspond to $\mathcal{U}^-_{x_1}$ and its approximation \cref{U_minus_FGA} (and also $\mathcal{U}^+_{x_2}$ and the corresponding approximation). Furthermore, we define
\begin{displaymath}
    \mathcal{W}^{(k)}[\psi_0](\bs,x) = G^{(s)}(s_{m},s_{m+1})\hat{x}G^{(s)}(s_{m-1},s_m) \hat{x} \cdots   \hat{x}G^{(s)}(s_{k},s_{k+1}) \hat{Q} \mathcal{V}_{\FGA}^{(k)}[\psi_0](\bs,x),
\end{displaymath}
for $k=1,\cdots,m$, which can be viewed as an approximate system influence functional obtained by first evolving $\psi_0$ under FGA dynamics up to time $s_k$, and then evolving under exact quantum dynamics until time $t$. In addition, we define
\begin{displaymath}
    \mathcal{W}^{(0)}[\psi_0](\bs,x) = \mathcal{V}[\psi_0](\bs,x) \quad , \quad \mathcal{W}^{(m+1)}[\psi_0](\bs,x) = \mathcal{V}_{\FGA}[\psi_0](\bs,x).
\end{displaymath}
Finally, we assume the two-point correlation \cref{bath_function_B} is bounded by  $$|B(\cdot,\cdot)|\leqslant C_B,$$ which is verifed numerically in Figure \ref{fig_tpc}. By definition \cref{Lb}, bath influence functional is then bounded by 
\begin{equation}\label{Lb_bound}
    |\mathcal{L}_b(s_1,\cdots,s_m)| \leqslant (m-1)!! C_B^{m/2}.
\end{equation}

With the notations above, we first introduce the error estimation for system influence functional approximated by the procedures \cref{semigroup_FGA} -- \cref{U_minus_FGA}, which is given by the following lemma:
\begin{lemma}\label{lemma:influence_functional_FGA}
Let $\psi(t,x)$ be a wave function with initial value $\psi_0(x)$ and its FGA $\psi_{\FGA}$ given by \cref{eq_FGA_ansatz}. Assume 
     \begin{equation}\label{proof assumption}
        \mathrm{supp}(\mathcal{W}^{(k)}[\psi_0](\bs,\cdot)  ) \subset [-R,R] \text{~for all~}  0\leqslant \bs \leqslant t \text{~and~} k = 1,\cdots,m+1
    \end{equation}     
where $R$ is some positive number. Without loss of generality, we assume $R\geqslant 1$. Then it holds that 
\begin{equation}\label{V_VFGA}
    \| \mathcal{V}[\psi_0](\bs,\cdot) -  \mathcal{V}_{\FGA}[\psi_0](\bs,\cdot)  \|_{L^{2}} \leqslant  C_3(m) \epsilon 
\end{equation}
where $$C_3(m) =  \frac{1}{1-\alpha^{-1}} \left(C_2 + \sqrt{\frac{2}{3}}R^{3/2} C_1\right)\alpha^{m-1} \text{~and~} \alpha = \sqrt{2/3}R^{3/2}.$$
\end{lemma}

\begin{proof}
We first decompose  
\begin{equation}
    \mathcal{V}_{\FGA}[\psi_0](\bs,x) - \mathcal{V}[\psi_0](\bs,x)  = \sum^m_{k=0} \mathcal{W}^{(k+1)}[\psi_0](\bs,x) -  \mathcal{W}^{(k)}[\psi_0](\bs,x).
\end{equation}
Each term in the sum takes the form
\begin{equation*}
    \begin{split}
   & \mathcal{W}^{(k+1)}[\psi_0](\bs,x) -  \mathcal{W}^{(k)}[\psi_0](\bs,x)\\
    =  & \   G^{(s)}(s_{m},t)\hat{x}G^{(s)}(s_{m-1},s_m) \hat{x} \cdots   \hat{x}G^{(s)}(s_{k+1},s_{k+2}) \times \\
   &\times  \left[ \hat{Q} \mathcal{V}_{\FGA}^{(k+1)}[\psi_0](\bs,x)  - \hat{x}G^{(s)}(s_{k},s_{k+1}) \hat{Q} \mathcal{V}_{\FGA}^{(k)}[\psi_0](\bs,x)  \right].
    \end{split}
\end{equation*}
By Proposition \ref{prop_scale_coefficient}, we can obtain $ \mathcal{V}_{\FGA}^{(k+1)}$ by performing FGA dynamics according to ODEs \cref{eq_dynamics} for time $s_{k+1} - s_k$ using $\hat{Q} \mathcal{V}_{\FGA}^{(k)}$ (which is in FGA form) as the initial value. Furthermore, by Proposition \ref{prop_FGA} and Proposition \ref{prop_position_operator_beam_center} we estimate 
\begin{equation*}
    \begin{split}
   & \|\hat{Q} \mathcal{V}_{\FGA}^{(k+1)}[\psi_0](\bs,\cdot)  - \hat{x}G^{(s)}(s_{k},s_{k+1}) \hat{Q} \mathcal{V}_{\FGA}^{(k)}[\psi_0](\bs,\cdot) \|_{L^2}  \\
   = & \ \|\hat{Q} \mathcal{V}_{\FGA}^{(k+1)}[\psi_0](\bs,\cdot)  - \hat{x} \mathcal{V}_{\FGA}^{(k+1)}[\psi_0](\bs,\cdot) + \hat{x} \mathcal{V}_{\FGA}^{(k+1)}[\psi_0](\bs,\cdot) -  \hat{x}G^{(s)}(s_{k},s_{k+1}) \hat{Q} \mathcal{V}_{\FGA}^{(k)}[\psi_0](\bs,\cdot)  \|_{L^2} \\
   \leqslant & \ \|\hat{Q} \mathcal{V}_{\FGA}^{(k+1)}[\psi_0](\bs,\cdot)  - \hat{x} \mathcal{V}_{\FGA}^{(k+1)}[\psi_0](\bs,\cdot) \|_{L^2} + \\
   &+ \|  \hat{x}\|_{L^2} \cdot \| \mathcal{V}_{\FGA}^{(k+1)}[\psi_0](\bs,\cdot) - G^{(s)}(s_{k},s_{k+1}) \hat{Q} \mathcal{V}_{\FGA}^{(k)}[\psi_0](\bs,\cdot)  \|_{L^2} \\
   \leqslant & \  C_2\epsilon + \sqrt{\frac{2}{3}}R^{3/2} \cdot C_1\epsilon 
    \end{split}
\end{equation*}
where we have used the assumption that $x \in [-R,R]$. In addition, since $G^{(s)}(\cdot,\cdot)$ is unitary, overall we have 
\begin{equation}\label{V_VFGA_each}
    \begin{split}
    &\| \mathcal{W}^{(k+1)}[\psi_0](\bs,\cdot) -  \mathcal{W}^{(k)}[\psi_0](\bs,\cdot) \|_{L^2} \\ \leqslant \ 
    &  \|\hat{x}\|^{m-k-1}_{L^2} \cdot  \|\hat{Q} \mathcal{V}_{\FGA}^{(k+1)}[\psi_0](\bs,x)  - \hat{x}G^{(s)}(s_{k},s_{k+1}) \hat{Q} \mathcal{V}_{\FGA}^{(k)}[\psi_0](\bs,x) \|_{L^2}\\
    \leqslant \ & \left( \sqrt{\frac{2}{3}}R^{3/2}\right)^{m-k-1} (C_2 + \sqrt{\frac{2}{3}}R^{3/2} C_1)\epsilon.
        \end{split}
\end{equation}
The bound \cref{V_VFGA} is an immediate consequence by summing over all \cref{V_VFGA_each}.
\end{proof}
The assumption \cref{proof assumption} can be achieved by assuming proper compactly supported potential functions $V$ which can confine the particle within $[-R,R]$ throughout the entire simulations. The potential functions we consider in numerical experiments in Section \ref{sec_numerical_examples} all satisfy this assumption. This lemma shows that the approximation error of system influence functional with FGA is $\mathcal{O}(\epsilon)$ with a prefactor scaling exponentially with respect to the order of the Dyson series. At first glance, this seems to be a large error if we expand Dyson series to high order. Fortunately, since the integrals in Dyson series are defined on simplices $$\{(s_1,s_2,\cdots,s_m)| -t < s_1 < s_2<\cdots<s_m<t \}$$ whose volumes are $\frac{(2t)^m}{m!!}$, the approximation error for each term in Dyson series can be well controlled by the fast double factorial decay with respect to $m$. The following theorem provides an upper bound for the overall approximation error for the reduced density operator with FGA:

\begin{theorem}\label{thm:rho_fga_err}
Given reduced density operator $\rho_s(t)$ in Dyson series formulation as \cref{eq_dyson}. Let 
\begin{equation*}
    \tilde{\rho}_s(t) 
    = \sum_{m=0}^{\infty}
    \int_{-t\leqslant \boldsymbol{s} \leqslant t}
    \prod_{j=1}^m 
    \left(-\ii \sgn (s_j)\right)
    \mathcal{U}_{\FGA}[\psi_s^{(0)}](-t,\boldsymbol{s},t)
    \mathcal{L}_b(\boldsymbol{s})
    \dd \boldsymbol{s}
\end{equation*}
where $\mathcal{U}_{\FGA}$ is the system influence functional approximated by FGA such that 
\begin{displaymath}
   \langle x_1 |  \mathcal{U}_{\FGA}[\psi_s^{(0)}](-t,\boldsymbol{s},t) | x_2 \rangle =  \mathcal{V}_{\FGA}[\psi^{(0)}_s](\bs^+,x_1) \left( \mathcal{V}_{\FGA}[\psi^{(0)}_s](\bs^-,x_2)\right)^\dagger
\end{displaymath}
where $\bs^+:=(0,s_{i+1},\cdots,s_{m},t)$, $\bs^-:=(-t,s_1,\cdots,s_{i},0)$ and $s_i<0<s_{i+1}$. Under the assumption \cref{proof assumption}, we have  
\begin{equation}\label{rho_fga_err}
    \| \rho_s(t) - \tilde{\rho}_s(t) \|_{L^2} \leqslant C_4 \exp(C_5 \cdot t^2) \epsilon
\end{equation}
where 
\begin{displaymath}
    C_4 =  \frac{2}{\alpha-1} \left(C_2 + \sqrt{\frac{2}{3}}R^{3/2} C_1\right) \quad , \quad C_5 = 2\alpha^2 R^2 C_B
\end{displaymath}
\end{theorem}

\begin{proof}
    According to Proposition \ref{prop_FGA} and Lemma \ref{lemma:influence_functional_FGA}, we have 
    \begin{align*}
       &\| \mathcal{U}[\psi_s^{(0)}](-t,\boldsymbol{s},t) 
 - \mathcal{U}_{\FGA}[\psi_s^{(0)}](-t,\boldsymbol{s},t)  \|_{L^2}\\
 \leqslant & \  \left\| \mathcal{V}[\psi^{(0)}_s](\bs^+,\cdot)  - \mathcal{V}_{\FGA}[\psi^{(0)}_s](\bs^+,\cdot)  \right\|_{L^2}  \cdot \|\mathcal{V}[\psi^{(0)}_s](\bs^-,\cdot)\|_{L^2} \\
 &+  \left\|  \mathcal{V}[\psi^{(0)}_s](\bs^-,\cdot)  -  \mathcal{V}_{\FGA}[\psi^{(0)}_s](\bs^-,\cdot) \right\|_{L^2}   \cdot \|\mathcal{V}_{\FGA}[\psi^{(0)}_s](\bs^+,\cdot) \|_{L^2}\\
 \leqslant & \ C_3(m)\epsilon\cdot \left( \|\mathcal{V}[\psi^{(0)}_s](\bs^-,\cdot)\|_{L^2} +  \|\mathcal{V}_{\FGA}[\psi^{(0)}_s](\bs^+,\cdot) \|_{L^2} \right) \\
 \leqslant & \  C_3(m)\epsilon\cdot \left(  \|\mathcal{V}[\psi^{(0)}_s](\bs^-,\cdot)\|_{L^2} +  \| \mathcal{V}[\psi^{(0)}_s](\bs^+,\cdot)\|_{L^2} + \|\mathcal{V}_{\FGA}[\psi^{(0)}_s](\bs^+,\cdot)  - \mathcal{V}[\psi^{(0)}_s](\bs^+,\cdot)\|_{L^2} \right)\\
 \leqslant & \ C_3(m)\epsilon\cdot (2R^m + C_1\epsilon ) \approx 2R^mC_3(m) \epsilon. 
    \end{align*}
  For the last inequality, we have used the assumption \cref{proof assumption}. By \cref{Lb_bound}, we bound 
\begin{align*}
\| \rho_s(t) - \tilde{\rho}_s(t)\|_{L^2} \leqslant & \ \sum_{\substack{m=0\\ m \text{~is even}}}^{\infty}
    \int_{-t\leqslant \boldsymbol{s} \leqslant t}
    \| \mathcal{U}[\psi_s^{(0)}](-t,\boldsymbol{s},t) 
 - \mathcal{U}_{\FGA}[\psi_s^{(0)}](-t,\boldsymbol{s},t)  \|_{L^2}
    \cdot |\mathcal{L}_b(\bs)|
    \dd \boldsymbol{s}\\ 
    \leqslant & \ \sum_{\substack{m=0\\ m \text{~is even}}}^{\infty} \frac{(2t)^m}{m!}\cdot 2R^mC_3(m) \epsilon \cdot (m-1)!! C_B^{m/2} \\
    = & \   \frac{2}{\alpha-1} \left(C_2 + \sqrt{\frac{2}{3}}R^{3/2} C_1\right)\epsilon\cdot  \sum^{\infty}_{k=0} \frac{(2\alpha^2R^2 C_B t^2)^k}{k!},
\end{align*}
which gives the estimation \cref{rho_fga_err} by noting the sum above is the Taylor expansion of an exponential function.  

\end{proof}
}


\section{Numerical results}
\label{sec_numerical_examples}
In this section, we carry out some numerical experiments for different initial wave functions and different potentials.
We choose the Ohmic spectral density \cite{chakravarty1984dynamics,makri1999linear} in our numerical simulations.
For Ohmic spectral density, the frequencies $\omega_j$ and the coupling intensities $c_j$ are given by
\begin{equation}
    \omega_j = -\omega_c \log \left(
        1-\frac{j}{L}(1-\exp(-\omega_{\max}/\omega_c))
    \right)
\end{equation}
\begin{equation}\label{cj}
    c_j = \epsilon\omega_j \sqrt{\frac{\xi\omega_c}{L}(1-\exp(-\omega_{\max}/\omega_c))}
\end{equation}
for $j = 1,\cdots,L$.
{In this work, we mainly study the cases where coupling intensity between system and bath is generally not too large, so that convergence of Dyson series is not too slow and can be approximated well by a relatively smaller truncation $\bar{M}$. This requires  the two-point correlation function $\tilde{B}$ defined in \cref{bath_function_B} is $\mathcal{O}(1)$, and thus in the parameter setting, we choose the coupling intensity $c_j$ defined in \cref{cj} to be $\mathcal{O}(\epsilon)$.
}

\subsection{Validity Tests}
\begin{figure}[t]
    \subfloat[$t=0$]{\includegraphics[width= 0.45\textwidth]{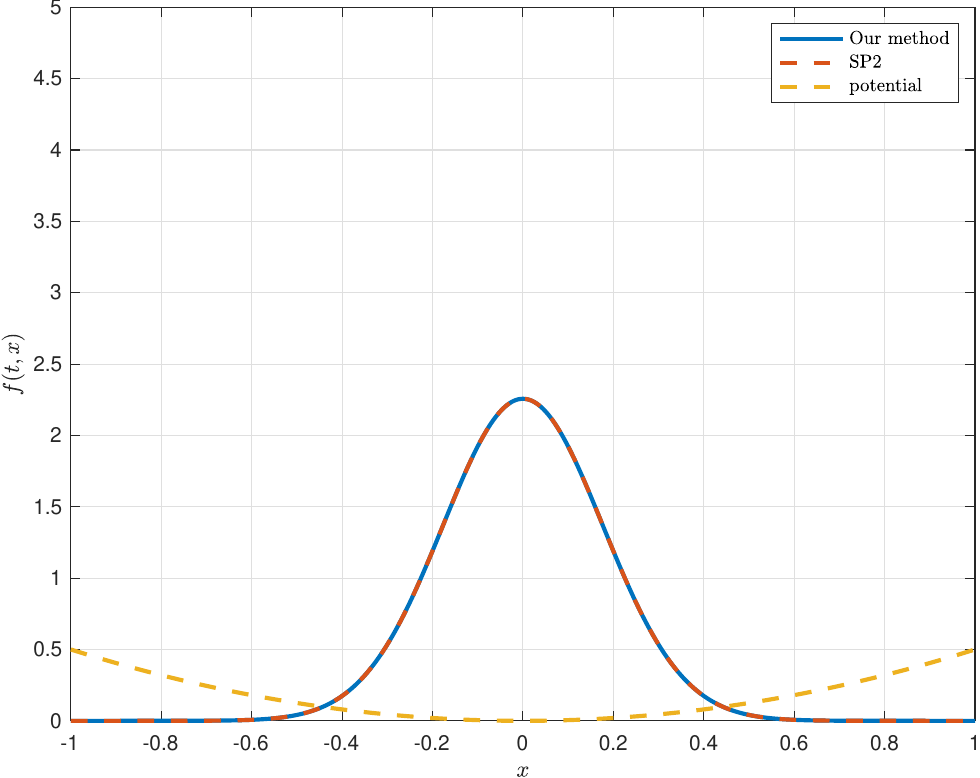}}
    \quad
    \subfloat[$t=1$]{\includegraphics[width= 0.45\textwidth]{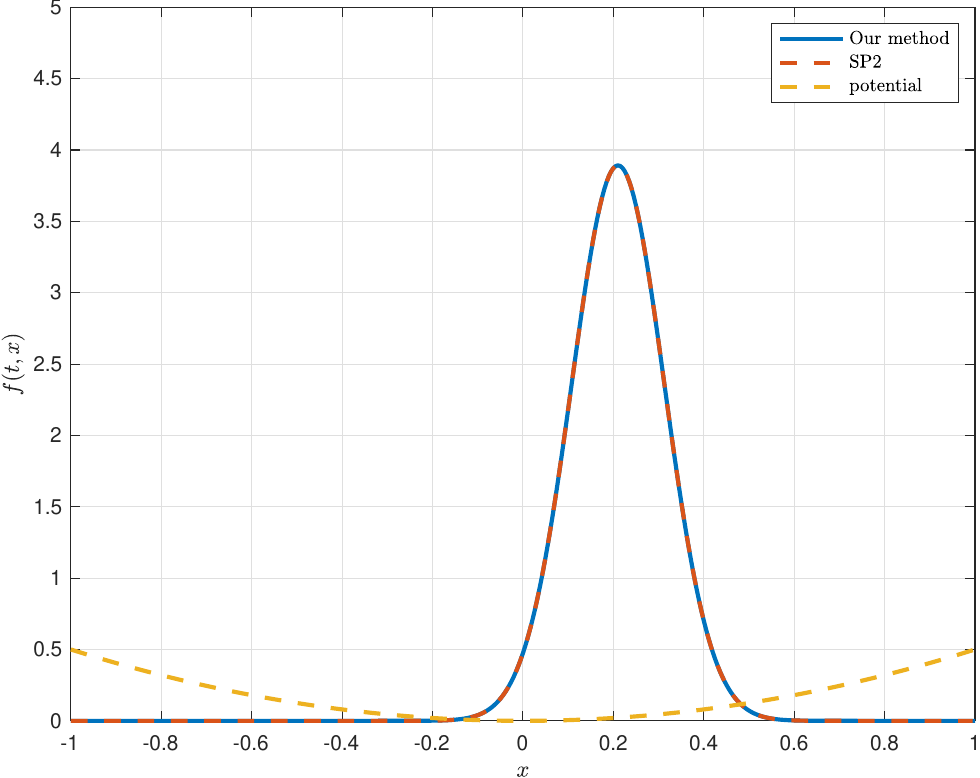}} \\
    \subfloat[$t=2$]{\includegraphics[width= 0.45\textwidth]{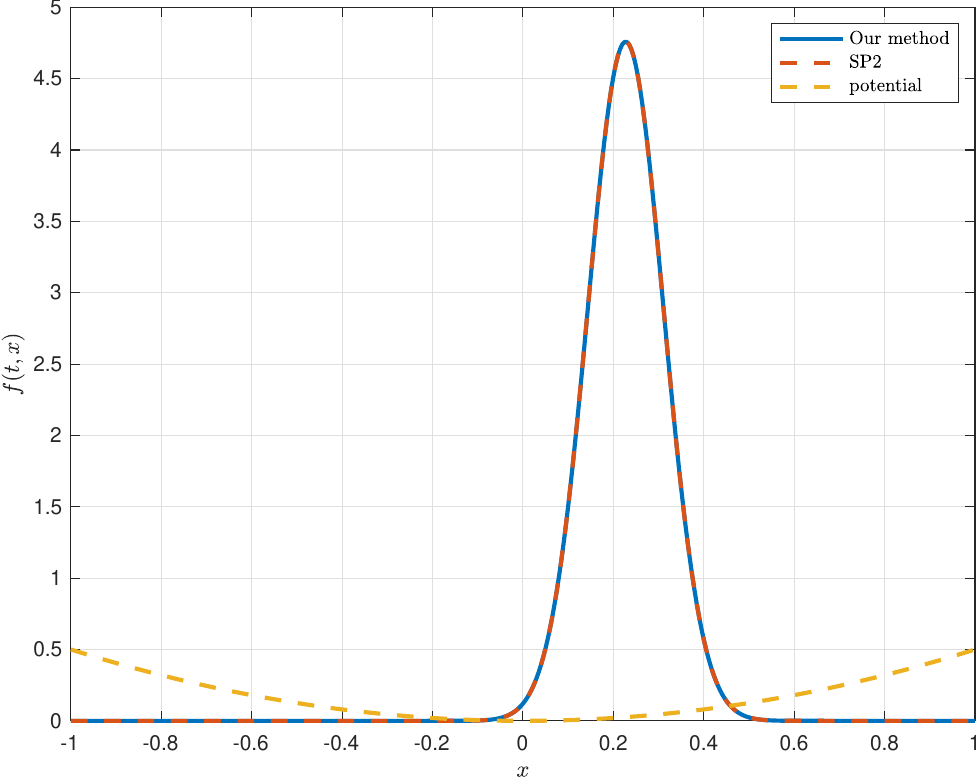}}
    \quad
    \subfloat[$t=3$]{\includegraphics[width= 0.45\textwidth]{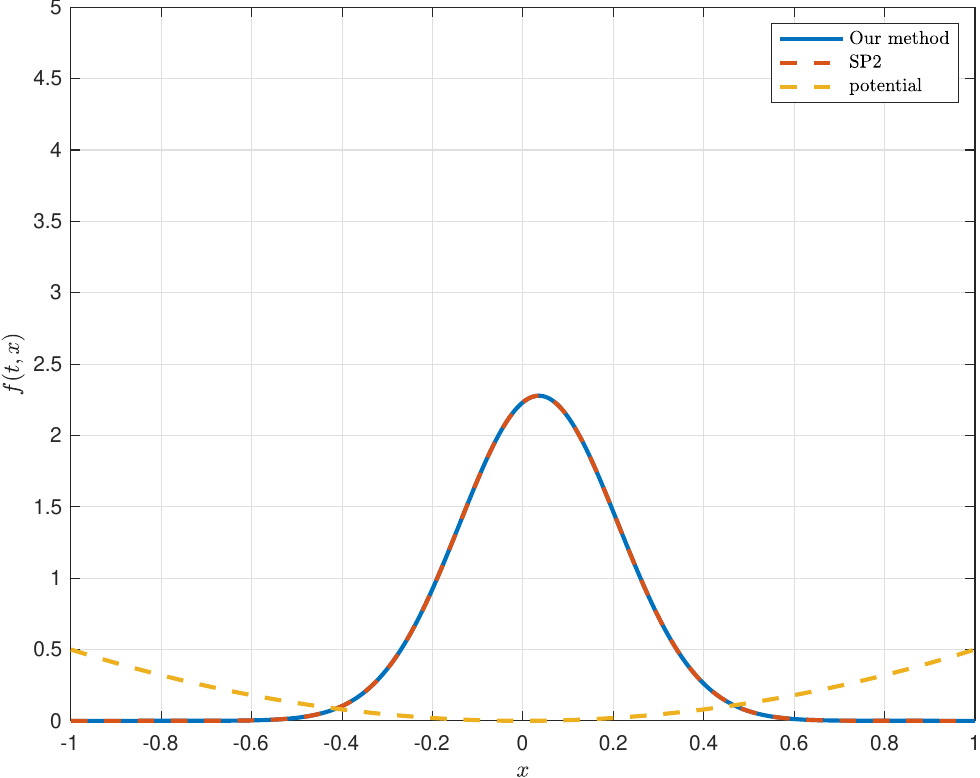}}
    \caption{Comparison of our method with SP2.}
    \label{fig_compare_with_SP2}
\end{figure}

To validate our method, we compare our numerical results with results from conventional methods based on numerical discretization as reference.
In particular, we compare the result of our method with the Strang splitting spectral method (SP2) for Schr\"odinger equation \cite{bao2002time,bao2003numerical}. Since the mesh discretization based SP2 will suffer from the curse of dimensionality from the large degrees of freedom of the harmonic oscillators, we only compare the result of our method with SP2 for zero system-bath coupling.
In our method, this can be achieved by $\xi=0$, and thus our method is essentially identical to the FGA method.
Since the system and bath evolve separately, a reference solution can be obtained by using SP2 to solve a closed system. In this example, we set the potential as
\begin{equation}
    V(x) = \frac{1}{2}x^2
\end{equation}
and the initial wave function as 
\begin{equation}
    \psi_s^{(0)}(x) = \frac{1}{(4\pi\epsilon)^{1/4}}
    \exp\left(\ii \frac{x}{4\epsilon} - \frac{x^2}{8\epsilon}\right)
\end{equation}
with $\epsilon=\frac{1}{64}$.

In this experiment, the wave function has an initial momentum centered at $p = 1$, and the center of the wave function is at $x = 0$.
To cover the dynamics of most Gaussian wave packets, we set the range of $p$ as $[-1,3]$ and the range of $q$ as $[-2,2]$.
For $\epsilon = 1/64$, the standard deviation of the frozen Gaussian is $1/8$, and here we choose $\Delta p$ and $\Delta q$ to be both $\frac{1}{32}$, so that the oscillation in the wave function can be fully resolved.
The total number of beams $K$ is 16641.
It is shown in \cref{fig_compare_with_SP2} that the initial probability density can be well approximated by the FGA ansatz, and good agreement with the reference SP2 solution is maintained up to $t = 3$.
This experiment shows that the FGA method can provide accurate solutions for these parameters, and similar numerical settings will be used in our examples with system-bath coupling.

{
While it is not practical to directly use SP2 to solve the dynamics of the Caldeira-Leggett model due to the large number of bath degrees of freedom, we remark that SP2 can still be applied to compute the system influence functional in \cref{eq_U_half} and subsequently calculate the reduced density matrix using the Dyson series formulation. However, it is not straightforward to extend this approach to the inchworm method. The inchworm method utilizes partial sums of the Dyson series, where the evolution of the system and the influence of the bath are mixed up in full propagators to accelerate the computation. These full propagators cannot be evaluated by solving closed Schr\"odinger equations, making the application of SP2 in this context nontrivial.

Although FGA works well in this validity tests, it might fail when $\epsilon$ is large. As stated in \cref{prop_FGA}, FGA has numerical accuracy of order $\mathcal{O}(\epsilon)$. For large $\epsilon$ such as $\epsilon=1$, FGA use wide beams to approximate the wave function. If the initial wave function has high frequency oscillates, FGA cannot even capture the initial states, let along carrying out numerical simulations.
We take the following wave function
\begin{equation}
    \psi_s^{(0)}(x) = \frac{1}{(\pi/16)^{1/4}}
    \exp\left(
        16\ii x - 8 x^2 
    \right)
\end{equation}
and use different $\epsilon$ in FGA to approximate this wave function. The results are shown in \cref{fig_different_epsilon}.
\begin{figure}
    \centering
    \includegraphics[width=0.45\linewidth]{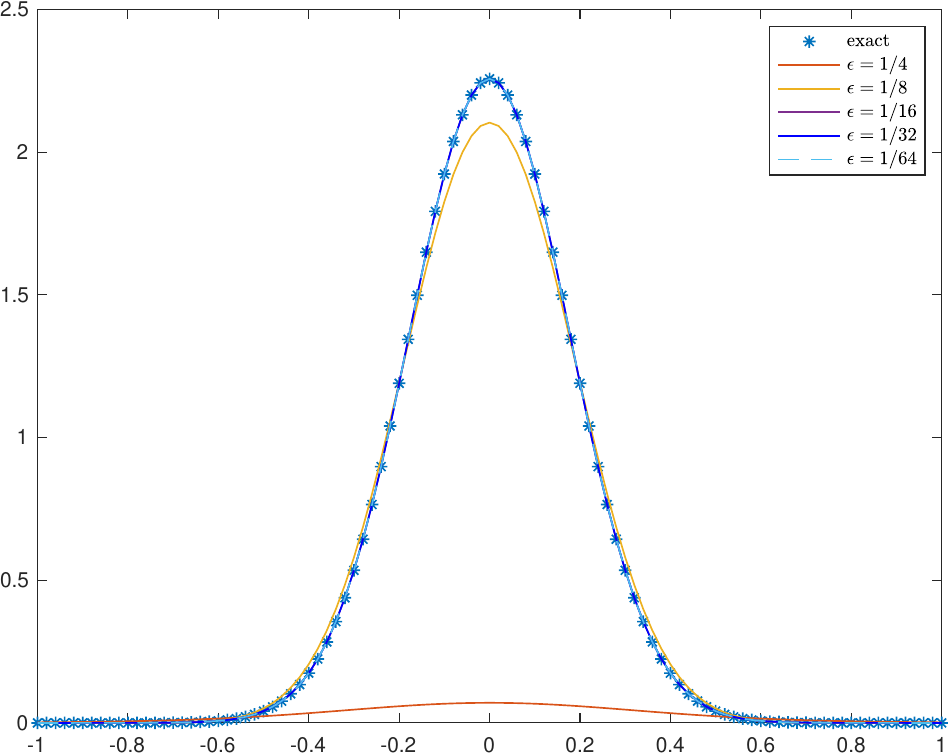}
    \caption{Wave function approximation by FGA with different $\epsilon$.}
    \label{fig_different_epsilon}
\end{figure}
From \cref{fig_different_epsilon}, we can observe that when we choose $\epsilon=1/8$ or larger, the approximation based on FGA is inaccurate. One possible way is directly evaluate the Dyson series without inchworm method, where we can use traditional numerical schemes such as SP2 to compute the system propagation. This can be considered as a future extension.
}

\subsection{Harmonic oscillator}
In this subsection, we consider the system bath coupling for different coupling intensities.
The potential function $V(x)$ and the initial wave function $\psi_s^{(0)}(x)$ are set the same as those in the previous validity test.
Other parameters for the system-bath coupling are given by
\begin{equation}
    L = 400, \quad
    \omega_{\max} = 10,\quad
    \omega_c = 2.5,\quad
    \beta = 5.
\end{equation}
Numerical experiments are carried out for various coupling intensity $\xi \leqslant 1.6$. The amplitude of two-point correlation function $\tilde{B}(\Delta \tau)$ when $\xi=1.6$ is plotted in \cref{fig_tpc}. Since the initial condition is the same as the validity test, we also choose the same range of $p,q$ and same $\Delta p,\Delta q$. Recall that the number of beams $K$ is 16641. According to our estimate of the time complexity $\mathcal{O}(K^{5/2} + K^2 N^{\bar{M}+2})$, the computational cost is quite significant despite a one-dimensional problem.
\begin{figure}[t]
    \centering
    \includegraphics[width=0.45\linewidth]{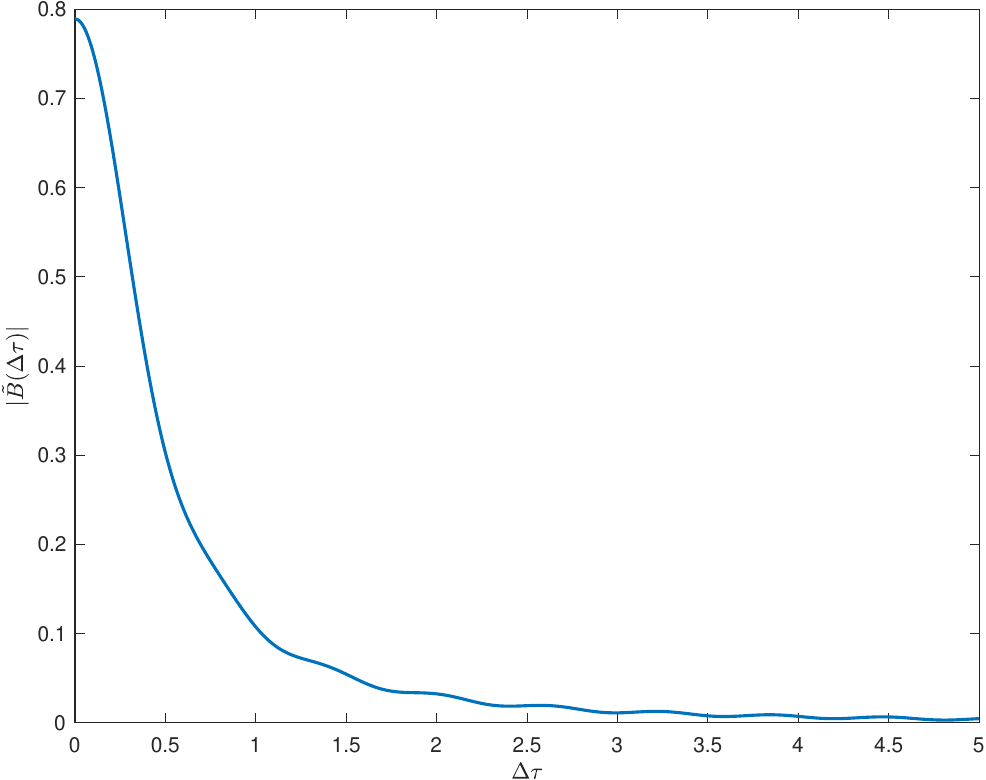}
    \caption{Amplitude of two-point correlation function.}
    \label{fig_tpc}
\end{figure}

In our simulations, we need to specify the truncation $\bar{M}$ for the inchworm expansion. As we have discussed in previous sections, the inchworm expansion generally converges very fast and thus we can choose a relatively small $\bar{M}$. Below we present our strategy for choosing the truncation: on the right-hand side of \cref{eq_integro_differential_equation}, the magnitude of 
$m$-th term in the inchworm expansion can be estimated by
\begin{equation} \label{eq:est}
\begin{split}
& \left| \int_{s_\ii \leqslant \boldsymbol{s}\leqslant s_\ff} 
        \prod_{j=1}^{m+1} 
        \left(-\ii \sgn(s_j) \tilde{Q}_{k_1k_2}(s_j)
            G_{k_1k_2}(s_{j-1},s_j)
        \right)
        \mathcal{L}_b^c([\boldsymbol{s},s_\ff])
   \dd \boldsymbol{s} \right| \\
   \leqslant & \   |\tilde{Q}_{\max} G_{\max}|^{m+1}  
      \int_{s_\ii \leqslant \boldsymbol{s}\leqslant s_\ff} \sum_{\mathfrak{q} \in \mathscr{Q}^c(\boldsymbol{s})}   
        \prod_{(\tau_1,\tau_2) \in \mathfrak{q}} | B(\tau_1, \tau_2) |
    \dd \boldsymbol{s}
\end{split}
\end{equation}
where $\tilde{Q}_{\max}$ and $G_{\max}$ are the maximum values of $\tilde{Q}_{k_1k_2}$ and $G_{k_1k_2}$, respectively.
To get a practical value of $\bar{M}$, we examine the decay of the upper bound \cref{eq:est} numerically. In the test example we considered in the previous subsection, we notice that the distribution of beams are confined in the region $[-0.5,0.5]$. Therefore, we set $\tilde{Q}_{\max} = 0.5$. In addition, we set $G_{\max} = 1$ from the initial condition and consider simulations up to $t = 5$. Under these parameter settings, we obtain the numerical evaluation of the upper bound \cref{eq:est} as 0.236671 for $m = 1$, and 0.0362826 for $m = 3$. Since the contribution from $m   = 3$ term in the inchworm expansion is minor compared with $m=1$, we therefore choose the truncation $\bar{M}=1$ in this numerical experiment. 

\begin{figure}[ht]
    \subfloat[$t=0$]{\includegraphics[width= 0.32\textwidth]{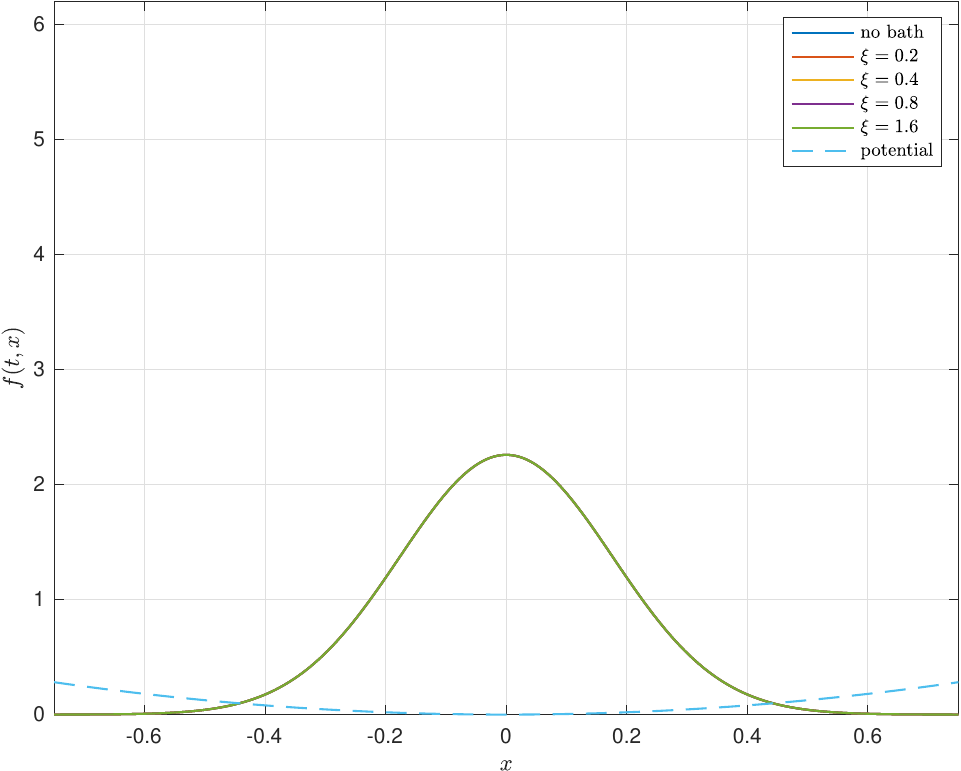}}
    \quad
    \subfloat[$t=1$]{\includegraphics[width= 0.32\textwidth]{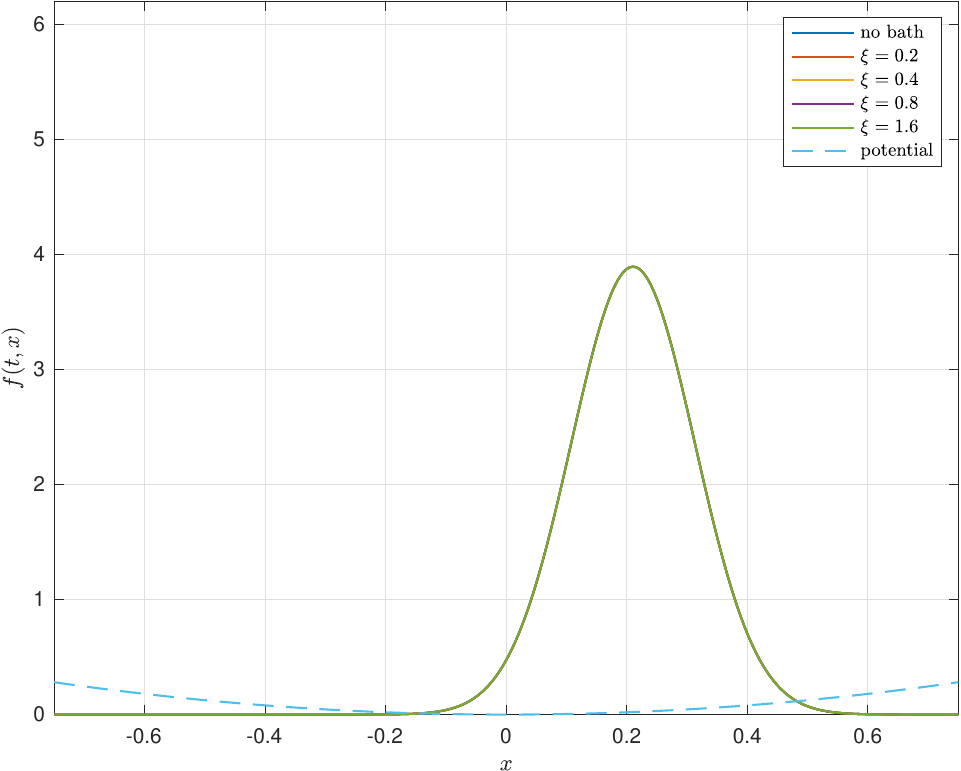}} \quad
    \subfloat[$t=2$]{\includegraphics[width= 0.32\textwidth]{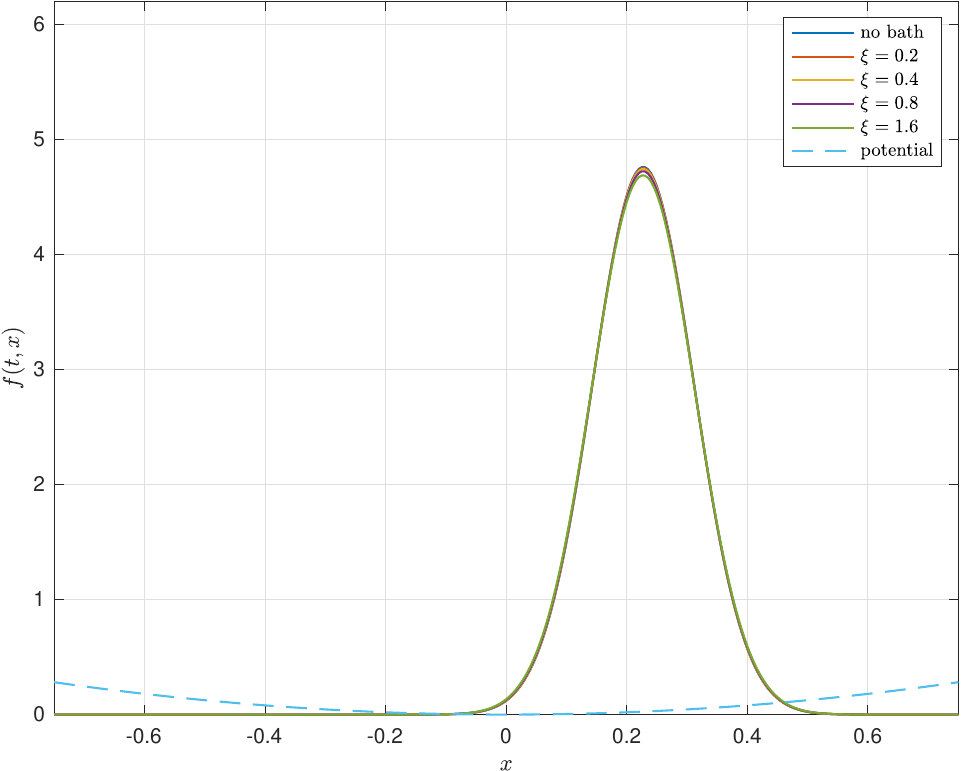}} \\
    \subfloat[$t=3$]{\includegraphics[width= 0.32\textwidth]{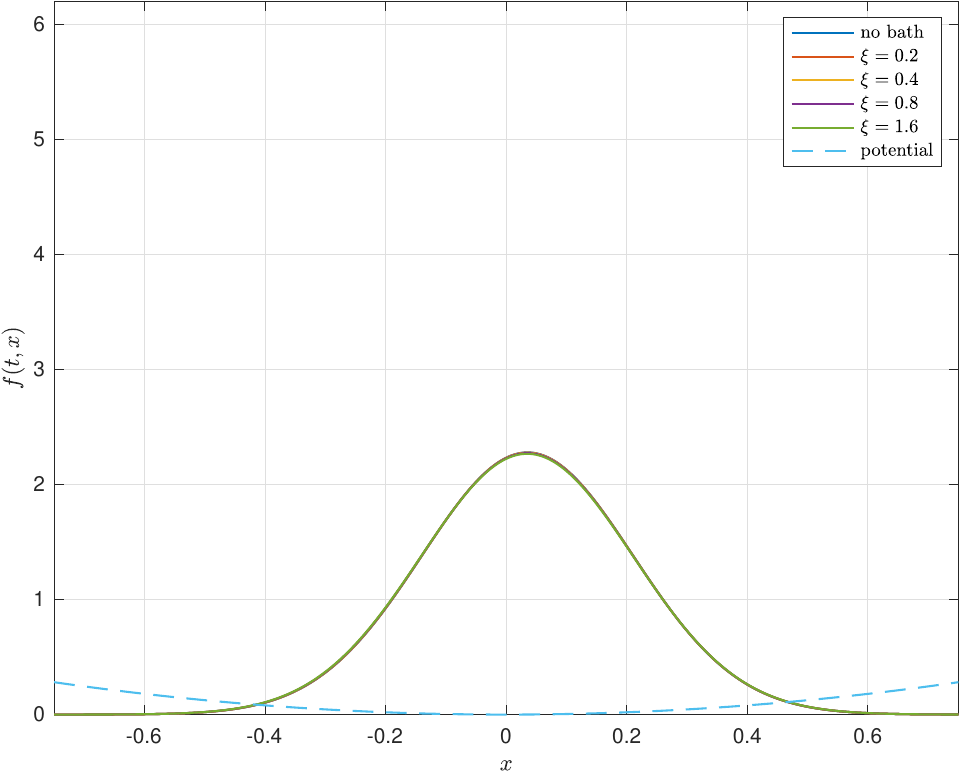}} \quad
    \subfloat[$t=4$]{\includegraphics[width= 0.32\textwidth]{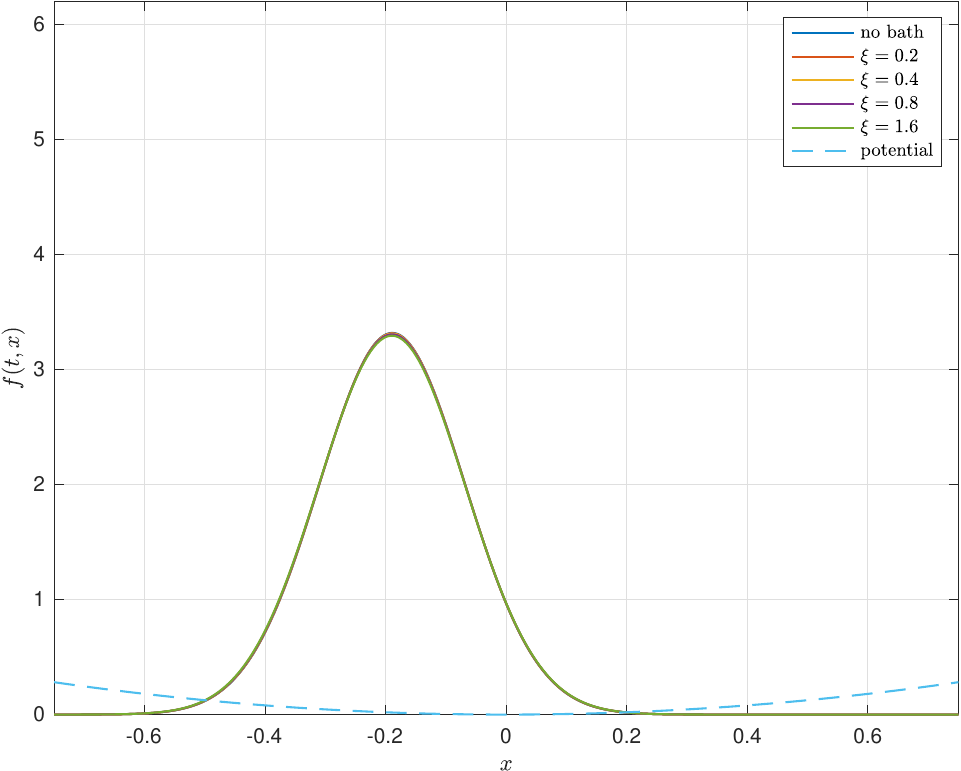}} 
    \quad
    \subfloat[$t=5$]{\includegraphics[width= 0.32\textwidth]{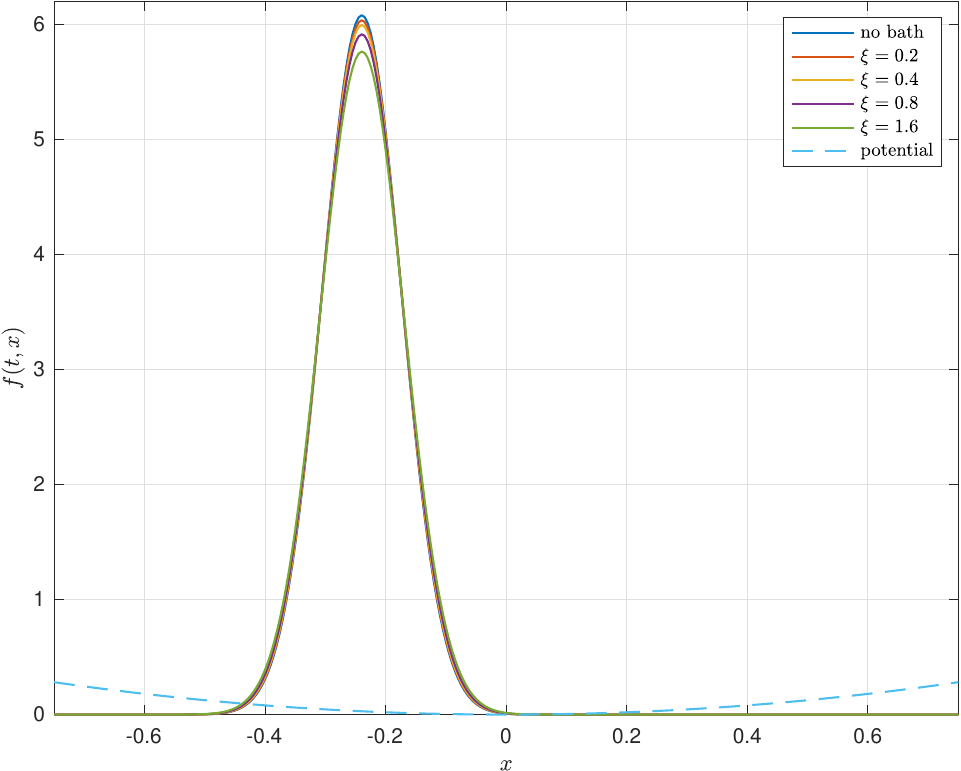}}
    \caption{Change of position probability density $f(t,x)$ of harmonic oscillator for different coupling intensities.}
    \label{fig_example_1}
\end{figure}
In other words, only the first diagram on the right side is considered when we solve the integro-differential equation \cref{eq_integro_differential_equation}. 
In previous works of the inchworm method \cite{yang2021inclusion}, it has been shown that choosing $\bar{M} = 1$ in the inchworm method can match the accuracy of the summation of Dyson series \cref{eq:G_Dyson} up to $m = 6$.
The numerical results for the simulations are given in \cref{fig_example_1}.

In this experiment, there is an initial momentum to the positive side for the particle, which can also be observed from the numerical result that the probability distribution moves to the right initially.
Due to the effect of the potential, it moves back to the left later.
From these numerical experiments, choosing different $\xi$ does not change the position probability distribution significantly.
When the coupling intensity $\xi$ increases, the peak in the position probability distribution of the system particle is lower.

\subsection{Two-peak initial wave function}
In this subsection, we change the initial wave function of the system particle to be a two-peak function:
\begin{equation}
    \psi_s^{(0)}(x) = C\left(\exp\left(
        -\frac{(x-1/2)^2}{4\epsilon}
    \right)
    + \frac{4}{5}\exp\left(
        -\frac{(x+1/2)^2}{4\epsilon}
    \right)\right)
\end{equation}
with $\epsilon = \frac{1}{64}$ and $C$ being the normalization constant:
\begin{equation}
    C = 5 (41+40\e^{-\frac{1}{8\epsilon}})^{-1/2}(2\pi\epsilon)^{-1/4}
\end{equation}
Initially, the system wave function has two peaks centered at $\frac{1}{2}$ and $-\frac{1}{2}$ respectively without any initial momentum.
The peak at $\frac{1}{2}$ is set to be higher than the other one.
Similar to the previous example, we simply assume that the potential of the particle is given by a quadratic function
\begin{equation}
    V(x) = \frac{1}{2}x^2.
\end{equation}
This potential will force both peaks to move towards each other at the beginning.
Other parameters are the same as the previous example
\begin{equation}
    L = 400, \quad
    \omega_{\max} = 10,\quad
    \omega_c = 2.5,\quad
    \beta = 5.
\end{equation}
The range of $p$ and $q$ are chosen to be $-2$ to $2$ and the increments are $\Delta p = \Delta q = \frac{1}{32}$.
The total number of beams $K$ is 16641.
We simply choose $\bar{M}=1$ for the \cref{eq_integro_differential_equation}.
The numerical results are given in \cref{fig_example_2}.
\begin{figure}[t]
    \subfloat[$t=0$]{\includegraphics[width= 0.48\textwidth]{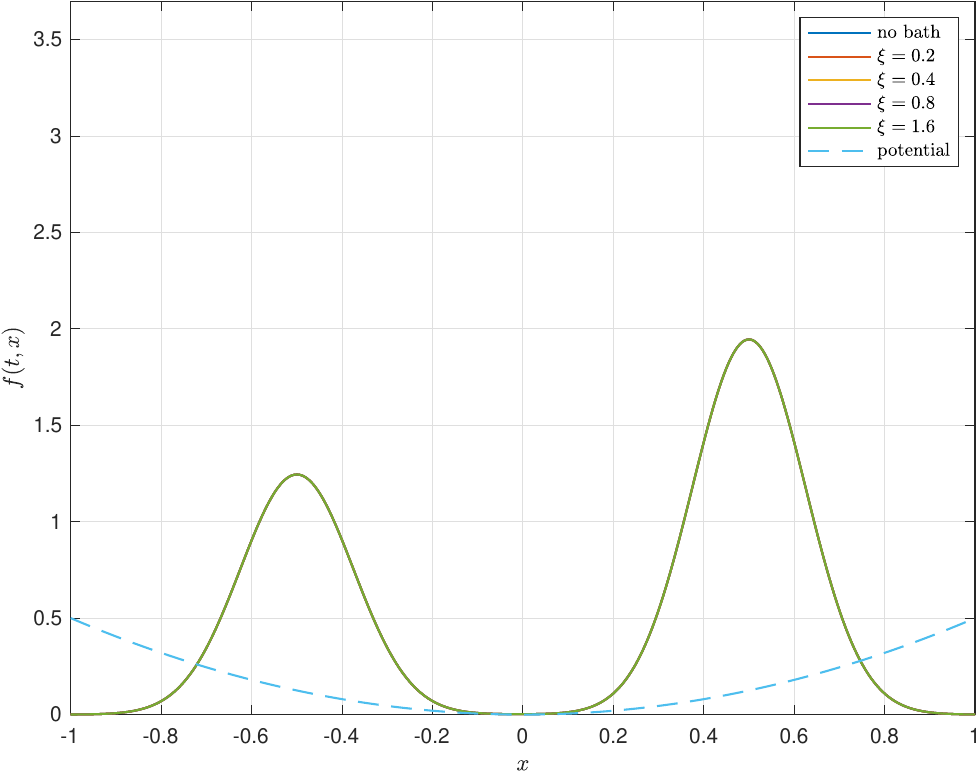}}
    \quad
    \subfloat[$t=5$]{\includegraphics[width= 0.48\textwidth]{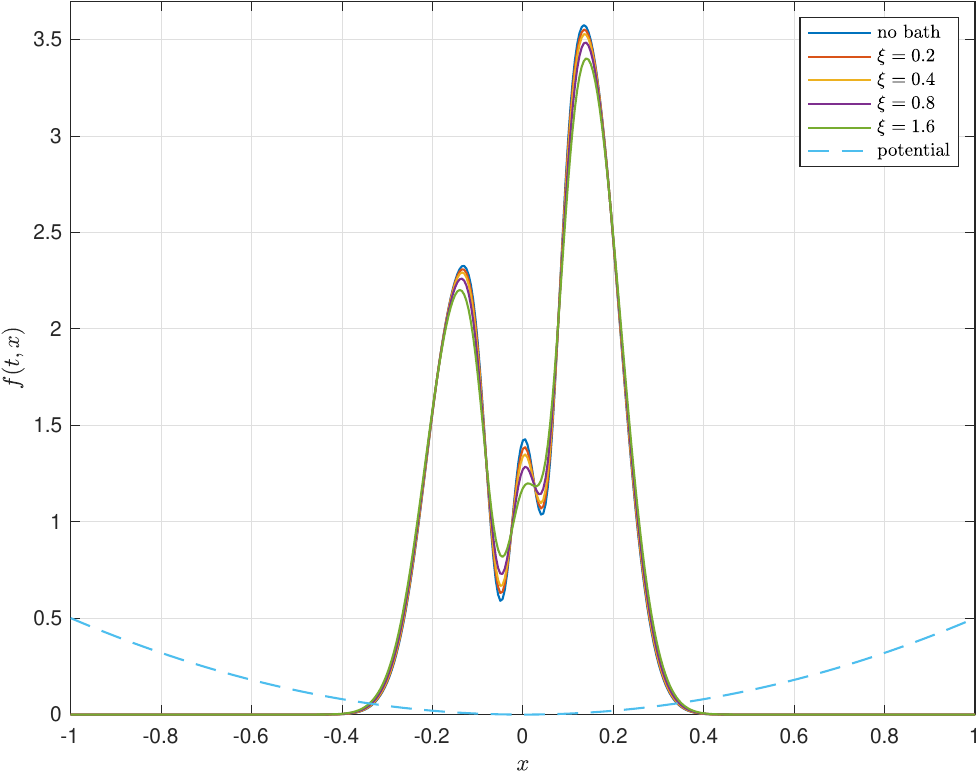}}
    \caption{Change of position probability density $f(t,x)$ for different coupling intensities $\xi$ for two-peak initial.}
    \label{fig_example_2}
\end{figure}
In the whole simulation period, the two peaks repeatedly move toward, cross and move away.
Interesting pattern appears when they meet each other, as capture 
at $t=5$.
We observe a small peak in between the main peaks.
The middle small peaks become flatter when the coupling intensity increases.
This is the result of quantum decoherence.
It is consistent with the physical fact that a particle behaves more ``classical{ly}'' when the coupling intensity increases.
Quantum decoherence is the theory of how a quantum system is converted to a ``classical'' system.
The existence of bath weakens the interference between waves.
When the interaction between the particle and bath is stronger, the interference becomes weaker.
That is why we see smaller interference waves when the coupling intensity increases in \cref{fig_example_2}.

\subsection{Double well potential}
In this subsection, we change the potential of the particle to be a double well potential
\begin{equation}
    V(x) = -x^2 + 2x^4.
\end{equation}
The initial wave function is set to be the same as the previous two-peak wave function example. The double well potential is a widely used example to study quantum tunneling effect \cite{NIETO1985doublewell,SONG2008doublewell}, meaning that in quantum mechanics a particle can pass through a high potential energy barrier which is not passable in classical mechanics due to the low energy of the particle. In our example, the double well potential we consider has two symmetric wells at $\pm{\frac{1}{2}}$ with the barrier height $\frac{1}{8}$. Other parameters are also the same as the previous example
\begin{equation}
    L = 400, \quad
    \omega_{\max} = 10,\quad
    \omega_c = 2.5,\quad
    \beta = 5.
\end{equation}
The ranges of $p$ and $q$ are $-2$ to $2$ and the increments are $\Delta p = \Delta q = \frac{1}{32}$.
Therefore, the total number of beams $K$ is 16641.
In this example, we would like to set a large coupling intensity $\xi$ and visualize the quantum decoherence.
Generally, larger $\xi$ means that we might need more terms in the infinite sum \cref{eq_integro_differential_equation} to accurately simulate the dynamics.
In order to check whether $\bar{M} = 1$ is sufficient, we first set $\xi=12.8$ and compare the results for $\bar{M}=1$ and $\bar{M}=3$.
The results are given in \cref{fig_example_34_fix_xi_M1}.
\begin{figure}[t]
    \subfloat[$t=0$]{\includegraphics[width= 0.48\textwidth]{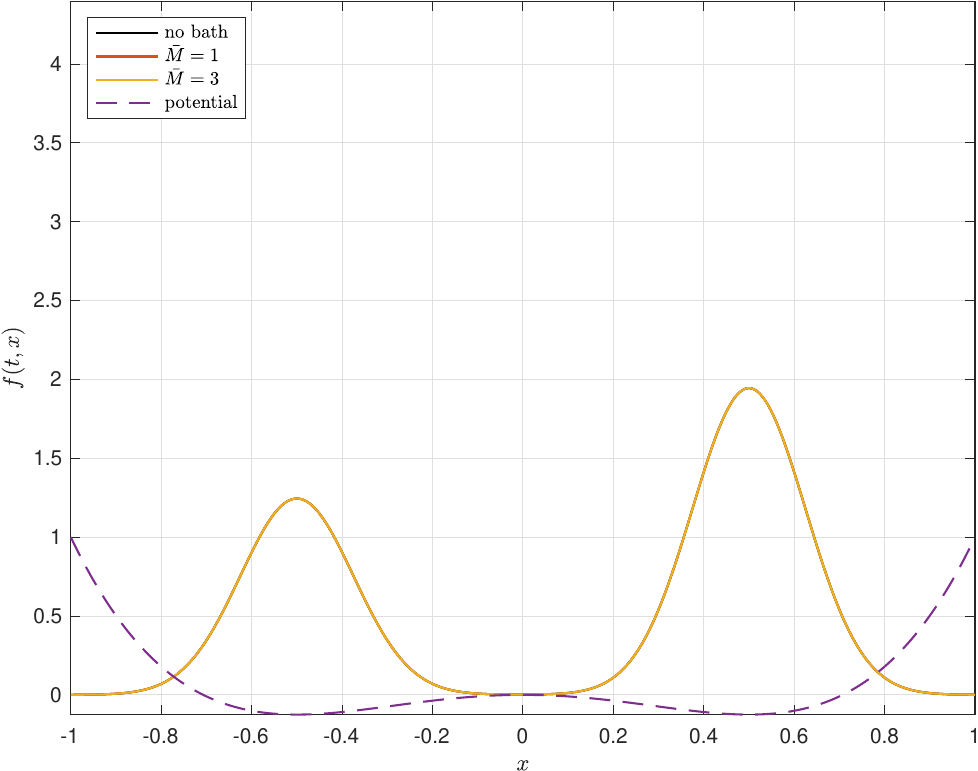}}
    \quad
    \subfloat[$t=1$]{\includegraphics[width= 0.48\textwidth]{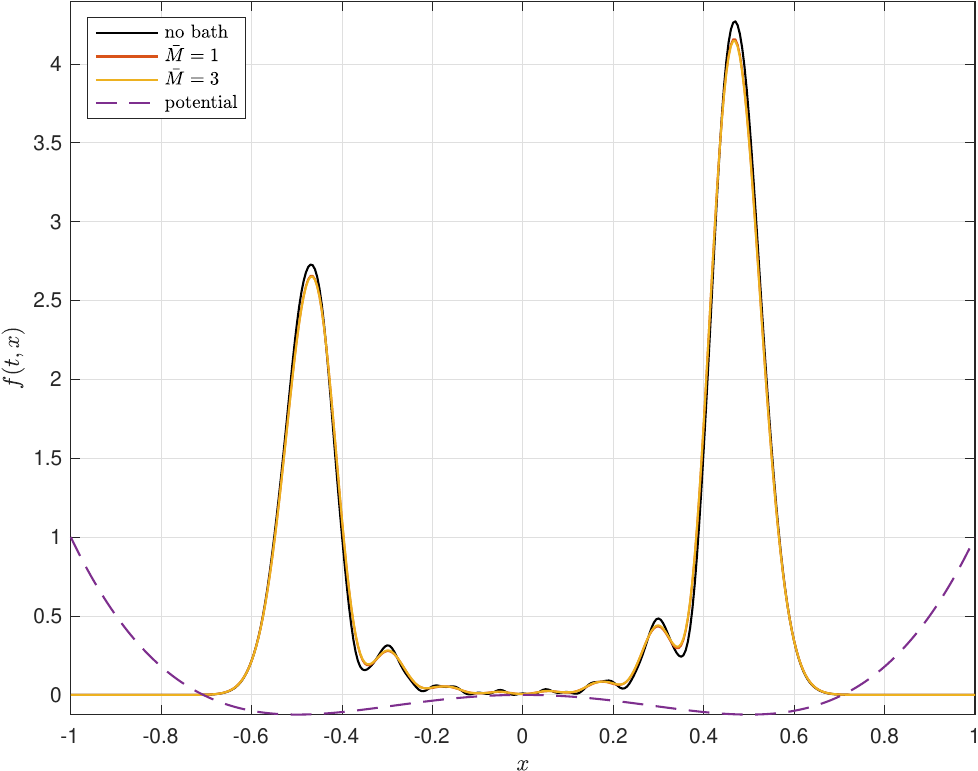}} \\
    \subfloat[$t=2$]{\includegraphics[width= 0.48\textwidth]{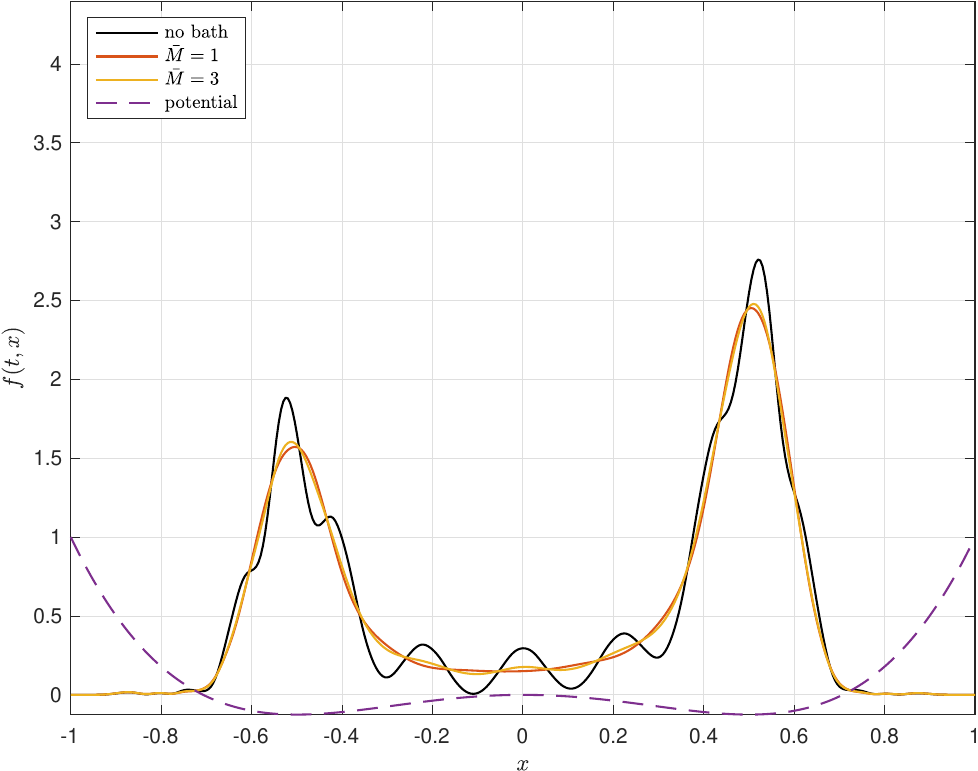}}
    \quad
    \subfloat[$t=3$]{\includegraphics[width= 0.48\textwidth]{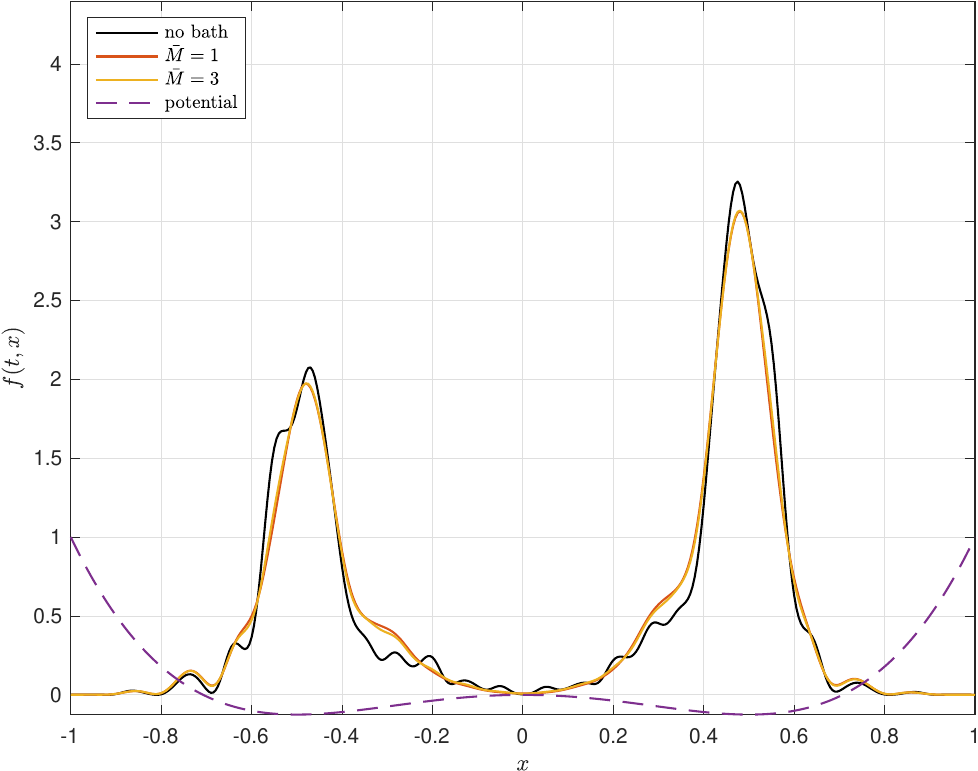}}
    \caption{Time evolution of position probability density $f(t,x)$ for different $\bar{M}$.}
    \label{fig_example_34_fix_xi_M1}
\end{figure}
In \cref{fig_example_34_fix_xi_M1}, the curves for $\bar{M}=1$ and $\bar{M}=3$ show small deviations. This result suggests that the truncation for $\bar{M}=1$ is sufficient in \cref{eq_integro_differential_equation} for $\xi \leqslant 12.8$.
Note that the computational cost for $\bar{M} = 3$ is significantly higher then $\bar{M} = 1$, since when $\bar{M} = 3$, a three-dimensional integral, instead of the one-dimensional integral in the case of $\bar{M} = 1$, needs to be computed to evaluate the right-hand side of \cref{eq_integro_differential_equation}.
To save time for our experiments, below we fix $\bar{M}=1$ and test for different coupling intensities $\xi$ no larger than $12.8$.
The results are given by \cref{fig_example_34_M1}.
\begin{figure}[t]
    \subfloat[$t=0$]{\includegraphics[width= 0.48\textwidth]{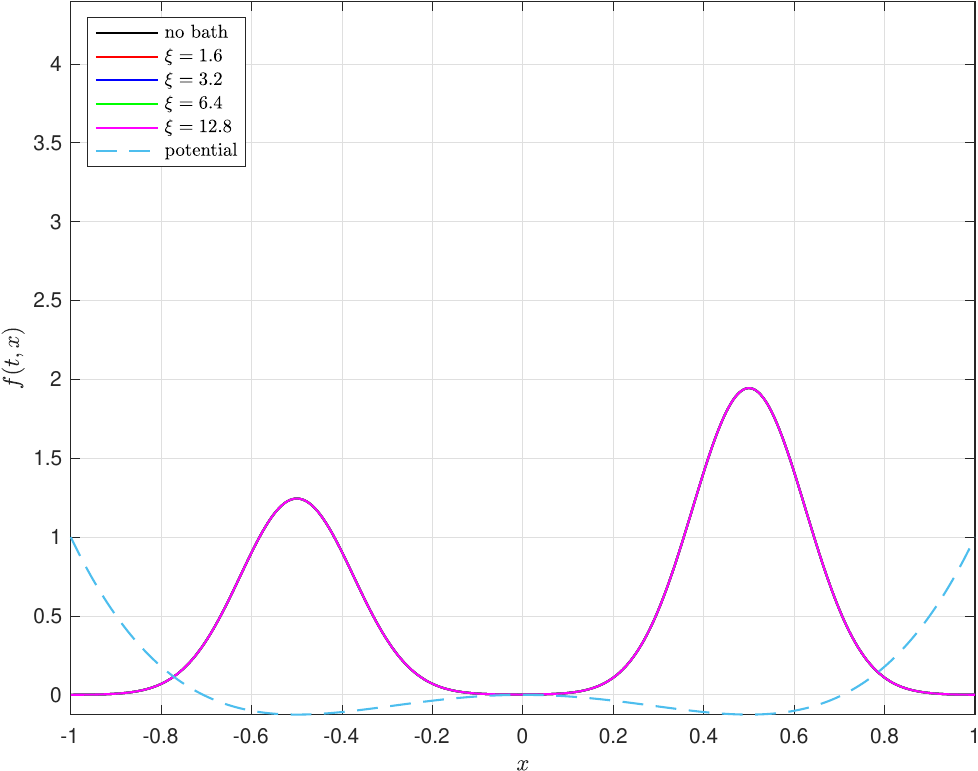}}
    \quad
    \subfloat[$t=1$]{\includegraphics[width= 0.48\textwidth]{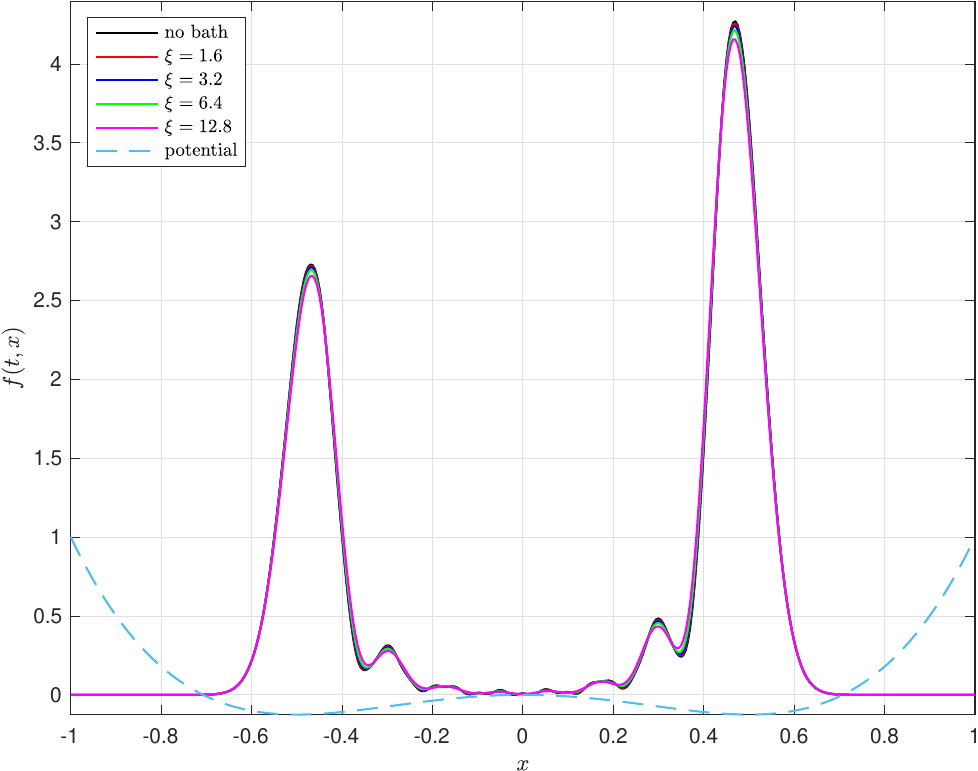}} \\
    \subfloat[$t=2$]{\includegraphics[width= 0.48\textwidth]{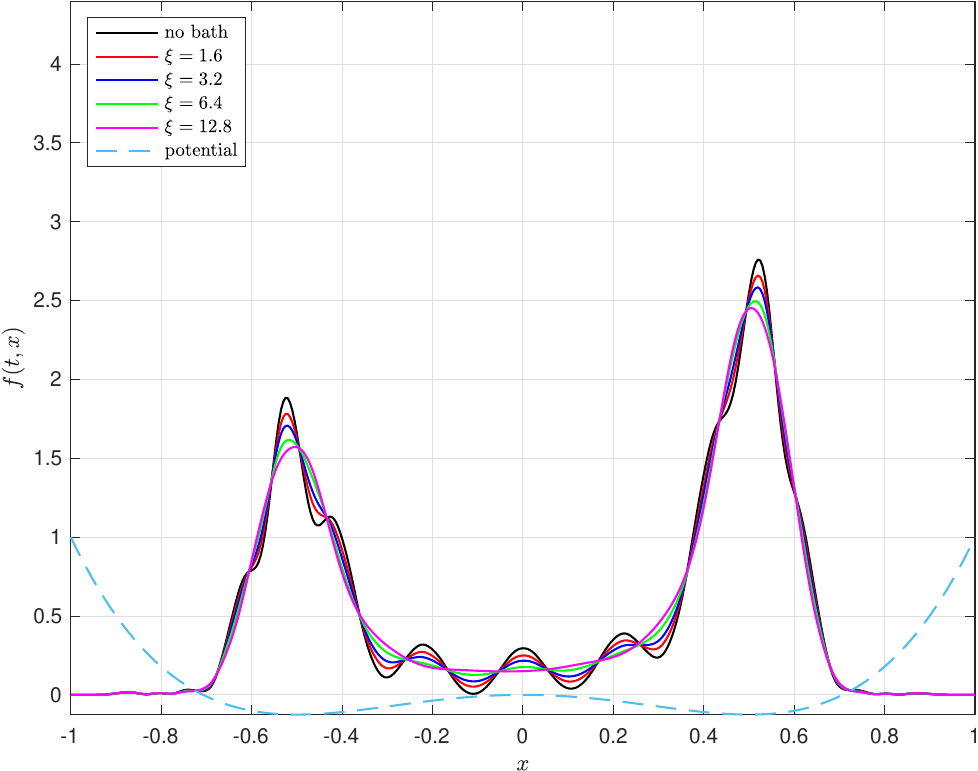}}
    \quad
    \subfloat[$t=3$]{\includegraphics[width= 0.48\textwidth]{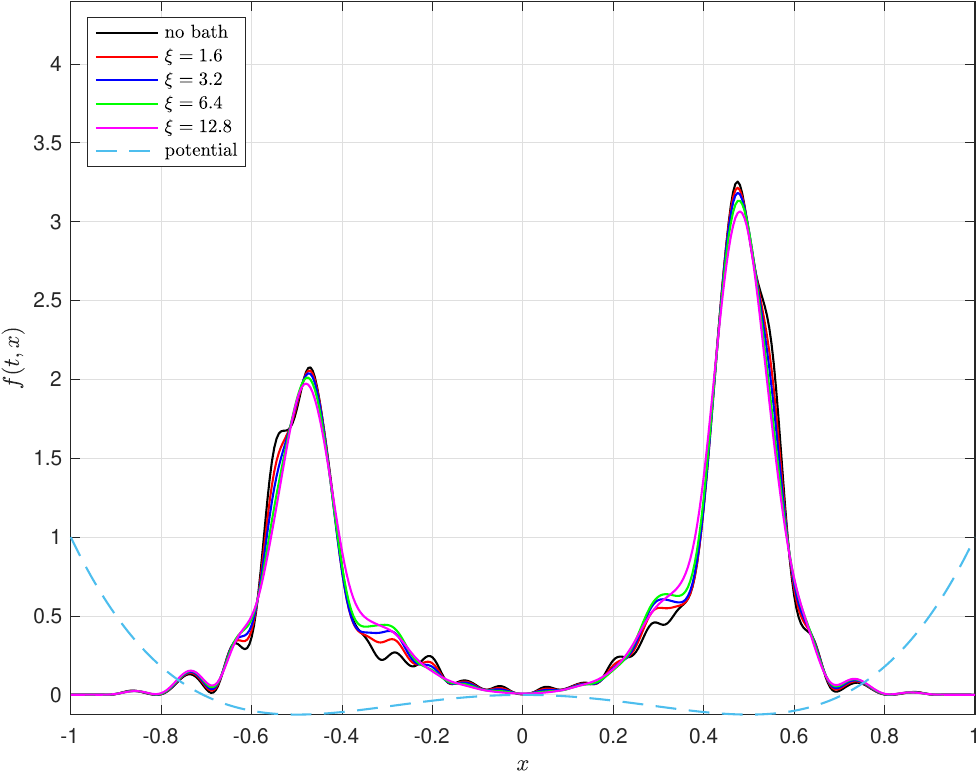}}
    \caption{Time evolution of position probability density $f(t,x)$ for different coupling intensities $\xi$ in double well potential.}
    \label{fig_example_34_M1}
\end{figure}

From this result, we again observe the effect of quantum decoherence.
An obvious difference of this example and the harmonic oscillator example is that we can observe more oscillations in the position probability density.
This result is similar to the famous double-slit experiment.
In the double-split experiment, the interference pattern, bright and dark bands, appears when electrons pass through double-slit.
In our numerical results, the position probability density also has bands structure.
The probability oscillates along the $x$ direction.
When the coupling intensity increases, the position probability density is flatter between two peaks.
When the coupling intensity increases to $\xi=12.8$, the small peaks between two main peaks almost disappear as a result of quantum decoherence.

\section{Conclusion and discussion}
\label{sec_conclusion}
We use this section to summarize the framework of our proposed method in this paper as well as remark on possible future extensions. In this paper, we propose a novel method for simulating the real-time dynamics of the reduced density operator for the Caldeira-Leggett model. Our method combines an efficient diagrammatic method for simulating open quantum systems known as the inchworm method, with the frozen Gaussian approximation technique which has shown its success in simulating closed quantum systems. 

We first formulate the reduced density operator in terms of Dyson series which allows us to express the dynamics of reduced density operator in the path integral form where each path is a combination of multiple system-associated semi-groups and position operators as perturbations. Under FGA, the wave function is approximated as a wave packet, and its dynamics under the semi-groups is described by a set of ODEs with respect to parameters of the wave packet. In addition, FGA further approximates every perturbation as a time-independent scaling factor, making the overall dynamics of reduced density operator easy to obtain. Upon FGA, we apply the inchworm method to resum the approximated path integrals to achieve faster convergence.     


Here, we would also like to discuss some details in the methods and provide some possible ideas for future work.

\subsection{FGA and the Gaussian beam method}
In this part, we would like to remark here the similarities and differences between the FGA and the Gaussian beam (GB) method \cite{leung2007eulerian,leung2009eulerian}, and also explain the reason to choose the FGA instead of the GB method in the simulation of Caldeira-Leggett model.
Both FGA and GB use an ansatz with a linear combination of beams and differential equations for the dynamics of each beam.
There are two main differences between both methods. 
The beam widths for the Gaussian beam evolves with time while the FGA fixes the beam widths with some pre chosen parameter.
Another vital difference is that the evolution of a single beam in the Gaussian beam method satisfies the Schrödinger equation.
On the contrary, The evolution of a single beam in the FGA does not solve the equation. 
However, the integral of the beams, as a whole, solves the Schrödinger equation.

In our framework, it is also possible to use GB as the ansatz, which has a much simpler form and is also easier and cheaper for implementation.
However, there are two main reasons to choose FGA instead of GB in our framework.
The FGA achieves higher accuracy than the GB.
The accuracy of GB is $\mathcal{O}(\sqrt{\epsilon})$ while FGA has accuracy $\mathcal{O}(\epsilon)$ \cite{lu2012convergence}.
The second reason to choose FGA, also the fatal flaw of GB in our framework, is that since the beam widths of GB evolve with time, approximating the position operator $\hat{x}$ by multiplying the beam centers may therefore have large error for wide beams.
Although some reinitialization techniques are introduced for GB to keep the beam width small \cite{qian2010fast}, it would be difficult to combine reinitialization of beams and inchworm algorithm in our framework.
In the FGA, however, the beam widths are fixed.
Replacing the position operators by the beam centers will therefore yield much less error as shown in \cref{prop_position_operator_beam_center}.

\subsection{Frozen Gaussian sampling}
In this paper, we only consider a particle in one dimensional space. A natural idea to extend this work is to consider open quantum particles in two or three dimensional space. In these cases, one will encounter {the} curse of dimensionality when evaluating integrals in FGA, and thus our current framework based on numerical quadrature becomes unaffordable. The frozen Gaussian sampling (FGS) \cite{huang2023efficient} proposes a Monte Carlo based technique for the sampling of initial wave function.
With the application of Monte Carlo based sampling, it is possible to reduce the exponential number of beams to polynomial or even linear scales.

\subsection{Other open quantum system models}
In this work, we consider the Caldeira-Leggett model, where the coupling between system and bath $W_s$ is linear with position operator $\hat{x}$. This linearity is vital to the validity of our proposed method since it allows us to utilize the exponential decay of the beam magnitude to replace the position operator by the beam center with small approximation error according to \cref{prop_position_operator_beam_center}. 
Our proposed framework can also be applied to other similar open quantum system models with non-linear coupling term $W_s=F(\hat{x})$ which is at most of polynomial growth with $\hat{x}$ such as the examples listed in \cite[Section 3]{weiss2012quantum}. 
In such cases, we can derive a similar approximation as \cref{prop_position_operator_beam_center} such that we can apply our current framework by replacing the interaction operator $W_s$ by $F(Q)$ with frozen Gaussian beam center $Q$.
The non-linear coupling models will be considered in our future work. { Furthermore, the Wick's theorem plays a key role in developing inchworm method, which essentially utilizes the fact that Caldeira-Leggett model has harmonic bath which is quadratic, and eventually reduces higher-order bath correlations to two-point correlation functions. For other bath (e.g, non-Gaussian bath), Wick's theorem might not hold. Nevertheless, our current framework has the potential to extend beyond the Caldeira-Leggett model. For example, if the bath influence functional can be formulated as sum of higher-order correlations (such as four-point interactions), there is also hope to develop the corresponding inchworm method to regroup the Dyson series and reduce the number of quantum diagrams. Also, as pointed out previously, this work also provides a framework to simulate the dynamics of quantum particles in open systems combining FGA and direct summation of diagrams in Dyson series without inchworm method, which is applicable to the systems where the bath influence functional does not have any structure.}

\bibliographystyle{quantum}
\bibliography{main}

\end{document}